\long\def\comment#1{}
\newfont{\bbb}{msbm10 scaled 700}
\newfont{\bb}{msbm10 scaled 1100}
\newcommand{\CC}{\mbox{\bb C}}
\newcommand{\RR}{\mbox{\bb R}}
\newcommand{\PP}{\mbox{\bb P}}
\newcommand{\EE}{\mbox{\bb E}}
\newcommand{\bv}{{\bf b}}
\newcommand{\ev}{{\bf e}}
\newcommand{\gv}{{\bf g}}
\newcommand{\rv}{{\bf r}}
\newcommand{\sv}{{\bf s}}
\newcommand{\tv}{{\bf t}}
\newcommand{\uv}{{\bf u}}
\newcommand{\vv}{{\bf v}}
\newcommand{\xv}{{\bf x}}
\newcommand{\yv}{{\bf y}}
\newcommand{\zv}{{\bf z}}
\newcommand{\zerov}{{\bf 0}}
\newcommand{\onev}{{\bf 1}}
\newcommand{\Am}{{\bf A}}
\newcommand{\Bm}{{\bf B}}
\newcommand{\Dm}{{\bf D}}
\newcommand{\Em}{{\bf E}}
\newcommand{\Fm}{{\bf F}}
\newcommand{\Gm}{{\bf G}}
\newcommand{\Hm}{{\bf H}}
\newcommand{\Id}{{\bf I}}
\newcommand{\Lm}{{\bf L}}
\newcommand{\Qm}{{\bf Q}}
\newcommand{\Rm}{{\bf R}}
\newcommand{\Sm}{{\bf S}}
\newcommand{\Tm}{{\bf T}}
\newcommand{\Um}{{\bf U}}
\newcommand{\Vm}{{\bf V}}
\newcommand{\Xm}{{\bf X}}
\newcommand{\Cc}{{\cal C}}
\newcommand{\Ec}{{\cal E}}
\newcommand{\Fc}{{\cal F}}
\newcommand{\Gc}{{\cal G}}
\newcommand{\Hc}{{\cal H}}
\newcommand{\Ic}{{\cal I}}
\newcommand{\Nc}{{\cal N}}
\newcommand{\Pc}{{\cal P}}
\newcommand{\Rc}{{\cal R}}
\newcommand{\Vc}{{\cal V}}
\newcommand{\Xc}{{\cal X}}
\newcommand{\Gammam}{\hbox{\boldmath$\Gamma$}}
\newcommand{\Sigmam}{\hbox{\boldmath$\Sigma$}}
\newcommand{\diag}{{\hbox{diag}}}
\renewcommand{\det}{{\hbox{det}}}
\newcommand{\trace}{{\hbox{tr}}}
\renewcommand{\arg}{{\hbox{arg}}}
\newcommand{\SNR}{{\sf SNR}}
\renewcommand{\Re}{{\rm Re}}
\newcommand{\eqdef}{\stackrel{\Delta}{=}}
\newcommand{\herm}{{\sf H}}
\newcommand{\transp}{{\sf T}}
\newtheorem{theorem}{Theorem}
\newtheorem{claim}{Claim}
\newtheorem{definition}{Definition}
\newcommand{\mmse}{ {\mathsf{mmse}} } 
\newcommand{\Asf}{{\sf A}}
\newcommand{\Bsf}{{\sf B}}
\def\E{{\mathbb E}}
\newcommand{\intd}{{\,\normalfont{\text d}}} 
\newcommand{\snr}{{\sf snr}}
\newcommand{\eff}{{ \mathsf{\eta} }} 
\newcommand{\be}{\begin{eqnarray}}
\newcommand{\ee}{\end{eqnarray}}
\newcommand{\bea}{\begin{eqnarray}}
\newcommand{\eea}{\end{eqnarray}}
\begin{document}

\title{Support Recovery with Sparsely Sampled\\ Free Random Matrices}

\author{Antonia Tulino,
Giuseppe Caire,
Sergio Verd\'u and Shlomo Shamai (Shitz)
\thanks{A. Tulino is with the Wireless Communication Theory Research Bell Laboratories,
 Alcatel--Lucent, Holmdel, NJ. (a.tulino@alcatel-lucent.com)}
\thanks{G. Caire is with the Department of Electrical Engineering,
University of Southern California, Los Angeles, CA. (caire@usc.edu)}
\thanks{S. Verd\'u is with the Department of Electrical Engineering, Princeton
University, Princeton NJ. (verdu@ee.princeton.edu)}
\thanks{S. Shamai (Shitz) is with the
Department of Electrical Engineering, TechnionÐ-Israel Institute of Technology, Haifa, Israel. (sshlomo@ee.technion.ac.il)}
\thanks{Part of the results in this paper were presented at the {\em 2011 Int. Symp. on Information Theory}, Saint Petersburg, Russia,
July 31 -- August 5, 2011}
}

\maketitle
\date{\today}

\thispagestyle{empty}

\begin{abstract}
Consider a Bernoulli-Gaussian complex $n$-vector whose components are
$V_i = X_i B_i$, with  $X_i \sim \Cc\Nc(0,\Pc_x)$ and binary $B_i$
mutually independent and iid across $i$.
This random $q$-sparse vector is multiplied by a square random
matrix $\Um$, and a randomly chosen subset, of average size $n p$, $p \in [0,1]$,
of the resulting vector components is then observed in additive Gaussian noise.
We extend the scope of conventional noisy compressive sampling
models where $\Um$  is typically 
a matrix with iid components,
to allow $\Um$ satisfying a certain freeness condition. 
This class of matrices encompasses Haar matrices and other unitarily invariant matrices.  
We use the replica method and the decoupling principle
of Guo and Verd\'u, as well as a number of information theoretic bounds, to study the input-output mutual
information and the support recovery error rate in the limit of $n \to \infty$.
We also extend the scope of the large deviation approach of Rangan, Fletcher and Goyal and 
characterize the performance of a class of estimators encompassing 
thresholded linear MMSE and $\ell_1$ relaxation.
\end{abstract}

\begin{IEEEkeywords}
Compressed Sensing, Random Matrices, Rate-Distortion Theory, Sparse Models, Support Recovery, Free Probability.
\end{IEEEkeywords}

\section{Introduction} \label{intro}

\subsection{Model Setup}  \label{sc:setup}

Consider the $n$-dimensional complex-valued observation model:
\begin{eqnarray} \label{model1}
\yv &=& \Am \Um \Xm \bv + \zv \\
&=& \Am \Um \vv + \zv
\end{eqnarray}
where:
\begin{itemize}
\item $\Xm = \diag(\xv)$, and $\xv$ is an iid complex Gaussian $n$-vector with components $x_i \sim \Cc\Nc(0,\Pc_x)$;
\item $\bv$ is an iid $n$-vector with components $b_i \sim $ Bernoulli-$q$, i.e., $\mathbb{P} [b_{i} = 1] = q = 1 - \mathbb{P} [b_{i} = 0]$;
\item $\vv = \Xm \bv$ is a Bernoulli-Gaussian vector, with components $v_i = x_i b_i$;
\item $\Am$ is an $n \times n$ diagonal  matrix with iid diagonal elements $[\Am]_{i,i} \sim $ Bernoulli-$p$, i.e.,
$\mathbb{P} [[\Am]_{i,i} = 1] = p = 1 - \mathbb{P} [[\Am]_{i,i} = 0]$;
\item $\Um$ is an  $n \times n$ random matrix such that\footnote{Superscript $^\dagger$ indicates Hermitian transpose.}
\begin{eqnarray} \label{RR}
\Rm = \Um^\dagger \Am^\dagger \Am \Um
\end{eqnarray}
is \textit{free} from any deterministic Hermitian matrix (see \cite{fnt} and references therein).
\item $\zv$ is an  iid complex Gaussian $n$-vector 
with components $z_i \sim \Cc\Nc(0,1)$;
\item $\Am$, $\Um$, $\Xm$, $\bv$ and $\zv$ are mutually independent.
\item The 
signal-to-noise ratio (SNR) of the observation model (\ref{model1}) is defined as 
\begin{eqnarray}
\SNR = \frac{\EE[\|\vv\|^2]}{\EE[\|\zv\|^2]} = q \Pc_x .
\end{eqnarray}
\end{itemize}
The non-zero elements of $\bv$ define the support of the Bernoulli-Gaussian vector $\vv$, whose
``sparsity'' (average fraction of non-zero elements) equal to $q$. 
The non-zero diagonal elements of $\Am$ define the components of
the product $\Um \vv$ for which a noisy 
measurement is acquired. 
In the literature,
the number of non-zero diagonal elements of $\Am$ is commonly referred to as the number
of measurements. 
The ``sampling rate''
(average fraction of observed components) of the observation model (\ref{model1}) is equal to $p$.
The sensing matrix $\Am \Um$ is known to the signal processor, 
the goal of which
is to detect the support of $\vv$, i.e., to find the position of the non-zero components of $\bv$.

In this paper we are interested in the optimal performance of the recovery of the sparse signal support.  Denoting the recovered support 
by $\widehat{\bv} = (\widehat{b}_1, \ldots, \widehat{b}_n)^\transp$, with $\widehat{b}_i \in \{0,1\}$,
the objective is to minimize  the \textit{support recovery error rate}:
\begin{eqnarray} \label{def:d}
D^{(n)} (p, q, \Pc_x) =  \frac{1}{n} \sum_{i=1}^n \PP[ b_i \neq \widehat{b}_i ],
\end{eqnarray}
where the expectation is with respect to $\Am$, $\Um$, $\Xm$, $\bv$, and $\zv$. 
In particular, this works focuses on the large 
$n$ regime 
\begin{eqnarray} \label{limod}
D (p, q, \Pc_x) = \lim_{n \rightarrow \infty} D^{(n)} (p, q, \Pc_x)
\end{eqnarray}
under the optimal
{\em Maximum A Posteriori Symbol-By-Symbol} (MAP-SBS) estimator, as well as under some popular suboptimal but practically 
implementable estimation algorithms.

\subsection{Existing results}

Recovery of the sparsity pattern with vanishing error probability is studied
in a number of recent works such as \cite{aeron, tarokh-CS, icaap, rad, wainwright-IT, wwk}.
When $k = \sum_{i=1}^n b_i$, the number of nonzero coefficients in $\vv$, is known beforehand\footnote{Note that in our model, the number of nonzero coefficients
is not known a priori but $\frac{k}{n} \to q$.} and their magnitude is bounded away from zero, 
exact support recovery requires that the number of measurements grow as $k \log n$ \cite{icaap,wwk}. 
If the support recovery error rate is allowed to be non-vanishing, fewer measurements are necessary.
Under various assumptions, \cite{aeron,tarokh-CS,reevesphd} show that a number of measurements
growing proportionally to $k \log \frac{n}{k}$ suffices. A more refined analysis is given by Reeves and Gastpar in \cite{reevesphd,reeves-ISIT2010,reeves-journal1,reeves-journal2}, assuming that the entries of the measurement matrix are iid but without requiring the 
signal vector $\xv$ to be Gaussian. They find tight bounds on the behavior 
of the 
proportionality constant 
as a function of SNR and the target support recovery error rate.
In particular, \cite{reeves-journal1} upper bounds the required difference $p-q$ when using an ML estimator of the support. The comparison given in \cite{reeves-journal1,reeves-journal2} of computationally efficient algorithms such as linear MMSE estimation and Approximate Message 
Passing (AMP) to information theoretic bounds reveals that the suboptimality of those algorithms increases with SNR.
In contrast to \eqref{def:d}, \cite{reeves-journal2} considers a distortion measure which is the maximum of the false-alarm
and missed detection probability.

The recent work \cite{montanari-dense}
gives results for iid Gaussian measurement matrices, based on the
analysis of a message passing algorithm rather than the replica
method. A full rigorization of the decoupling principle introduced in \cite{guo-verdu}  has been recently
announced in \cite{donohoplenary} for compressive sensing applications with iid measurement matrices.
Another rigorous justification of previous replica-based results is given in \cite{wuverdunoisy}
which shows that iid Gaussian sensing matrices incur no penalty on the phase transition threshold 
with respect to an optimal nonlinear encoding.

It is of considerable interest to explore the degree of improvement afforded
by dropping the assumption that the measurement matrix has iid coefficients.
Randomly sampled Discrete Fourier Transform (DFT) 
matrices (where rows/columns are deleted independently) 
e.g. \cite{TCSV}
are one example of such matrices. The model considered in Section \ref{sc:setup} allows a relevant
generalization of the iid measurement model, which is analytically tractable.

\subsection{Organization}

Section \ref{section:main} gives expressions  for the input-output mutual information rate, 
and shows how to use it in order to lower bound the support recovery error rate. 
We write the mutual information of interest as the difference of two mutual information rates. The first term is
obtained using the heuristic replica-method, previously applied in various problems involving iid
matrices, e.g. \cite{guo-verdu,Tanaka,rangan-replica,guo-baron-shamai}. The second term is given 
rigorously, using free probability and large random matrix theory. 

Upper and lower bounds on the input-output 
mutual information  corroborating 
the replica analysis
are developed in Section \ref{bounds}. We also give
a converse result that shows that \eqref{limod} is bounded away for zero if $p \leq q$. 
Numerical examples illustrate the tightness of the bounds. 

Section \ref{decoupling} extends the {\em decoupling principle} \cite{guo-verdu} to the model in (\ref{model1})
and provides the analysis of three support estimators:  optimal MAP-SBS, thresholded 
linear MMSE  and $\ell_1$ relaxation (Lasso).  

Proofs and other technical details are given in the Appendices.

\section{Mutual information rate}\label{section:main}

In this section we are concerned with the mutual information rate
\begin{eqnarray} \label{inforate}
\Ic & \eqdef & \lim_{n \to \infty} \frac{1}{n} I(\bv; \yv | \Am,\Um) = \Ic_1 - \Ic_2
\end{eqnarray}
where
\begin{eqnarray}
\Ic_1 &\eqdef& \lim_{n \to \infty} \frac{1}{n} I(\vv; \yv | \Am,\Um) \label{inforate1} \\
\Ic_2 &\eqdef& \lim_{n \to \infty} \frac{1}{n} I(\xv; \yv | \Am,\Um, \bv). \label{inforate2}
\end{eqnarray}
and the right-most equality in (\ref{inforate}) follows from
\begin{eqnarray} \label{info-diff}
I(\bv; \yv | \Am,\Um) &=& I(\xv,\bv; \yv | \Am,\Um) - I(\xv; \yv | \Am,\Um,\bv) \\
&=&  I(\Xm\bv; \yv | \Am,\Um) - I(\xv; \yv | \Am,\Um,\bv)
\end{eqnarray}

\subsection{Error rate lower bound via mutual information}\label{section:errorate}

We can bound the minimal support recovery error rate  $D(p,q,\Pc_x)$ defined in \eqref{def:d}
in terms of $\Ic$ using the following simple result.

\begin{theorem} \label{th:totallyelementary}
Given a joint distribution $P_{XY} $ on $\mathcal{X} \times \mathcal{Y}$, a reconstruction
alphabet $\widehat{\mathcal{X}}$  and a distortion measure $ \mathsf{d}\colon \mathcal{X} \times \widehat{\mathcal{X}} \mapsto [0, \infty)$, let
\begin{eqnarray}
R (d) &\eqdef& \inf_{P_{\widehat{X}|X}\colon \mathbb{E} [ \mathsf{d} (X, \widehat{X})] \leq d} I (X;\widehat{X}) \label{poju1}
\end{eqnarray}
Then
\begin{eqnarray}
R ( \inf \mathbb{E} [\mathsf{d} ( X, \widehat{X}) ]) \leq I (X;Y) \label{ysh}
\end{eqnarray}
where the infimum is over all conditional probability assignments 
$P_{\widehat{X}|Y}$ such that $P_{XY\widehat{X}} =   P_XP_{Y|X} P_{\widehat{X}|Y}$.
\end{theorem}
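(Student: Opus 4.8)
\emph{Proof sketch (plan).}
The plan is to derive \eqref{ysh} from two ingredients only: the defining variational formula \eqref{poju1} for the rate--distortion function, and the data processing inequality. The guiding observation is that any reconstructor that looks only at $Y$ is, \emph{a fortiori}, a (randomized) reconstructor for $X$, with the same expected distortion.

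First I would fix an arbitrary admissible channel $P_{\widehat X|Y}$, i.e.\ one for which $P_{XY\widehat X} = P_X P_{Y|X} P_{\widehat X|Y}$, so that $X - Y - \widehat X$ is a Markov chain. Marginalizing out $Y$ produces the channel $P_{\widehat X|X}(\cdot\mid x) = \int P_{\widehat X|Y}(\cdot\mid y)\,P_{Y|X}(dy\mid x)$, and the joint law $P_X P_{\widehat X|X}$ has exactly the same $(X,\widehat X)$-marginal as $P_{XY\widehat X}$; in particular it assigns the same value to $\mathbb E[\mathsf d(X,\widehat X)]$. Hence $P_{\widehat X|X}$ is feasible in \eqref{poju1} at distortion level $d=\mathbb E[\mathsf d(X,\widehat X)]$, which gives $R\big(\mathbb E[\mathsf d(X,\widehat X)]\big) \le I(X;\widehat X)$, where the mutual information may be evaluated under $P_X P_{\widehat X|X}$ or equivalently under $P_{XY\widehat X}$. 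Next I would apply the data processing inequality along $X - Y - \widehat X$ to get $I(X;\widehat X) \le I(X;Y)$, and concatenate: $R\big(\mathbb E[\mathsf d(X,\widehat X)]\big) \le I(X;Y)$ for every admissible $P_{\widehat X|Y}$.

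It then remains to take the infimum over $P_{\widehat X|Y}$. Since $R$ is non-increasing in $d$ (and in fact convex, being the infimum of the convex functional $P_{\widehat X|X}\mapsto I(X;\widehat X)$ over a feasible set that enlarges with $d$), the value $R(d^\star)$ at the smallest attainable distortion $d^\star \eqdef \inf \mathbb E[\mathsf d(X,\widehat X)]$ dominates all the numbers $R(\mathbb E[\mathsf d(X,\widehat X)])$ produced above, so the bound $R(d^\star)\le I(X;Y)$ is what survives the passage to the limit. The one step requiring genuine care --- and the main (minor) obstacle --- is exactly this exchange when $d^\star$ is not attained: there one invokes right-continuity of the convex non-increasing map $R$ at $d^\star$, or equivalently picks a sequence of admissible $P_{\widehat X|Y}$ with $\mathbb E[\mathsf d]\downarrow d^\star$ and uses $R(d^\star)=\lim_{\delta\downarrow 0}R(d^\star+\delta)$. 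In the setting where this lemma is applied the infimum $d^\star$ is attained, by the MAP symbol-by-symbol reconstructor, so the subtlety does not even arise.
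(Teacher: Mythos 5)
Your proof follows the same route as the paper's: feasibility of the marginalized channel $P_{\widehat{X}|X}$ in the variational definition of $R$, the data processing inequality along $X - Y - \widehat{X}$, and monotonicity of $R$ to pass from an arbitrary admissible $\widehat{X}$ to the infimum. The only difference is that you explicitly flag the right-continuity/attainment subtlety in the final step, which the paper compresses into ``supremizing over $\widehat{X}$ and in view of the fact that $R(\cdot)$ is monotonically non-increasing''; your remark that the infimum is attained in the intended application (Bernoulli--Hamming, MAP-SBS reconstructor) disposes of it.
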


\begin{IEEEproof} See Appendix \ref{proof:totallyelementary} \end{IEEEproof}

Since $R(d)$ is a monotonically decreasing function, (\ref{ysh})
gives an information theoretic lower bound on the non-information-theoretic quantity
$\inf \mathbb{E}[\mathsf{d} ( X, \widehat{X} )]$. 
In our case, using the rate-distortion function of a
Bernoulli-$q$ source with Hamming distortion, given by
$R (d) = \max \{ h(q) - h(d) , 0 \}$,  Theorem \ref{th:totallyelementary} results in
\begin{eqnarray}
D ( p, q, \Pc_x) \geq h^{-1} ( h (q) - \mathcal{I} ) \label{love}
\end{eqnarray}
where $h(x) = x \log \frac{1}{x} + (1 - x) \log \frac{1}{1-x}, \; x \in [0,1]$ denotes the binary entropy function, and where
we assume $q \leq \frac12$ (notice that $\mathcal{I} \leq h(q)$ by definition (\ref{inforate})).

\subsection{Mutual information rate ${\Ic}_1$ via replica method}\label{replica}

For any $(X,Y) \sim P_{XY}$, we denote the minimum mean-square error for estimating $X$ from $Y$ as
\begin{eqnarray}
{\sf mmse}(X|Y) \eqdef \EE[|X - \EE[X|Y]|^2].
\end{eqnarray}
With this definition, we have the following claim dependent on the validity of the replica method:

\begin{claim} \label{th:x}
Let $B_0, X_0, Z$ be independent random variables, with $B_0 \sim$ Bernoulli-$q$,
$X_0 \sim \Cc\Nc(0,\Pc_x)$, and $Z \sim \Cc\Nc(0,1)$, and define  $V_0 = X_0 B_0$.
Let $\mathcal{R}_{\Rm}(\cdot)$ denote the R-transform \cite{fnt} of the random matrix
$\Rm$ defined in (\ref{RR}). Then,
\begin{eqnarray}
\mathcal{I}_1 =  I\left ( V_0 ;  V_0  + \eta^{-\frac12} Z \right )
 +   \int_0^{\chi} \!  \! \left ( \Rc_{\Rm} (-w)  - \eta  \right) dw \,  \log e  \label{1l}, 
\end{eqnarray}
where $\eta$ and $\chi$ are the non-negative solutions of the system of equations:
\begin{subequations}
\begin{eqnarray}
\eta &=&\mathcal{R}_{\Rm}(-\chi) \label{e:fix-pointeq1}  \\
\chi &=& {\sf mmse} \left ( V_0  | V_0 + \eta^{-\frac12}  Z \right ). \label{e:fix-pointeq}
\end{eqnarray}
\end{subequations}
If the solution of (\ref{e:fix-pointeq1}) -- (\ref{e:fix-pointeq}) is not unique, then we select the solution
that minimizes $\Ic_1$ given in (\ref{1l}), which corresponds to the ``free energy'' (up to an irrelevant additive constant) of 
a physical system with ``quenched disorder parameters'' $\yv, \Am, \Um$, ``state'' $\vv \sim p_{\vv}(\vv)$ 
and  unnormalized Boltzman distribution $p_{\yv|\vv, \Am, \Um}(\yv|\vv, \Am, \Um) p_{\vv}(\vv)$, where
\begin{eqnarray} \label{transition-model1} 
p_{\yv|\vv, \Am, \Um}(\yv|\vv, \Am, \Um) = \frac{1}{\pi^n} \exp \left ( - \| \yv - \Am \Um \vv \|^2 \right ) 
\end{eqnarray}
is the conditional transition probability density of the observation model (\ref{model1}), given $\Am, \Um$. 
\end{claim}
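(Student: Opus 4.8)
The plan is to obtain $\mathcal{I}_1$ as the $m\to 0$ limit of a replicated free energy, carrying out the average over the sensing matrix $\Am\Um$ through the freeness hypothesis on $\Rm$. First, since the Gaussian channel (\ref{transition-model1}) has $h(\yv\,|\,\vv,\Am,\Um)=n\log(\pi e)$, we may write $\mathcal{I}_1 = -\log(\pi e) - \lim_{n\to\infty}\frac{1}{n}\,\mathbb{E}\big[\log Z(\yv,\Am,\Um)\big]$, where $Z(\yv,\Am,\Um)=\int p_{\yv|\vv,\Am,\Um}(\yv|\vv,\Am,\Um)\,p_{\vv}(\vv)\,d\vv$ is exactly the partition function named in the claim. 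We evaluate $\mathbb{E}[\log Z]$ via the replica identity $\mathbb{E}[\log Z]=\lim_{m\to 0^+}\frac{1}{m}\log\mathbb{E}[Z^m]$, computing $\mathbb{E}[Z^m]$ for positive integers $m$ and continuing analytically. For integer $m$, expanding $Z^m$ brings in $m$ replica vectors $\vv^{(1)},\dots,\vv^{(m)}$ i.i.d.\ $\sim p_{\vv}$; together with the data-generating (planted) vector $\vv^{(0)}$, the Gaussian average over $\yv$ given $\vv^{(0)}$ yields a quadratic form in the vectors $\Am\Um\vv^{(a)}$, $a=0,\dots,m$, depending on them only through the $(m{+}1)\times(m{+}1)$ matrix with entries $\frac{1}{n}(\Am\Um\vv^{(a)})^\dagger(\Am\Um\vv^{(b)})=\frac{1}{n}\vv^{(a)\dagger}\Rm\vv^{(b)}$.

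The residual average is then of the form $\mathbb{E}_{\Am,\Um}\exp\!\big(n\,\tr(\Thetam\,\Gm_{\Rm})\big)$, with $\Gm_{\Rm}=\big[\frac{1}{n}\vv^{(a)\dagger}\Rm\vv^{(b)}\big]_{a,b}$ and $\Thetam$ a fixed Hermitian form produced by the Gaussian integration over $\yv$. Conditioning on the overlap matrix $\Qm=\big[\frac{1}{n}\vv^{(a)\dagger}\vv^{(b)}\big]_{a,b}$, we use that $\Rm$ is free from the deterministic matrices built out of the replica configurations: the normalized logarithm of this expectation concentrates, and its $n\to\infty$ limit is a spherical-integral functional of $(\Qm,\Thetam)$ determined \emph{only by the spectrum of $\Rm$}, hence expressible through $\mathcal{R}_{\Rm}$. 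This is the ingredient that, for i.i.d.\ sensing matrices, degenerates to the elementary Gaussian/Marchenko--Pastur computation used in \cite{guo-verdu,Tanaka}, and replacing it by a genuine free-probability statement is the essential content of the model. I expect this to be the main obstacle: one must argue that only the dependence through $\mathcal{R}_{\Rm}$ survives, and then extract precisely the antiderivative of the R-transform appearing in (\ref{1l}) rather than some other primitive.

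Introducing conjugate variables to enforce the definition of $\Qm$ decouples the $n$ coordinates, so that $\mathbb{E}[Z^m]=e^{n\Phi_m(\Qm,\widehat{\Qm})}$ up to subexponential factors, where $\Phi_m$ combines the free-probability functional above, the coupling term $\tr(\Qm\widehat{\Qm})$, and a single-letter partition function for one coordinate of $\vv$. Evaluating by Laplace's method and imposing the replica-symmetric ansatz collapses $(\Qm,\widehat{\Qm})$ to two scalar order parameters, which in the $m\to 0$ limit become $\chi$ (the residual variance, an overlap) and $\eta$ (its conjugate, the effective inverse noise level of a scalar channel). Stationarity in $\chi$ differentiates the free-probability functional and gives $\eta=\mathcal{R}_{\Rm}(-\chi)$, i.e.\ (\ref{e:fix-pointeq1}); stationarity in $\eta$ gives $\chi={\sf mmse}(V_0\,|\,V_0+\eta^{-\frac12}Z)$ through the I-MMSE relation for the channel $V_0\mapsto V_0+\eta^{-\frac12}Z$, i.e.\ (\ref{e:fix-pointeq}). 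Substituting the saddle back, the single-letter part contributes $I(V_0;V_0+\eta^{-\frac12}Z)$, while the coupling term $-\eta\chi=-\int_0^{\chi}\eta\,dw$ combines with the R-transform integral into $\int_0^{\chi}\!\big(\mathcal{R}_{\Rm}(-w)-\eta\big)\,dw\,\log e$, reproducing (\ref{1l}); note the integrand vanishes at $w=\chi$ precisely by (\ref{e:fix-pointeq1}). Finally, since $\frac{1}{n}\mathbb{E}[\log Z]$ is a genuine free energy, when (\ref{e:fix-pointeq1})--(\ref{e:fix-pointeq}) has several solutions the dominant one is that of least free energy — equivalently, least $\mathcal{I}_1$ — which is the selection rule stated in the claim.
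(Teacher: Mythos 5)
Your plan follows the same replica route as the paper's Appendix~B, and most of the scaffolding is correct in outline. But the step you flag as ``the main obstacle'' is a genuine gap, and it is the one place where the hypothesis on $\Rm$ actually does any work. After the Gaussian integration over $\zv$ you are left, conditionally on the replicas, with $\EE_{\Rm}\!\left[\exp\!\left(-n\,\trace(\Rm\Lm)\right)\right]$, where $\Lm$ is a Hermitian $n\times n$ matrix of fixed rank $u$ built from the differences $\vv_a-\vv_0$ (this is the $n\times n$ rewrite of your Gram-matrix form, via $\trace(\Thetam\,\Gm_{\Rm}) = \tfrac{1}{n}\trace\!\left(\Rm\,\Vm\Thetam\Vm^\dagger\right)$). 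The paper evaluates this expectation using the generalized Harish--Chandra--Itzykson--Zuber spherical integral, whose large-$n$ asymptotics state that, for $\Rm$ free from deterministic Hermitian matrices and $\Lm$ of bounded rank,
\[
\lim_{n\to\infty}\frac{1}{n}\log\EE_{\Rm}\!\left[\exp\!\left(-n\,\trace(\Rm\Lm)\right)\right] \;=\; -\sum_{i=1}^u\int_0^{\lambda_i(\Lm)}\Rc_{\Rm}(-w)\,dw .
\]
This lemma is the \emph{only} source of the antiderivative of the R-transform in the theorem: the general observation that ``only the spectrum of $\Rm$ survives'' does not by itself single out $\Rc_{\Rm}$ over, say, a Shannon- or $\eta$-transform functional, and without the HCIZ asymptotics one is conjecturing the correct primitive rather than deriving it. This is precisely the ingredient that replaces the elementary Marchenko--Pastur computation of the iid case, and your proof does not supply it.

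Two smaller omissions. You obtain (\ref{e:fix-pointeq}) from stationarity in $\eta$ via the I-MMSE relation; this shortcut is valid for the matched case of the Claim, but the paper instead differentiates the single-letter partition function directly through a Hubbard--Stratonovich transform, because the identical machinery is then reused for the mismatched estimators of Section~\ref{decoupling}, where I-MMSE does not apply. And you silently interchange $\lim_{u\to0}\partial_u$ with the saddle-point evaluation at $u$-dependent $(\Qm^\star(u),\tilde{\Qm}^\star(u))$; the paper justifies this with the envelope-theorem argument of Appendix~\ref{antonia-general-proof}, which shows the derivative may be taken with the saddle point frozen at its $u=0$ value.
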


\begin{IEEEproof} See Appendix \ref{proof:x}. \end{IEEEproof}

The efficient calculation of $I\left ( V_0 ;  V_0  + \eta^{-\frac12} Z \right )$ and of
${\sf mmse} \left ( V_0  | V_0 + \eta^{-\frac12}  Z \right )$ is addressed in Appendix \ref{app:formulas}.

\subsection{Mutual information rate $\Ic_2$ via freeness}\label{section:TAT2}

\begin{theorem}  \label{th:y}
Let $\Vc_{\Rm}(\cdot)$ and $\eta_{\Rm}(\cdot)$ denote the Shannon transform and $\eta$-transform 
(see \cite{fnt} and definitions in Appendix \ref{proof:y}) of $\Rm$ defined in (\ref{RR}).
Then,
\begin{eqnarray} \label{2I}
\mathcal{I}_2 = \Vc_{\Rm}(\alpha\,   \Pc_x) + q \log \left( 1 + \nu \Pc_x \right)  - \log ( 1 + \alpha\,  \nu \Pc_x)
\end{eqnarray}
where  $\alpha$ and $\nu$ are the unique non-negative solutions of the system of equations
\begin{eqnarray}
\eta_{\Rm}(\alpha\,  \Pc_x)  = \frac{1}{1
+ \alpha\,  \nu \Pc_x} =  \frac{q}{ 1 + \nu \Pc_x} + 1-q \label{e:fix-pointeq-alpha-nu}
\end{eqnarray}
\end{theorem}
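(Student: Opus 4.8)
The plan is to compute $\Ic_2$ exactly via free probability. First I would condition on $(\Am,\Um,\bv)$: the map $\xv\mapsto\yv=\Am\Um\diag(\bv)\xv+\zv$ is then a linear Gaussian channel with Gaussian input $\xv\sim\Cc\Nc(\zerov,\Pc_x\Id)$, so, writing $\Dm=\diag(\bv)$ and using $\Dm\Dm^\dagger=\Dm$,
\[
\frac1n\, I(\xv;\yv\mid\Am,\Um,\bv)=\frac1n\,\EE\!\left[\log\det\!\left(\Id+\Pc_x\,\Am\Um\Dm\Um^\dagger\Am^\dagger\right)\right].
\]
Applying the determinant identity $\det(\Id+\Mm\Nm)=\det(\Id+\Nm\Mm)$ with $\Mm=\Am\Um\Dm$ and $\Nm=\Dm\Um^\dagger\Am^\dagger$ rewrites the argument as $\det(\Id+\Pc_x\,\Dm\Rm\Dm)$ with $\Rm$ as in (\ref{RR}), hence
\[
\Ic_2=\lim_{n\to\infty}\frac1n\,\EE\!\left[\log\det\!\left(\Id+\Pc_x\,\Dm\Rm\Dm\right)\right].
\]

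Next I would pass to the free-probability limit. The matrices $\Dm\Rm\Dm$ and $\Rm^{1/2}\Dm\Rm^{1/2}$ have the same nonzero eigenvalues and the same size, hence the same empirical spectral distribution. Conditioning on $\bv$, the empirical spectral distribution of $\Dm$ converges almost surely to $\mu_{\Dm}=(1-q)\delta_0+q\,\delta_1$ by the law of large numbers; and since by hypothesis $\Rm$ is free from every deterministic Hermitian matrix, it is asymptotically free from the (conditionally deterministic) sequence $\Dm$. Therefore the empirical spectral distribution of $\Rm^{1/2}\Dm\Rm^{1/2}$ converges to the free multiplicative convolution $\mu_{\Rm}\boxtimes\mu_{\Dm}$. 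Combined with a truncation/uniform-integrability argument that moves the logarithm inside the limit (needed because $\Rm$ may be unbounded, e.g.\ when $\Um$ has i.i.d.\ entries), and after averaging back over $\bv$, this gives $\Ic_2=\Vc_{\mu_{\Rm}\boxtimes\mu_{\Dm}}(\Pc_x)$.

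Finally I would evaluate this Shannon transform with the product formula for free multiplicative convolution --- equivalently, multiplicativity of the $S$-transform --- as in \cite{fnt}: for free $\mu_1,\mu_2$ there exist $\gamma_1,\gamma_2\ge0$ with $\eta_{\mu_1}(\gamma_1)=\eta_{\mu_2}(\gamma_2)=(1+\gamma_1\gamma_2/\Pc_x)^{-1}$, and then $\Vc_{\mu_1\boxtimes\mu_2}(\Pc_x)=\Vc_{\mu_1}(\gamma_1)+\Vc_{\mu_2}(\gamma_2)-\log(1+\gamma_1\gamma_2/\Pc_x)$. Specializing $\mu_1=\mu_{\Rm}$, $\mu_2=\mu_{\Dm}$ --- for which $\eta_{\Dm}(\gamma)=1-q+q/(1+\gamma)$ and $\Vc_{\Dm}(\gamma)=q\log(1+\gamma)$ --- and writing $\gamma_1=\alpha\,\Pc_x$, $\gamma_2=\nu\,\Pc_x$ reproduces exactly (\ref{2I}) together with the coupled equations (\ref{e:fix-pointeq-alpha-nu}). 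Uniqueness of the pair $(\alpha,\nu)$ follows from monotonicity: $\eta_{\Rm}$ and $\eta_{\Dm}$ are continuous and strictly decreasing, so along the locus $\eta_{\Rm}(\gamma_1)=\eta_{\Dm}(\gamma_2)=(1+t)^{-1}$ both $\gamma_1$ and $\gamma_2$ are strictly increasing in $t$, so that $\gamma_1\gamma_2$ is strictly increasing in $t$ while the constraint reads $\gamma_1\gamma_2=\Pc_x t$, leaving a single solution.

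The reduction in the first step and the algebraic specialization in the last step are routine. The hard part will be the rigorous underpinning of the passage to the free-probability limit: upgrading the stated hypothesis (freeness of $\Rm$ from \emph{deterministic} Hermitian matrices) to asymptotic freeness from the \emph{random} mask $\Dm$, and supplying the uniform-integrability/truncation estimates so that convergence of the empirical spectral distribution genuinely implies convergence of the Shannon transform. A lesser point, should one prefer a self-contained treatment, is re-deriving the $\boxtimes$ Shannon-transform identity used above, which can be obtained from the $S$-transform or via an incremental-channel/MMSE computation on the cascade of the mask $\Dm$ followed by $\Rm^{1/2}$.
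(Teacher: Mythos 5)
Your proposal is correct and follows essentially the same route as the paper: reduce $\Ic_2$ to a $\log\det$ for the Gaussian linear channel, pass to the limiting spectral distribution of $\Rm\Bm\Bm^\dagger$ (equivalently $\Dm\Rm\Dm$, as you note), and invoke the product formula for the Shannon transform of two free nonnegative matrices. The only material difference is one of packaging: you quote the S-transform/$\boxtimes$ Shannon-transform identity as known, whereas the paper re-derives it from scratch in its Appendix C (Theorems \ref{thm:finalapp} and \ref{thm:finalapp1}, using the implicit relation \eqref{newark} for the $\eta$-transform of a product and the differentiation trick from \cite{verdu-shamai-random-cdma-fading}). Your framing is more compact; the paper's version is self-contained and also serves Section \ref{section:specialcases}, where the same fixed-point system is specialized to the iid and Haar cases. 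Two remarks on the side issues you flag. First, you correctly identify that asymptotic freeness of $\Rm$ from the random mask $\Dm$ and the uniform-integrability step for $\log\det$ are the genuinely technical points --- the paper glosses over both, relying on the blanket freeness hypothesis and on the assumed existence of the limits; you are right that a careful treatment would condition on $\bv$ and use concentration of the empirical distribution of $\Dm$. Second, your uniqueness argument is a welcome addition (the paper asserts uniqueness without proof), but as written it is not quite complete: both $\gamma_1\gamma_2$ and $\Pc_x t$ vanish at $t=0$ and are strictly increasing, so you still need a derivative or convexity comparison (e.g.\ that $\gamma_1(t)\gamma_2(t)$ is convex or has derivative $<\Pc_x$ at $t=0^+$ and $>\Pc_x$ eventually) to conclude a single nontrivial crossing.
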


\begin{IEEEproof} See Appendix \ref{proof:y}. \end{IEEEproof}

\subsection{Special Cases} \label{section:specialcases}

\subsubsection{$\Um$ is an iid random matrix}
Assuming $\Um$ has iid entries with mean zero and variance
$\frac{1}{n}$, according to \cite[Theorem 2.39]{fnt} 
the $\eta$-transform of $\Rm$ satisfies the relation
\begin{eqnarray} \label{etaR}
1 = \frac{1 - \eta_{\Rm}(x)}{1 - \eta_{\Tm}(x \eta_{\Rm}(x))}
 \end{eqnarray}
with  $\Tm = \Am^\dagger \Am$. Using the fact that $\Am$ is diagonal with Bernoulli-$p$ iid diagonal elements,
\begin{eqnarray} \label{etaA}
\eta_{\Tm}(x) = \eta_{\Am}(x)  = 1 - p + \frac{p}{1+x}
\end{eqnarray}
Using this in (\ref{etaR}), we have that $\eta_{\Rm}(x)$ is the positive solution of the quadratic equation
\begin{eqnarray} \label{etaR1}
x\eta^2 - ((1-p)x - 1) \eta - 1 = 0,
\end{eqnarray}
which corresponds to the $ \eta$-transform of a random matrix of the form
$\Hm \Hm^\dagger$, with $\Hm$ of dimension $n \times pn$ and iid elements with zero mean and variance $1/n$.
The R-transform of such matrix is well-known (see \cite[Example 2.27]{fnt}) and takes on the form
\begin{eqnarray} \label{R-transfromGV}
\mathcal{R}_{\Rm} ( z ) = \frac{p}{1 - z}.
\end{eqnarray}
Hence, the fixed point equations (\ref{e:fix-pointeq1}) -- (\ref{e:fix-pointeq}) reduce to
\begin{eqnarray} \label{matched-mmse-fixed-point}
\frac{1}{\eta} =  \frac{1}{p} \left ( 1 +  \displaystyle{{\sf mmse} \left ( V_0  | V_0 + \eta^{-\frac12} Z \right ) } \right ),
\end{eqnarray}
and  \eqref{1l} takes on the form
\begin{eqnarray} \label{th:dpiid}
\mathcal{I}_1 = I \left( V_0;  V_0 + \eta^{-\frac12}  Z  \right) + p \left ( \log \left (\frac{p}{\eta} \right ) + \left( \frac{\eta}{p} - 1\right) \log e \right).
\end{eqnarray}
This is obtained from (\ref{1l}) using (\ref{R-transfromGV})  for the R-transform and the identity
$\frac{1}{\eta} = \frac{1}{p} \left ( 1 + \chi  \right )$, from  (\ref{e:fix-pointeq1}).
We notice that  when $p=1$ (\ref{th:dpiid}) coincides with the result in \cite{guo-verdu}.
The formula provided by Claim \ref{th:x} does not coincide with  the result in \cite{guo-verdu,rangan-replica} 
for general $p$ since in the model considered by \cite{guo-verdu,rangan-replica}  the ``channel matrix''  $\Am \Um$ is normalized
such that  the columns (and not the non-zero rows, as in our setting)  have unit average squared norm conditioned on $\Am$.
Instead, our formulas are consistent with those in \cite{reeves-journal1}, which uses the same row-energy normalization as in this paper.

In order to calculate $\mathcal{I}_2$, we use  (\ref{e:fix-pointeq-alpha-nu}) and obtain
\begin{eqnarray} \label{sosad2}
\alpha \Pc_x  =  \frac{1}{\nu} \left (\frac{1}{\eta_{\Rm}(\alpha \Pc_x)}-1 \right).
\end{eqnarray}
Using the definition of S-transform (see Definition \ref{def:S} In Appendix \ref{proof:y}), 
we have that 
\begin{eqnarray} \label{funny-beppe}
\alpha \Pc_x  =  \Sigma_{\Rm}(\eta_{\Rm}(\alpha \Pc_x) - 1) \left (\frac{1}{\eta_{\Rm}(\alpha \Pc_x)}-1 \right), 
\end{eqnarray}
from which, identifying terms, we obtain
\begin{eqnarray}
\label{sosad3}
\nu  =  \frac{1}{\Sigma_{\Rm}\left (\eta -1 \right)} = \eta -1+p,
\end{eqnarray}
where for simplicity we let $\eta = \eta_{\Rm}(\alpha \Pc_x)$ and
where the rightmost equality follows from the well-known explicit expression $\Sigmam_{\Rm}(z) = \frac{1}{z + p}$, valid when 
$\Um$ is an iid matrix.   Replacing  (\ref{sosad3}) in the equality $\eta = \frac{q}{ 1 + \nu \Pc_x} + 1-q$ in (\ref{e:fix-pointeq-alpha-nu}), we obtain
\begin{eqnarray}
\label{sosad4}
\eta = \frac{q}{1+(\eta-1+p)\Pc_x} + 1-q.
\end{eqnarray}
Defining $\Gc = \nu /p$ we can rewrite (\ref{sosad4}) as
\begin{eqnarray}
\label{sosad5}
\mathcal{G} = 1-  \frac{q}{p} + \frac{\frac{q}{p}}{1+p  \mathcal{G} \Pc_x }.
\end{eqnarray}
Hence, $\mathcal{G}$ is seen to satisfy a well-known fixed-point equation yielding $\Gc = \eta_{\widetilde{\Hm} \widetilde{\Hm}^\dagger}(p \,\Pc_x)$, 
where $\widetilde{\Hm}$ is a $pn \times qn$ matrix with  iid with variance $1/(pn)$  (see \cite[Eq. (2.120)]{fnt}). 
Using \cite[Eq. (2.121)]{fnt}, $\mathcal{G}$ can be obtained in closed form as
\begin{eqnarray} 
\mathcal{G} =1 - \frac{\Fc \left (p \, \Pc_x, \frac{q}{p} \right)}{4 \Pc_x}, 
\end{eqnarray}
where
\begin{eqnarray} 
\Fc(x,y) = \left ( \sqrt{x(1 + \sqrt{y})^2 + 1} - \sqrt{x(1 - \sqrt{y})^2 + 1} \right )^2,  \end{eqnarray}
and the corresponding Shannon transform yields the desired $\Ic_2$, in the form
\begin{eqnarray}  \label{I2iid}
\mathcal{I}_2 & = & 
q\log\left(1+ p \, \Pc_x - \frac{1}{4} \Fc\left (p\, \Pc_x,\frac{q}{p} \right ) \right)  +
p \log \left(1+ q \, \Pc_x  - \frac{1}{4}  \Fc\left (p\, \Pc_x,\frac{q}{p}\right) \right) \nonumber \\
& & - \frac{1}{4\Pc_x} \Fc\left (p\, \Pc_x,\frac{q}{p}\right ) \log e .
\end{eqnarray}
In passing, we remark that the  ``large SNR'' (i.e., large $\Pc_x$) behavior 
of (\ref{I2iid}) is 
\begin{eqnarray}
\Ic_2 = \min\{p,q\} \log(1 + |p - q| \Pc_x) + O(1)
\end{eqnarray}
showing that the pre-log of $\Ic_2$ is the asymptotic almost sure normalized 
rank of the matrix $\Am \Um \diag(\bv)$, as expected.

\subsubsection{$\Um$ is Haar-distributed}\label{subs:haar}

If $\Um$ is Haar-distributed, i.e., uniformly distributed on the manifold of $n \times n$ unitary matrices,
the eigenvalue distribution of $\Rm$ coincides with that of $\Am \Am^\dagger = \Am$,
i.e., with the Bernoulli-$p$ distribution. Using (\ref{etaA}) and the relation between the $\eta$-transform and the R-transform
in \cite[Eq. 2.74]{fnt}, we obtain
\begin{eqnarray} \label{cool2}
\Rc_{\Rm} ( z ) = \Rc_{\Am} ( z ) =  \frac{z -1 + \sqrt{(z - 1)^2 + 4zp}}{2 z}.
\end{eqnarray}
This allows for the calculation of \eqref{1l} with the corresponding fixed point equations
(\ref{e:fix-pointeq1}) and (\ref{e:fix-pointeq}).

As far as $\Ic_2$ is concerned, we use
\[ \eta_{\Rm}(\alpha\Pc_x) = \eta_{\Am}(\alpha\Pc_x) = \frac{p}{1 + \alpha \Pc_x} + 1 - p \]
in (\ref{e:fix-pointeq-alpha-nu}) and solve for $\alpha$ using the first equality, obtaining
\begin{eqnarray} \label{alpha-ziofa} 
\alpha = \frac{p-\nu}{\nu \Pc_x (1 - p)}. 
\end{eqnarray}
Replacing in the second equality in (\ref{e:fix-pointeq-alpha-nu}), we obtain explicitly $\nu$ as
\begin{eqnarray} \label{fixpointsolve1-ziofa}
\nu = \frac{\Pc_x (p - q) - 1 + \sqrt{(\Pc_x (p - q) - 1)^2 + 4 \, p\,  \Pc_x(1 - q)}}{2\Pc_x(1 - q)}.
\end{eqnarray}
It can be checked that $0 < \nu \leq p$ for any $\Pc_x > 0$ and $p,q$ in $[0,1]$. 
Using (\ref{alpha-ziofa}) and (\ref{fixpointsolve1-ziofa}) (\ref{2I}), we obtain
\begin{eqnarray} \label{tvcs-bernoulli-bernoulli}
\Ic_2 = q \log \left( 1 + \nu \, \Pc_x \right) + d ( p || \nu )
\end{eqnarray}
where
\begin{eqnarray} \label{sanvalentino}
d ( a || b ) = a \log \frac{a}{b} + (1 - a ) \log \frac{1-a}{1-b}
\end{eqnarray}
is the binary relative entropy. The expression (\ref{tvcs-bernoulli-bernoulli}) coincides with the result given in  \cite{TCSV} 
for the limit of the mutual information rate
\begin{eqnarray} \label{funny-beppe1}
\frac{1}{n}  I(\xv; \Am \Um \Bm \xv + \zv| \Am, \Um, \Bm)  =  \frac{1}{n} \EE \left [ \log \left | \Id + \Pc_x \Um^\dagger \Am \Um \Bm \right | \right ], 
\end{eqnarray}
of a vector Gaussian channel with iid Gaussian input $\xv$, and channel matrix $\Am \Um\Bm$ with $\Bm = \diag(\bv)$.

\subsubsection{$\Am=\Id$, unitary  $\Um$}

In this case, $\Rm = \Um^\dagger \Am^\dagger \Am \Um =\Id$ and  $\mathcal{R}_{\Rm} ( z ) = 1$.
Hence, (\ref{e:fix-pointeq1}) and (\ref{e:fix-pointeq}) become
\begin{eqnarray}
\eff &=&1  \label{e:fix-pointeq1r}  \\
\chi &=& \frac{\Pc_x}{1 + \Pc_x}.
\label{e:fix-pointeqr}
\end{eqnarray}
Since $\Am = \Id$ implies $p=1$,  (\ref{fixpointsolve1-ziofa}) yields $\nu = 1$ and (recalling \eqref{tvcs-bernoulli-bernoulli}), we have
\begin{eqnarray}
\mathcal{I} &=&  I(V_0; V_0 + Z)  -  q \log \left( 1 + \Pc_x \right)\label{bingo-bongo}\\
&=& I (V_0 ; V_0 + Z) - I ( V_0 ; V_0 + Z | B_0 ) \\
&=& I ( B_0 ; V_0 + Z) \label{shlomoknocksatdoor} \\
&=& h(q) - H(B_0 | V_0 + Z), \label{minfo-bound0}
\end{eqnarray}
where  \eqref{shlomoknocksatdoor} follows because $V_0 + Z$ and $B_0$ are independent conditioned on $V_0$.
In fact, in this case, the single-letter expression $\Ic = \frac{1}{n} I(\bv; \yv | \Am=\Id,\Um)$ holds for all $n$, not only in the 
limit of $n \rightarrow \infty$.

\section{Bounds on the Mutual Information Rate} \label{bounds}

\subsection{Upper Bounds}  \label{sec:upbounds} 

We start with the following result, which follows immediately from first principles. 

\begin{theorem} \label{thm:dataprocessing}
If  $\Um$ is unitary, then (\ref{inforate}) satisfies
\begin{eqnarray} \label{minfo-bound2ex}
\mathcal{I} \leq  I(V_0; V_0 + Z)  -  q\log \left( 1 + \Pc_x \right),
\end{eqnarray}
where $Z$ and $V_0$ are as defined in Claim \ref{th:x}.
Equation \eqref{minfo-bound2ex} holds with equality for $\Am=\Id$.
\end{theorem}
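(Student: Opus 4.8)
\emph{Proof proposal.} The plan is to derive the bound from first principles by a genie (channel-enhancement) argument together with the unitary invariance of the Gaussian noise, so that Claim~\ref{th:x} and the replica method are not needed. The target is more transparent after single-letterization: by the chain rule $I(V_0,B_0;V_0+Z)=I(B_0;V_0+Z)+I(V_0;V_0+Z\mid B_0)=I(V_0;V_0+Z)+I(B_0;V_0+Z\mid V_0)$, and since $B_0=\mathbb{1}\{V_0\neq 0\}$ almost surely (because $\PP[X_0=0]=0$) the last term vanishes, while $I(V_0;V_0+Z\mid B_0)=q\,I(X_0;X_0+Z)=q\log(1+\Pc_x)$. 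Hence $I(V_0;V_0+Z)-q\log(1+\Pc_x)=I(B_0;V_0+Z)$, so \eqref{minfo-bound2ex} is equivalent to $\mathcal{I}\le I(B_0;V_0+Z)$, the quantity appearing in \eqref{shlomoknocksatdoor}.

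Next I would enhance the observation. Define $\yv'$ componentwise by $y_i'=y_i$ on the sampled coordinates ($[\Am]_{i,i}=1$) and $y_i'=(\Um\vv)_i+z_i'$ on the unsampled ones, with the $z_i'\sim\Cc\Nc(0,1)$ fresh and independent of everything else. Conditionally on $(\Am,\Um)$, the original $\yv$ is a stochastic function of $\yv'$: keep the sampled coordinates unchanged and overwrite each unsampled coordinate with an independent $\Cc\Nc(0,1)$ draw. Thus $\bv\to\yv'\to\yv$ is a Markov chain given $(\Am,\Um)$, and the data-processing inequality yields $I(\bv;\yv\mid\Am,\Um)\le I(\bv;\yv'\mid\Am,\Um)$. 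On the other hand, conditionally on $(\vv,\Um)$ the vector $\yv'$ is $\Um\vv$ plus an iid $\Cc\Nc(0,\Id)$ noise vector, independently of $\Am$; since $\bv$ is also independent of $\Am$, the conditional law of $(\bv,\yv')$ given $(\Am,\Um)$ does not depend on $\Am$, so $I(\bv;\yv'\mid\Am,\Um)=I(\bv;\Um\vv+\tilde\zv\mid\Um)$ with $\tilde\zv\sim\Cc\Nc(0,\Id)$ independent of $(\bv,\vv,\Um)$.

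Now I would peel off $\Um$. Since $\Um$ is unitary and known, $I(\bv;\Um\vv+\tilde\zv\mid\Um)=I(\bv;\vv+\Um^\dagger\tilde\zv\mid\Um)$, and rotational invariance of the circularly symmetric Gaussian gives that, conditioned on $\Um$, $\Um^\dagger\tilde\zv\sim\Cc\Nc(0,\Id)$ remains independent of $(\bv,\vv)$; hence the conditional law of $(\bv,\vv+\Um^\dagger\tilde\zv)$ given $\Um$ is free of $\Um$ and coincides with that of $(\bv,\vv+\tilde\zv)$. The remaining $I(\bv;\vv+\tilde\zv)$ decouples over coordinates because $(b_i,v_i,\tilde z_i)$ are iid triples, so $I(\bv;\vv+\tilde\zv)=\sum_{i=1}^n I(b_i;v_i+\tilde z_i)=n\,I(B_0;V_0+Z)$. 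Putting the chain together, $\tfrac1n I(\bv;\yv\mid\Am,\Um)\le I(B_0;V_0+Z)$ for every finite $n$, whence $\mathcal{I}\le I(B_0;V_0+Z)=I(V_0;V_0+Z)-q\log(1+\Pc_x)$. When $\Am=\Id$ the enhancement step is vacuous ($\yv'=\yv$) and all inequalities are equalities, which recovers \eqref{bingo-bongo}; this is exactly the special case already computed in Section~\ref{section:specialcases}.

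The only delicate part is the measure-theoretic bookkeeping in the conditioning on $\Am$: one must verify carefully that $\yv$ is a genuine Markov degradation of $\yv'$ given $(\Am,\Um)$ and that the conditioning on $\Am$ can then be dropped because the relevant conditional laws do not depend on $\Am$. Everything else — the Gaussian rotation trick, the single-letterization of the product channel, and the chain-rule identity for the single-letter term — is routine.
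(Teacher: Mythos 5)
Your proposal is correct and takes essentially the same approach as the paper's one-line proof: the paper observes that $\yv$ is a degradation of $\Um\Xm\bv+\zv$ and applies the data-processing inequality, then evaluates the resulting mutual information via the $\Am=\Id$ computation in \eqref{bingo-bongo}--\eqref{minfo-bound0}. You have simply made the degradation step explicit by constructing the genie observation $\yv'$, verifying the Markov chain $\bv\to\yv'\to\yv$ given $(\Am,\Um)$, dropping the dependence on $\Am$, peeling off the unitary $\Um$ by rotational invariance of the Gaussian, and single-letterizing -- all of which is the careful elaboration of the step the paper summarizes by referring to \eqref{bingo-bongo}.
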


\begin{IEEEproof}
It is sufficient to notice that the output $\yv$ in (\ref{model1}) is obtained by sampling the vector 
$\Um \Xm \bv + \zv$ at the positions of the ``1'' elements of the diagonal of $\Am$.
From the data processing inequality and noticing that $\frac{1}{n} I(\bv; \Um \Xm \bv + \zv)$ is given by 
\eqref{bingo-bongo}, the result follows.
\end{IEEEproof}

In the general case, we have the following upper bounds
\begin{theorem} \label{th:bounds}
\begin{align}
\Ic_1 & \leq  \Vc_{\Rm} (q \, \Pc_x) \;\;\;\;\; \label{minfo-bound1} \\
\Ic_1 & \leq  I\left (V_0;   \sqrt{\E \left[ | \mathsf{R} |^2   \right]} V_0 + Z \right )  \;\;\;\;\; \label{minfo-bound3}
\end{align}
where $Z$ and $V_0$ are as defined in Claim \ref{th:x}, and 
where $|{\sf R}|^2$ is a random variable distributed as the limiting spectrum
of $\Rm$.
\end{theorem}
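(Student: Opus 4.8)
The plan is to derive both inequalities by purely information-theoretic manipulations performed conditionally on $(\Am,\Um)$, and to pass to the limit $n\to\infty$ only at the very end, using the assumed convergence of the empirical spectral distribution of $\Rm$.

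For \eqref{minfo-bound1} I would argue as follows. Irrespective of $(\Am,\Um)$, the input $\vv=\Xm\bv$ is proper with $\E[\vv\vv^\dagger]=q\,\Pc_x\Id$. Fix $\Am=A$, $\Um=U$ and set $\Gm=AU$. Then $\yv=\Gm\vv+\zv$ is a linear channel with proper Gaussian noise, so among all inputs of covariance $q\Pc_x\Id$ the proper Gaussian one maximizes the mutual information; concretely, $I(\vv;\yv\,|\,\Am=A,\Um=U)=h(\yv\,|\,A,U)-\log\det(\pi e\Id)\le\log\det\!\big(\Id+q\,\Pc_x\,\Gm\Gm^\dagger\big)=\log\det\!\big(\Id+q\,\Pc_x\Rm\big)$, the last step because $\Gm\Gm^\dagger$ and $\Gm^\dagger\Gm=\Rm$ share their nonzero eigenvalues. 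Averaging over $(\Am,\Um)$, dividing by $n$, and using $\frac1n\log\det(\Id+q\Pc_x\Rm)\to\Vc_{\Rm}(q\Pc_x)$ (the defining relation of the Shannon transform together with the convergence of the spectrum of $\Rm$) gives \eqref{minfo-bound1}.

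For \eqref{minfo-bound3} I would again fix $\Am=A$, $\Um=U$, write $\Gm=AU$ with columns $\gv_1,\dots,\gv_n$, and expand by the chain rule over the coordinates of $\vv$,
\begin{equation}
I(\vv;\yv\,|\,A,U)=\sum_{i=1}^n I\big(v_i;\yv\,\big|\,v_1,\dots,v_{i-1},A,U\big).
\end{equation}
In the $i$-th summand the part $\sum_{j<i}\gv_j v_j$ is a known function of the conditioning, so the summand equals $I\big(v_i;\gv_i v_i+\sum_{j>i}\gv_j v_j+\zv\big)$ (with $A,U$ now deterministic); feeding the estimator also the coordinates $v_{i+1},\dots,v_n$ cannot decrease mutual information, which bounds it by $I(v_i;\gv_i v_i+\zv)$; and since $\zv$ is white, the normalized matched filter $\gv_i^\dagger(\cdot)/\|\gv_i\|$ is a sufficient statistic for $v_i$, so this last quantity equals $I\big(V_0;\sqrt{[\Rm]_{ii}}\,V_0+Z\big)$, where $[\Rm]_{ii}=\|\gv_i\|^2$. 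Now the scalar map $g(s):=I(V_0;\sqrt{s}\,V_0+Z)$ is concave and nondecreasing: by the I-MMSE relation $g'(s)={\sf mmse}\big(V_0\,\big|\,\sqrt s\,V_0+Z\big)\log e$, and the MMSE is nonincreasing in the SNR $s$. Hence Jensen's inequality, applied first over $i$ and then over the law of $\Rm$, yields
\begin{equation}
\frac1n I(\vv;\yv\,|\,\Am,\Um)\;\le\;\E\!\Big[g\big(\tfrac1n\tr\Rm\big)\Big]\;\le\;g\big(\tfrac1n\E[\tr\Rm]\big),
\end{equation}
and letting $n\to\infty$ with $\frac1n\tr\Rm\to\E[|{\sf R}|^2]$ (the first moment of the limiting spectrum of $\Rm$) and using continuity of $g$ gives \eqref{minfo-bound3}.

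The step I expect to be the main obstacle is the per-coordinate reduction in \eqref{minfo-bound3}: one has to verify carefully that adjoining $v_{i+1},\dots,v_n$ to the observation is a legitimate upper bound and that the white-noise matched filter is genuinely a sufficient statistic, and then that $g$ is concave — the latter being the crucial analytic ingredient, resting on the I-MMSE theorem and the monotonicity of the MMSE in SNR. The limiting step also needs $\frac1n\tr\Rm$ to converge to the mean of the limiting spectrum of $\Rm$ (with the mild uniform integrability that also underlies the definition of $\Vc_{\Rm}$), and the interchange of $\E_{\Am,\Um}$ with the limit, both of which follow from the spectral hypotheses already in force.
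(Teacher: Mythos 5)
Your argument matches the paper's proof essentially step for step. For \eqref{minfo-bound1} you both invoke the Gaussian maximum-entropy bound on the output differential entropy and then pass to the Shannon transform; for \eqref{minfo-bound3} you both apply the chain rule, add the remaining coordinates $v_{i+1},\dots,v_n$ to the conditioning (equivalently, to the observation), reduce to the scalar matched-filter statistic $\sqrt{[\Rm]_{ii}}\,V_0+Z$, and close via Jensen's inequality using the concavity of $s\mapsto I(V_0;\sqrt{s}V_0+Z)$, which the paper attributes to \cite{guo-verdu-shamai} and you correctly re-derive from the I-MMSE identity together with the monotonicity of the MMSE in SNR.
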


\begin{IEEEproof} See Appendix \ref{proof:bounds} \end{IEEEproof}

\subsection{Lower Bounds}  \label{sec:lowerbounds} 

In order to corroborate the exact result of Claim \ref{th:x} obtained through the heuristic replica method, we also consider a lower bound to
the mutual information. Since $\Ic_2$ is known exactly, it is sufficient to have a lower bound for $\Ic_1$. This is provided by the following result:


\begin{theorem} \label{LB-theorem}
The mutual information rate in
 (\ref{inforate1}) is lower bounded by
\begin{eqnarray} \label{I1LB}
\Ic_1 \geq \int_0^1 I\left ( V_0;  \sqrt{\eta(q \Pc_x; \beta)} \, V_0 + Z \right ) \; d\beta,
\end{eqnarray}
where $Z$ and $V_0$ are as defined in Claim \ref{th:x} and where $\eta(s; \beta)$ is defined by
\begin{eqnarray} \label{eta-beta}
\eta(s; \beta)  =  \lim_{n \rightarrow \infty} \uv_i^\dagger \Am^\dagger \left [ \Id  + s \Am \Um_{i-1} \Um_{i-1}^\dagger \Am^\dagger \right ]^{-1} \Am \uv_i
\end{eqnarray}
where $i = \lfloor n \beta \rfloor$.
\end{theorem}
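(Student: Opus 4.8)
The plan is to lower bound $\Ic_1 = \lim_n \tfrac1n I(\vv;\yv|\Am,\Um)$ by expanding the mutual information as a telescoping sum over the coordinates of $\vv$ and applying an MMSE/I-MMSE-type chain rule. Specifically, order the components $v_1,\dots,v_n$ and write
\begin{eqnarray}
\tfrac1n I(\vv;\yv|\Am,\Um) = \tfrac1n \sum_{i=1}^n I\!\left(v_i; \yv \,\big|\, v_1^{i-1}, \Am, \Um\right),\nonumber
\end{eqnarray}
where $v_1^{i-1} = (v_1,\dots,v_{i-1})$. Conditioned on $v_1^{i-1}$, the observation is $\yv = \Am\uv_i v_i + \Am\Um_{i-1}\vv_{i-1}^{\rm res} + (\text{known part}) + \zv$, i.e.\ a scalar Gaussian channel for $v_i$ corrupted by the ``residual interference'' $\Am\Um_{i-1}\vv^{\rm res}$ plus noise, where $\Um_{i-1}$ collects the columns of $\Um$ not yet decoded and $\vv^{\rm res}$ their (still random, Bernoulli-Gaussian) entries. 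The first key step is to lower bound $I(v_i;\yv\,|\,v_1^{i-1},\Am,\Um)$ by the mutual information of the same scalar channel but with the interference-plus-noise \emph{replaced by a Gaussian vector with the same (conditional) covariance}; since Gaussian noise is the worst case for a given covariance, this can only decrease the mutual information. After a linear-MMSE (whitening) reduction this scalar channel has effective SNR equal to $\uv_i^\dagger \Am^\dagger [\,\Id + s\,\Am\Um_{i-1}\Um_{i-1}^\dagger\Am^\dagger\,]^{-1}\Am\uv_i$ with $s = q\Pc_x$ the per-component power of the residual entries (each residual $v_j$ has $\E[|v_j|^2]=q\Pc_x$), which is exactly $\eta(q\Pc_x;\beta)$ in the limit with $\beta = i/n$.

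The second key step is the almost-sure convergence in \eqref{eta-beta}: one must show that the quadratic form $\uv_i^\dagger \Am^\dagger[\,\Id + s\,\Am\Um_{i-1}\Um_{i-1}^\dagger\Am^\dagger\,]^{-1}\Am\uv_i$ concentrates on a deterministic limit depending only on $\beta$. This is where the freeness hypothesis on $\Rm = \Um^\dagger\Am^\dagger\Am\Um$ enters: $\Um_{i-1}\Um_{i-1}^\dagger$ is a ``shrunk'' version of $\Um\Um^\dagger$, and one invokes the standard machinery for quadratic forms in the resolvent of free/unitarily-invariant matrices (trace lemma plus rank-one perturbation / interlacing bounds) to get self-averaging; the limit is then identified through the $\eta$-transform of the appropriately rescaled $\Rm$. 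Having established the pointwise (in $\beta$) limit and a uniform integrability / boundedness bound (the quadratic form is bounded by $\|\uv_i\|^2$, and these norms are uniformly bounded a.s.\ under freeness/unitary invariance), one passes the Riemann sum $\tfrac1n\sum_i I(V_0;\sqrt{\eta(q\Pc_x; i/n)}V_0 + Z)$ to the integral $\int_0^1 I(V_0;\sqrt{\eta(q\Pc_x;\beta)}V_0+Z)\,d\beta$ by dominated convergence, using continuity of $s\mapsto I(V_0;\sqrt{s}V_0+Z)$.

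The main obstacle I expect is the second step: making rigorous the convergence of the resolvent quadratic form in \eqref{eta-beta} and identifying its limit. The subtlety is that after peeling off columns $1,\dots,i-1$ the matrix $\Um_{i-1}$ is no longer unitarily invariant on its own, so one cannot directly quote a textbook deterministic-equivalent result; instead one argues that removing a proportion $\beta$ of columns amounts to a known operation on the spectral distribution (a ``thinning'' that, via freeness with a projection, yields a compressed law), and then applies the quadratic-form concentration lemma for matrices that are free from deterministic ones. A secondary, more routine obstacle is bookkeeping the conditioning: verifying that conditioned on $v_1^{i-1}$ the residual entries $v_i,v_{i+1},\dots$ remain i.i.d.\ Bernoulli--Gaussian with the same marginal (so the per-component interference power is exactly $q\Pc_x$), which follows from the independence structure in the model of Section~\ref{sc:setup} but should be stated carefully. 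Everything else — the Gaussian-worst-case inequality, the linear-MMSE SNR computation, and the Riemann-sum limit — is standard.
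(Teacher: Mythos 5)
Your overall decomposition matches the paper's proof in Appendix D: chain rule over coordinates, reduction to a scalar channel whose SNR is the LMMSE multiuser efficiency, and a Riemann-sum passage to the integral. The gap is in the step that takes you from the $n$-dimensional observation to the scalar Gaussian channel. You first Gaussianize the interference-plus-noise vector, replacing it by a Gaussian with the same covariance, invoking ``Gaussian noise is the worst case,'' and only then project via LMMSE. But the worst-case-Gaussian-noise inequality $I(X;X+N)\ge I(X;X+N_G)$, with $N_G$ Gaussian of matching covariance and $N$ arbitrary, is a theorem only when the input $X$ is Gaussian (Diggavi--Cover, 2001); for non-Gaussian inputs it fails in general (Diggavi--Cover exhibit non-Gaussian noises strictly worse than Gaussian for BPSK input). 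Here the signal of interest $V_i$ is Bernoulli-Gaussian, so this step is not covered by any standard result and would need a separate, problem-specific argument.

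The paper's proof reverses the order of operations and thereby sidesteps the issue: it applies the data-processing inequality \emph{first}, projecting $\yv-\Am\sum_{j>i}\uv_jV_j$ onto the scalar $\gv_i^\dagger(\cdot)$ with $\gv_i$ the LMMSE filter in (\ref{lmmse-bingo}); this is a valid lower bound for any deterministic $\gv_i$ with no distributional assumption on the interference. The residual interference-plus-noise at the scalar LMMSE output is then asymptotically Gaussian (a CLT-type fact under the random-matrix model), so each per-coordinate mutual information \emph{converges}, rather than is merely bounded below, to $I(V_0;\sqrt{\eta(q\Pc_x;\beta)}\,V_0+Z)$; the only inequality used is data processing. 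Your second listed obstacle (concentration of the resolvent quadratic form in (\ref{eta-beta}) and identification of its deterministic limit) is real and is where freeness enters; the paper also treats this as a known deterministic-equivalent fact and supplies the explicit forms (\ref{eta-iid}), (\ref{eta-haar}) for iid and Haar $\Um$. A minor notational mismatch: since you condition on $v_1^{i-1}$ rather than on $V_{i+1}^n$ as the paper does, your residual matrix has $n-i$ columns, not $i-1$; this is just the relabeling $\beta\mapsto 1-\beta$ and does not affect the integral.
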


\begin{IEEEproof} See Appendix \ref{proof:bounds} \end{IEEEproof}

It is interesting to notice that the quantity defined in (\ref{eta-beta}) can be interpreted as the asymptotic (in $n$) {\em multiuser efficiency} of
a CDMA system  $\rv = \Am \Um \vv + \zv$ with input $\vv$, output $\rv$ and spreading codes given by the columns of $\Am\Um$, where the receiver uses linear MMSE detection with successive decoding, and the input symbols $v_{i+1}, \ldots, v_n$ have been already decoded and subtracted from the received signal
(see \cite{fnt,verdu-shamai-random-cdma-fading}).
Hence, the integral in (\ref{I1LB}) can be regarded as the mutual information between the input $\vv$ and the output of a
{\em mismatched} successive interference cancellation receiver that treats the symbols of $\vv$ as if they were Gaussian iid, instead of
Bernoulli-Gaussian.

Explicit expressions for $\eta(s; \beta)$ can be provided in
several cases of interest. For example, when $\Um$ has iid entries,
using \cite[Theorem 2.52]{fnt} we obtain $\eta(s; \beta)
= \eta$, given by the solution of the fixed-point equation
\begin{eqnarray}
\eta = \frac{p}{1 + \beta \frac{s}{1 + s \eta}},
\end{eqnarray}
namely,
\begin{eqnarray}  \label{eta-iid}
\eta (s;\beta) = \frac{(p-\beta)s - 1 +
\sqrt{((p-\beta)s - 1)^2 + 4 ps}}{2 s}
\end{eqnarray}

In the case of Haar-distributed $\Um$, using \cite[Eq.
3.112]{fnt} we obtain $\eta(s; \beta) = \eta$, given by
the solution of the fixed-point equation
\begin{eqnarray}  \label{eta-haar-fixedp}
\frac{\eta}{1 + s \eta} = \frac{p}{1 + \beta s + (1 -
\beta)  s \eta},
\end{eqnarray}
namely,
\begin{eqnarray} \label{eta-haar}
\eta (s;\beta) = \frac{ (p - \beta) s - 1 +
\sqrt{((p-\beta)s - 1) ^2 + 4(1-\beta) p s}}{2(1-\beta)
s}
\end{eqnarray}

Using the mean-value theorem in  (\ref{I1LB}), there exists some $\beta^* \in [0,1]$ such that
\begin{eqnarray} \label{I1LB-meanvalue}
\int_0^1 I\left ( V_0; {\sqrt{\eta(q \Pc_x; \beta)}} \, V_0 +  Z \right ) \; d\beta =
I\left ( V_0; {\sqrt{\eta(q\Pc_x; \beta^*)}} \, V_0 +  Z \right )
\end{eqnarray}
which is in the same form as the upper bound (\ref{minfo-bound3}) 
save for a different  signal-to-noise ratio between the Bernoulli-Gaussian input and the Gaussian noise.

It is also immediate to notice that the upper and lower bounds on $\Ic_1$ hold for any fixed deterministic $\Um$, provided that
the limits exist. For example, in the case of $\Um = \Fm$, a deterministic unitary DFT matrix, \cite{TCSV} shows that
$\eta(s;\beta)$ takes on the same form (\ref{eta-haar}) as well as the exact expression for $\Ic_2$ is still given by
Theorem \ref{th:y}. Hence, it follows that while at the moment we can develop the replica analysis only for $\Um$ random,
satisfying the freeness requirement as said above, the mutual information for a deterministic DFT matrix satisfies the same bounds.
In fact, we have numerical evidence (see Section \ref{sec:dist-results}) 
that leads us to conjecture that the replica result of Claim \ref{th:x} applies also to a DFT sensing matrix, although the proofs
of this paper do not extend to this case.

\subsection{High-SNR Regime}  \label{sec:hiSNR} 

\begin{theorem} \label{th:totallysupercool}
For the observation model (\ref{model1}) and any support estimator, 
$D( p, q, \Pc_x)$ is bounded away from zero for $0 \leq p \leq q$, even in the noiseless case.  
\end{theorem}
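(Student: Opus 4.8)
The plan is to show that even with full knowledge of $\Am$, $\Um$, and noiseless observations, the observation $\yv = \Am\Um\vv$ does not carry enough information to resolve the support $\bv$ whenever $p \leq q$, because the sensing operation destroys too many degrees of freedom. First I would reduce to the noiseless, genie-aided setting: since $D(p,q,\Pc_x)$ is monotone in the noise level and the model becomes no harder when $\zv = 0$, it suffices to lower bound the error rate of the \emph{optimal} estimator $\widehat{\bv}(\yv,\Am,\Um)$ under $\zv = 0$. Conditioned on $\Am$ and $\Um$, the map $\vv \mapsto \Am\Um\vv$ is a fixed linear map of rank at most $pn$ (almost surely exactly $\lfloor pn\rfloor$, by the freeness/genericity of $\Um$ relative to the coordinate projection $\Am$), so $\yv$ lies in a subspace of dimension $pn < qn$ in expectation. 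The intuition is that the pre-image of any observed $\yv$ is an affine subspace of complex dimension roughly $(1-p)n$ inside $\CC^n$, and this pre-image generically contains many vectors $\vv'$ whose support pattern $\bv'$ differs substantially from the true $\bv$ but is equally likely a priori (same number of nonzeros, up to the fluctuations of a Binomial$(n,q)$).

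The key steps, in order, would be: (i) fix $\Am,\Um$ and condition on the event that $\rank(\Am\Um) = m := \lfloor pn \rfloor$ and that every $m\times m$ submatrix of $\Am\Um$ formed by selecting columns is invertible — this holds almost surely by the freeness condition \eqref{RR}, which guarantees the relevant generic-position property of $\Um$ with respect to the coordinate subspaces picked out by $\Am$ and $\Bm$; (ii) on this event, for the true sparse vector $\vv$ with support $S = \{i : b_i = 1\}$ of size $k = |S|$, observe that as long as $k > m$ the system $\Am\Um\vv' = \yv$ has, for \emph{every} index set $S'$ of size $m$ chosen from the coordinates, a unique solution $\vv'$ supported on $S'$; hence there exist exponentially many support patterns $\bv'$ consistent with the data, and the posterior $P_{\bv \mid \yv, \Am,\Um}$ cannot concentrate; (iii) quantify this: by Fano, or by a direct counting/symmetrization argument, the conditional probability of support-recovery error at a typical coordinate $i$ is bounded below by a constant depending only on $p$ and $q$ (for instance, comparing the true support to the support of the minimum-$\ell_0$ or a random consistent solution yields a Hamming disagreement that is $\Theta(n)$ with high probability); (iv) finally, handle the boundary fluctuations: since $k/n \to q > p$ almost surely (when $p < q$) or $k$ exceeds $m$ with probability bounded away from zero (when $p = q$, using the central-limit fluctuations of $\mathrm{Binomial}(n,q)$ around $qn = pn$), average over $\bv$ and $\Am$ to conclude that $D^{(n)}(p,q,\Pc_x)$ stays bounded below by a positive constant, and pass to the limit.

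A cleaner route for step (iii), which I would actually prefer, is information-theoretic and reuses Theorem~\ref{th:totallyelementary}: in the noiseless case $I(\bv;\yv\mid\Am,\Um) \leq I(\vv;\yv\mid\Am,\Um) \leq \frac1n H(\yv\mid\Am,\Um)$, and $\yv$ is supported on an $m$-dimensional random subspace, so a dimension-counting bound gives $\Ic \leq p\cdot(\text{something finite})$; combined with the fact that exact recovery of a $q$-sparse pattern would require $\Ic \geq h(q)$ (from \eqref{love} with $D = 0$), one would derive a contradiction unless $D$ is bounded away from zero. However, the subtlety is that in the noiseless Gaussian case the mutual information $I(\vv;\yv\mid\Am,\Um)$ is actually \emph{infinite} (continuous $\vv$, deterministic channel), so the naive rate-distortion bound degenerates; one must instead run the argument on the discrete support variable $\bv$ directly, i.e. bound $H(\bv\mid\yv,\Am,\Um)$ from below by exhibiting the many consistent support patterns as in step (ii).

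The main obstacle I anticipate is step (ii)/(iii): making rigorous that \emph{generically} (in $\Um$, under freeness) every size-$m$ coordinate subset supports a solution, \emph{and} that among these the true pattern is not distinguished by the posterior — i.e. controlling the prior weights $P[\bv']$ for competing patterns $\bv'$, which have slightly different numbers of nonzeros and hence different Bernoulli-$q$ probabilities. One has to show this prior tilt is only $e^{o(n)}$ and therefore cannot overcome the $e^{\Theta(n)}$ multiplicity of consistent patterns at a fixed sparsity level. Equivalently, the hard part is upgrading the soft statement "the pre-image subspace is large" into the quantitative statement "the per-coordinate Bayes error is $\geq c(p,q) > 0$"; I expect this to require either a careful second-moment/counting argument over support patterns of size exactly $\lfloor pn\rfloor$ within the $(1-p)n$-dimensional pre-image, or an appeal to the exchangeability of coordinates to symmetrize the posterior and conclude that each coordinate is wrong with probability at least, say, $\tfrac12\cdot\min\{q,1-q\}\cdot(1-p/q)$ when $p<q$.
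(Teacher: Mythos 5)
You have correctly identified the right starting point (Theorem~\ref{th:totallyelementary} and the bound \eqref{love}, so that it suffices to show $\Ic < h(q)$ uniformly in $\Pc_x$), and you have also correctly spotted the obstacle that a na\"ive noiseless limit makes $I(\vv;\yv\mid\Am,\Um)$ diverge. However, the way you then pivot---a direct geometric/counting argument on the pre-image subspace and the posterior over support patterns---is a genuinely different route from the paper's, and you leave it at the level of a sketch with the hard steps (ii)--(iv) openly unresolved; in particular, your ``main obstacle'' paragraph is exactly where a proof would have to live (controlling the prior tilt $e^{\pm O(n)}$ across competing supports, verifying that the freeness condition actually yields the generic-position property for \emph{all} size-$m$ column subsets rather than almost all, and handling the Binomial fluctuation at $p=q$). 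As written, this is not a proof.

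The paper's resolution of the infinity problem you noticed is to \emph{never take the noiseless limit inside the mutual information}: it works at finite $\Pc_x$ throughout, writes $\Ic = \Ic_1 - \Ic_2$, bounds $\Ic_1 \leq \Vc_{\Rm}(q\Pc_x)$ via \eqref{minfo-bound1}, substitutes the exact expression \eqref{2I} for $\Ic_2$, and only then sends $\Pc_x\to\infty$. Both $\Vc_{\Rm}(q\Pc_x)$ and $\Ic_2(\Pc_x)$ diverge like $p\log\Pc_x$ (the $\log e$ bookkeeping is why the nine-case enumeration of the asymptotics of $\alpha\Pc_x$ and $\nu\Pc_x$ is carried out), so their difference converges to a finite constant, which the paper computes in closed form from the fixed-point equations \eqref{ITA1}--\eqref{ITA2}: for $p<q$ the limit of the upper bound is $(1-p)\log\frac{1}{1-p} - (q-p)\log\frac{q}{q-p}$, and for $p=q$ it is $(1-q)\log\frac{1}{1-q}$; both are then shown by elementary manipulations to be strictly below $h(q)$. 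This analytic route buys you rigor for the whole class of free $\Um$ without having to quantify pre-image multiplicities, but it is less self-contained (it leans on Theorems~\ref{th:y} and \ref{th:bounds}); your geometric route, were it completed, would be more elementary and arguably more illuminating about \emph{why} recovery fails, but the counting and prior-weight bookkeeping you flag would have to be done carefully, and the claim that every size-$m$ column subset is full rank ``by freeness'' needs justification beyond the freeness of $\Rm$ from deterministic Hermitian matrices.
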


\begin{proof}
From (\ref{love}) it is evident that $D(p,q,\Pc_x)$ is bounded away from zero if $\Ic < h(q)$. From the definition of 
the mutual information rate $\Ic$ (see (\ref{inforate})), it is immediate that $\Ic < h(q)$ for any finite $\Pc_x$.
However, in the limit of high SNR, $\Ic$ may or may not converge to $h(q)$ depending on the system parameters $p$ and $q$.  In the remainder of the proof we show that
\begin{eqnarray}
\lim_{\Pc_x\to \infty } \Ic < h (q)
\end{eqnarray}
provided $0 < p \leq q$. The case $p= 0$ is trivial.

Recall from Theorem  \ref{th:y} 
that 
\begin{eqnarray}
\eta_{\Rm}(\alpha\,  \Pc_x)  &=& \frac{1}{1+ \alpha\,  \nu\, \Pc_x} \label{ITA1}\\
&=&  \frac{q}{ 1 + \nu \Pc_x} + 1-q \label{ITA2}
\\
\mathcal{I}_2 (  \Pc_x ) &=& \Vc_{\Rm}(\alpha\,   \Pc_x) + q \log \left( 1 + \nu \Pc_x \right)  - \log ( 1 + \alpha\,  \nu \Pc_x) \label{888}
\end{eqnarray}
where we have made explicit the dependence of $\mathcal{I}_2$ on $\Pc_x$.
For the purposes of the proof it is important to elucidate the behavior of 
$\alpha\,   \Pc_x$, $\nu\,   \Pc_x$, and $\alpha \nu\,   \Pc_x$ as  $\Pc_x\to \infty$,
where $\nu$ and $\alpha$ depend on $\Pc_x$ through \eqref{ITA2}.
In principle, there are nine possibilities:
\begin{enumerate} 
\item 
$  \alpha \Pc_x \to 0$  and $ \nu \Pc_x \to 0$.
\item 
$  \alpha \Pc_x \to 0$  and $ 0 < \lim_{\Pc_x\to \infty } \nu \Pc_x < \infty$.
\item
$  \alpha \Pc_x \to 0$  and $\nu \Pc_x $  diverges.
\item
$  0 < \lim_{\Pc_x\to \infty } \alpha \Pc_x < \infty$  and $ \nu \Pc_x \to 0$.
\item 
$  0 < \lim_{\Pc_x\to \infty } \alpha \Pc_x < \infty$  and $ 0 < \lim_{\Pc_x\to \infty } \nu \Pc_x < \infty$.
\item\label{case6}
$  0 < \lim_{\Pc_x\to \infty } \alpha \Pc_x < \infty$  and $\nu \Pc_x $  diverges.
\item
$\alpha \Pc_x$ diverges  and $ \nu \Pc_x \to 0$.
\item \label{case8}
$\alpha \Pc_x$ diverges  and $ 0 < \lim_{\Pc_x\to \infty } \nu \Pc_x < \infty$.
\item \label{case9}
$\alpha \Pc_x$ diverges  and $\nu \Pc_x $  diverges.
\end{enumerate}
The asymptotic behavior of  \eqref{888} is
\begin{eqnarray} \label{chico}
\lim_{\Pc_x\to \infty } \frac{\mathcal{I}_2 (  \Pc_x )}{\log \Pc_x } &=& p
\end{eqnarray}
since $\frac{1}{n} {\rm rank} ( \Am \Um \Bm ) \rightarrow \min \{p, q\}$ with probabilty 1.

In view of  \eqref{ITA1}, $\alpha \nu\,   \Pc_x$ cannot diverge when $p < 1$, since 
\begin{eqnarray} \label{bacioilly}
1-p \leq 
\eta_{\Rm}(\alpha\,  \Pc_x) \leq  1
\end{eqnarray}
where the lower bound is the limit of $\eta_{\Rm}(\alpha\,  \Pc_x) $ if $\alpha\,  \Pc_x \to \infty$
while the upper bound is the limit of $\eta_{\Rm}(\alpha\,  \Pc_x) $ if $\alpha\,  \Pc_x \to 0$.

\begin{enumerate}
\item Impossible because it would contradict \eqref{chico}.
\item Impossible because it would contradict \eqref{ITA2}.
\item Impossible because it would contradict \eqref{ITA2} since $q > 0$.
\item Impossible because it would contradict \eqref{chico}.
\item Impossible because then $\alpha \nu\,   \Pc_x \to 0$ and \eqref{ITA2} would be contradicted.
\item Impossible if $p < q$ since $\eta_{\Rm}(\alpha\,  \Pc_x) = 1 -q$ would be outside the range established in \eqref{bacioilly}.
If $p=q$ then the lower limit in \eqref{bacioilly} would be achieved at a finite argument of $\eta_{\Rm}$
which is impossible due to the strictly monotonic nature of that function.
\item 
Impossible because it would contradict \eqref{ITA2}.
\item Impossible if $p=q$   because it would contradict \eqref{ITA2}. The case $p < q$ is treated below.
\item Impossible if $p<q$ because it would contradict \eqref{ITA2}. 
The case $p=q$ is treated below.
\end{enumerate}
 We proceed to consider case \ref{case8}) when $p <q$.
 The solution of the fixed-point equation \eqref{ITA1}-\eqref{ITA2} yields
\begin{eqnarray}
\lim_{\Pc_x\to \infty } \frac{q}{ 1 + \nu \Pc_x}  &=& q - p  \label{subi1}\\
\lim_{\Pc_x\to \infty } \frac{1}{ 1 + \alpha \nu \Pc_x}  &=& 1 - p \\
\lim_{\Pc_x\to \infty } \nu S &=& \frac{p}{q-p} \\
\lim_{\Pc_x\to \infty } \alpha &=& \frac{q-p} {1-p} \label{subi4}
\end{eqnarray}
We can proceed to upper bound $\Ic$ using Theorem \ref{th:bounds} and \eqref{subi1}-\eqref{subi4}:
\begin{eqnarray} \label{antonia-ub1}
\Ic &\leq& {\mathcal V}_{\Rm}(q \Pc_x)  -   {\mathcal V}_{\Rm}(\alpha  \Pc_x)  - q \log (1+ \nu \Pc_x)  + \log(1+ \alpha \nu \Pc_x)
\label{ziobeberu} \\
&\rightarrow&
 (1-p) \log \frac{1}{1 - p} - (q - p) \log \frac{q}{q-p}
\\
&=&
 (1-q) \log \frac{1}{1 - p} - (q - p) \log \frac{q(1-p)}{q-p}
 \\
 &<&
 (1-q) \log \frac{1}{1 - q} + (q - p) \log \frac{1}{q} + (q - p) \log \frac{q-p}{1-p}
 \\
 &<&
 (1-q) \log \frac{1}{1 - q} + q \log \frac{1}{q}
 \\
 &=&h(q)
\end{eqnarray}

We now  proceed to consider case \ref{case9}) when $p =q$. 
In this case, the solution of the fixed-point equation \eqref{ITA1}-\eqref{ITA2} yields
\begin{eqnarray}
\lim_{\Pc_x\to \infty } \frac{1}{ 1 + \alpha \nu \Pc_x}  &=& 1 - q
\label{tiam0}
\\
\lim_{\Pc_x\to \infty }  \alpha \nu \Pc_x  &=& \frac{q}{1 - q}\label{tiam9}
\end{eqnarray}

As before, we can now  proceed to upper bound $\Ic$ using Theorem \ref{th:bounds}:
\begin{eqnarray} \label{antonia-ub2}
\Ic &\leq& {\mathcal V}_{\Rm}(q \Pc_x)  -   {\mathcal V}_{\Rm}(\alpha  \Pc_x)  - q \log (1+ \nu \Pc_x)  + \log(1+ \alpha \nu \Pc_x)
\\
&=& \left( {\mathcal V}_{\Rm}(q \Pc_x) - q \log ( 1 + q \Pc_x ) 
- {\mathcal V}_{\Rm}(\alpha \Pc_x) + q \log ( 1 + \alpha \Pc_x ) \right)
 \nonumber\\
 & & + q \log \frac{1 + q \Pc_x}{( 1 + \alpha \Pc_x ) ( 1 + \nu \Pc_x )} + \log(1+ \alpha \nu \Pc_x)\\
&\rightarrow&
 (1-q) \log   \frac{1}{1-q} \label{juju}\\
 &<& h(q)
\end{eqnarray}
where \eqref{juju} follows from \eqref{tiam0}, \eqref{tiam9} and the fact that the first term in the left side
vanishes as $\Pc_x \to \infty$.
\end{proof}

Note that an achievability counterpart to Theorem \ref{th:totallysupercool} in
the noiseless case (under a more general signal model) is given in \cite{wuverdunoiseless},
showing that $p = q$ is the critical sampling rate threshold 
for exact reconstruction.

\subsection{Examples}  \label{sec:info-results}

We provide a few numerical examples illustrating the results developed before.
Figs. \ref{info0db}, \ref{info20db} and \ref{info50db} show the mutual information rate $\Ic$ as a function of the sampling rate $p$, for a
Haar-distributed sensing matrix $\Um$ and a Gaussian-Bernoulli source signal $\vv$ 
with $q = 0.2$ and $\SNR = q\Pc_x$ is equal to
0, 20 and 50 dB, respectively. Each figure show also the corresponding lower and upper bounds provided by Theorems
\ref{thm:dataprocessing}, \ref{th:bounds} and \ref{LB-theorem}. We notice that the lower bound of Theorem \ref{LB-theorem} is close to the
exact value of $\Ic$ for low SNR (in fact, it is tight for $\Pc_x \rightarrow 0$). In contrast, for high SNR, 
the mutual information $\Ic$ is very closely approximated by the minimum of the two 
upper bounds provided by Theorem \ref{thm:dataprocessing} and (\ref{minfo-bound1}) in Theorem \ref{th:bounds}. 

\begin{figure}[ht]
\centerline{\includegraphics[width=12cm]{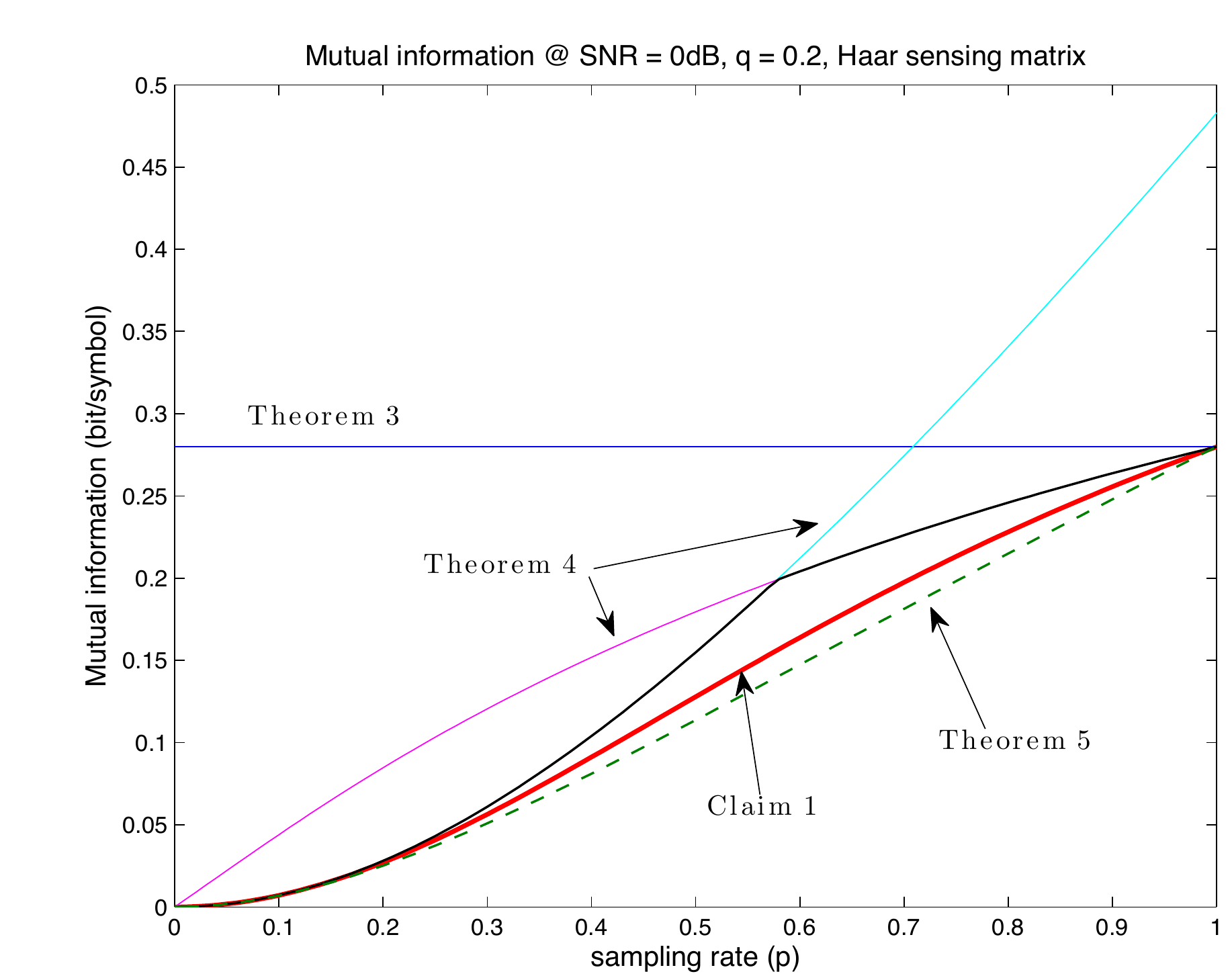}}
\caption{Mutual information rate $\Ic$ versus $p$, for $q = 0.2$ and $\SNR = q\Pc_x = 0$ dB. Upper and lower bounds
are also shown for comparison.}
\label{info0db}
\end{figure}

\begin{figure}[ht]
\centerline{\includegraphics[width=12cm]{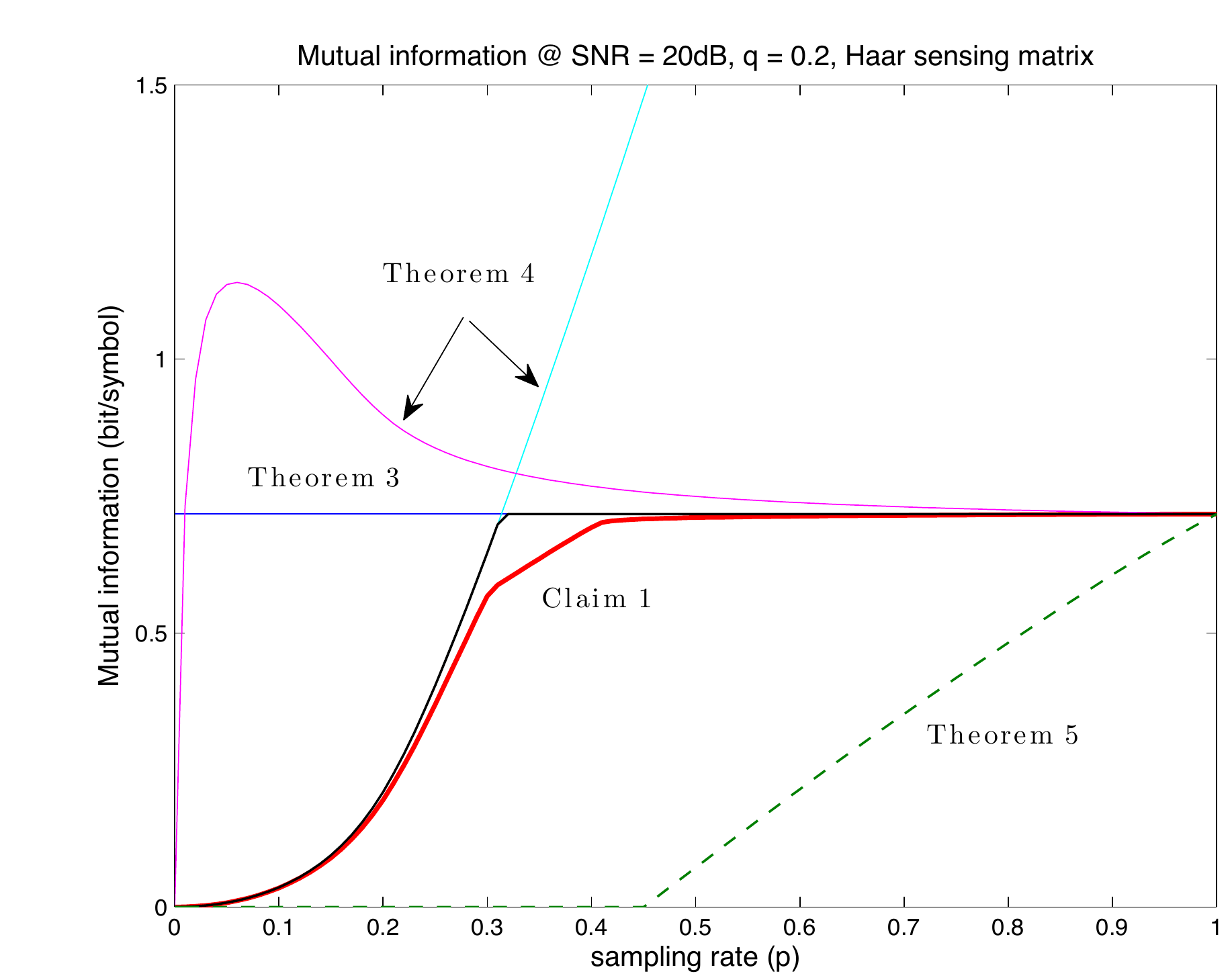}}
\caption{Mutual information rate $\Ic$ versus $p$, for $q = 0.2$ and $\SNR = q\Pc_x = 20$ dB. Upper and lower bounds
are also shown for comparison.}
\label{info20db}
\end{figure}

\begin{figure}[ht]
\centerline{\includegraphics[width=12cm]{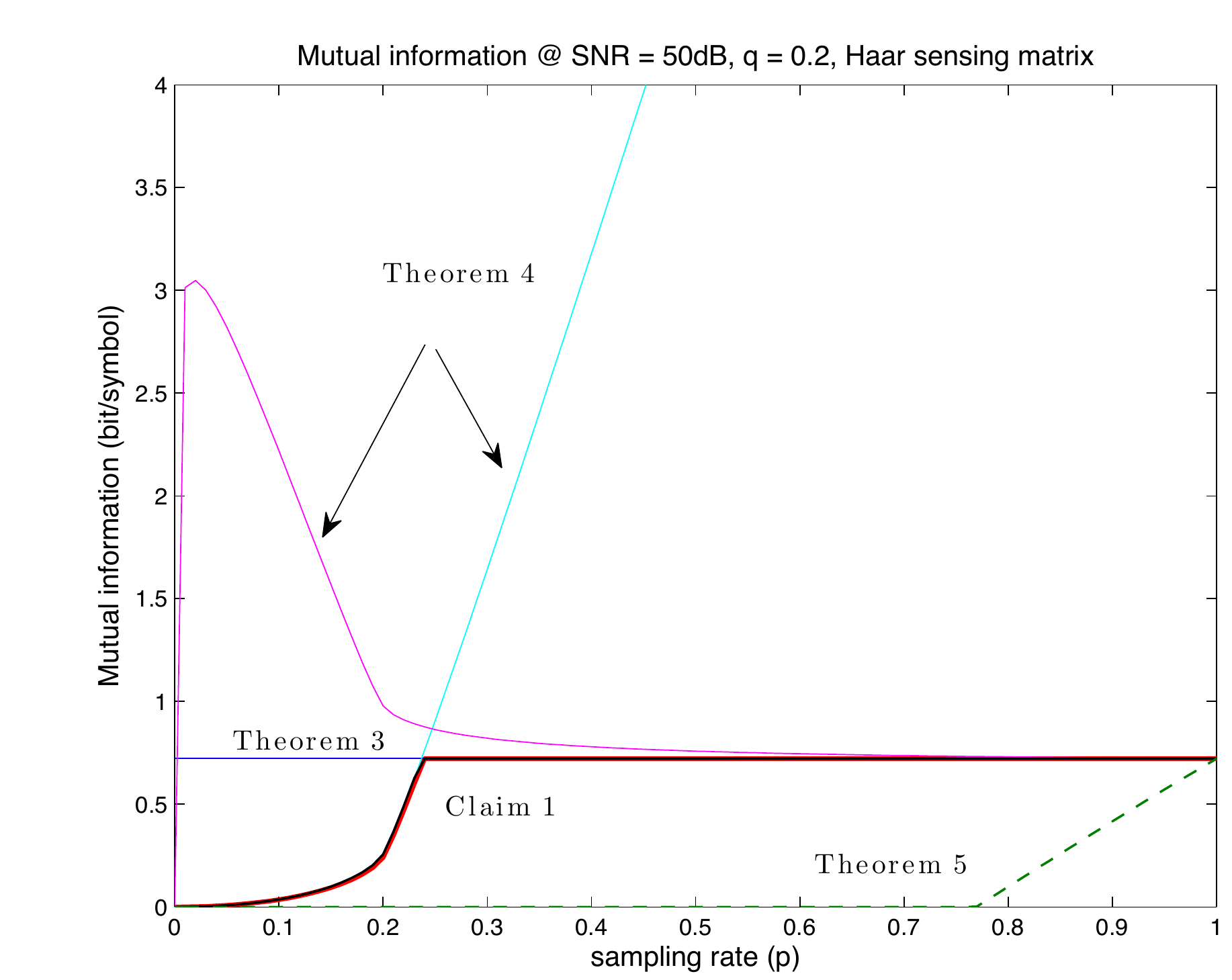}}
\caption{Mutual information rate $\Ic$ versus $p$, for $q = 0.2$ and $\SNR = q\Pc_x = 50$ dB. Upper and lower bounds
are also shown for comparison.}
\label{info50db}
\end{figure}

It is also interesting to observe that the asymptotic regime of vanishing $D(p,q,\Pc_x)$ for any $p > q$ is approached
very slowly, i.e., an impractically high SNR is required. For example, we notice that
at $\SNR = 50$ dB the mutual information $\Ic$ in Fig.~\ref{info50db} achieves the upper upper bound of Theorem \ref{thm:dataprocessing}
(very close to $h(q)$) at $p = 0.24$, which is quite far from the threshold $q = 0.2$. 
Fig.~\ref{ub1-highSNR} shows $\Ic_{\rm ub}$ evaluated at $q = 0.2, p = 0.205$ versus SNR in dB. In order to reach the 
value $h(q) = 0.722$ bits, we need an SNR of about 340 dB. This gives an idea of ``how high'' the high-SNR regime must be, in order to
work closely to the noiseless reconstruction threshold.

\begin{figure}[ht]
\centerline{\includegraphics[width=12cm]{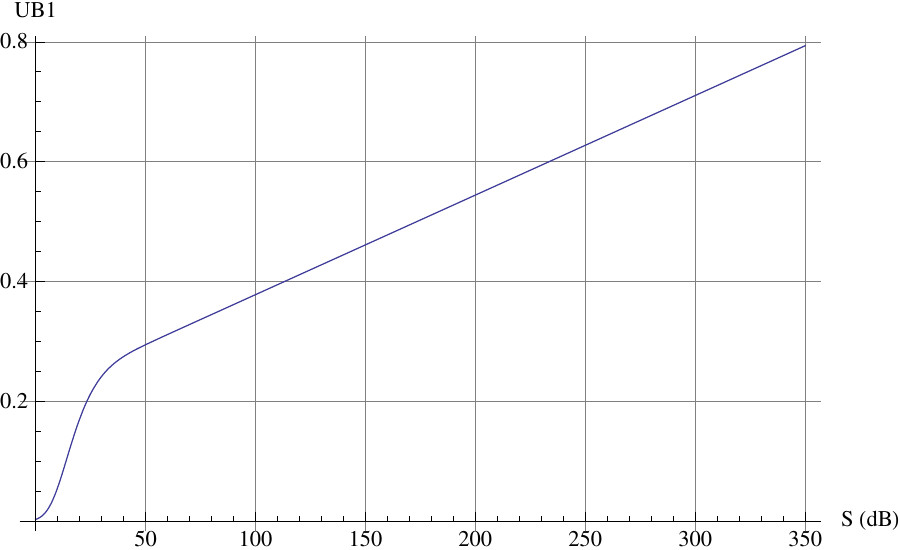}}
\caption{Mutual information upper bound (right-hand side of  (\ref{ziobeberu})) 
versus $\SNR = q\Pc_x$ (dB), for $q = 0.2$ and $p = 0.205$.} 
\label{ub1-highSNR}
\end{figure}

Next, we 
take a closer look at the behavior of the solutions of the fixed-point equation (\ref{e:fix-pointeq1}) -- (\ref{e:fix-pointeq}).
Even in the iid case (in which the equation reduces to \eqref{matched-mmse-fixed-point}) solved in \cite{guo-verdu,rangan-replica}, the question of how to choose among the multiple solutions has not been thoroughly addressed in the literature. 
 Fig.~\ref{mapf023}, \ref{mapf024} and \ref{mapf033} show
the fixed-point mapping function obtained by eliminating $\chi$ from (\ref{e:fix-pointeq1}) -- (\ref{e:fix-pointeq}), and given by 
\begin{eqnarray}
f(1/\eta) = \frac{1}{\mathcal{R}_{\Rm}\left(- {\sf mmse} \left ( V_0  | V_0 + \eta^{-\frac12}  Z \right )\right)},
\end{eqnarray}
given as a function of $1/\eta$, for $q = 0.2$ and $\SNR = 50$ dB. 
The intersections of this function with the main diagonal are the solutions of the equation $1/\eta = f(1/\eta)$.  We explore
the values of $p$ in the vicinity of the ``phase transition'' $p \approx 0.24$, for which the mutual information reaches a value very close to
$h(q)$ (corresponding to $D(p,q,\Pc_x) \approx 0$). 
For $p = 0.23$ (see Fig.~\ref{mapf023}) we have three solutions. Two are stable fixed points and one is an unstable fixed point.
The solution corresponding to the absolute minimum of the free energy $\Ic_1$ is the right-most fixed point (see Fig.~\ref{freeen023}), 
corresponding to a large value of $1/\eta$, which in turn translates into a large support recovery error rate, as we will see in Section \ref{sec:dist-results}. 
For $p = 0.24$ (see Fig.~\ref{mapf024}) we have also three solutions of which two are stable fixed points. 
However, now the solution corresponding to the absolute minimum of the free energy $\Ic_1$ is the left-most fixed point (see Fig.~\ref{freeen024}),  corresponding to a small value of $1/\eta$, i.e., to a very small support recovery error rate.
This ``jump'' from the right-most to the left-most stable fixed point corresponds to a phase transition of the underlying statistical physics system.
Notice that the phase transition may occur at finite SNR, as in  this case, and the phase transition threshold $p^*$ is, in general, strictly larger than
the noiseless perfect reconstruction threshold $q$. 
Finally, for values of $p$ significantly larger than the phase transition threshold  (see the example for 
$p = 0.33$ given in Fig.~\ref{mapf033}) only one solution exists. In this case, the free energy $\Ic_1$ 
has only one extremum point which is its absolute minimum (see Fig.~\ref{freeen033}). For the Gaussian 
iid sensing matrix case it is known (see \cite{reeves-journal1} and references therein) that the iterative algorithm known 
as  AMP-MMSE achieves the right-most fixed point of (\ref{e:fix-pointeq1}) -- (\ref{e:fix-pointeq}). 
This coincides with the  optimal MAP-SBS performance when this is the valid fixed point, corresponding to the minimum of $\Ic_1$. 
Instead, when there are multiple fixed points and the 
left-most fixed point is the valid one, the MAP-SBS estimator is strictly better than 
AMP-MMSE. 
Our results lead us to believe that the same behavior holds for a more general class of sensing matrices, as studied in 
this paper. From the examples above we notice that the right-most fixed point is the valid one for $p$ below the phase transition threshold. Above that threshold, either there is only one fixed point, for sufficiently large $p$, or one has to choose the solution that minimizes the
free energy.

\begin{figure}[ht]
	\centering
	\subfigure[Mapping function]{
	    \includegraphics[width=7cm]{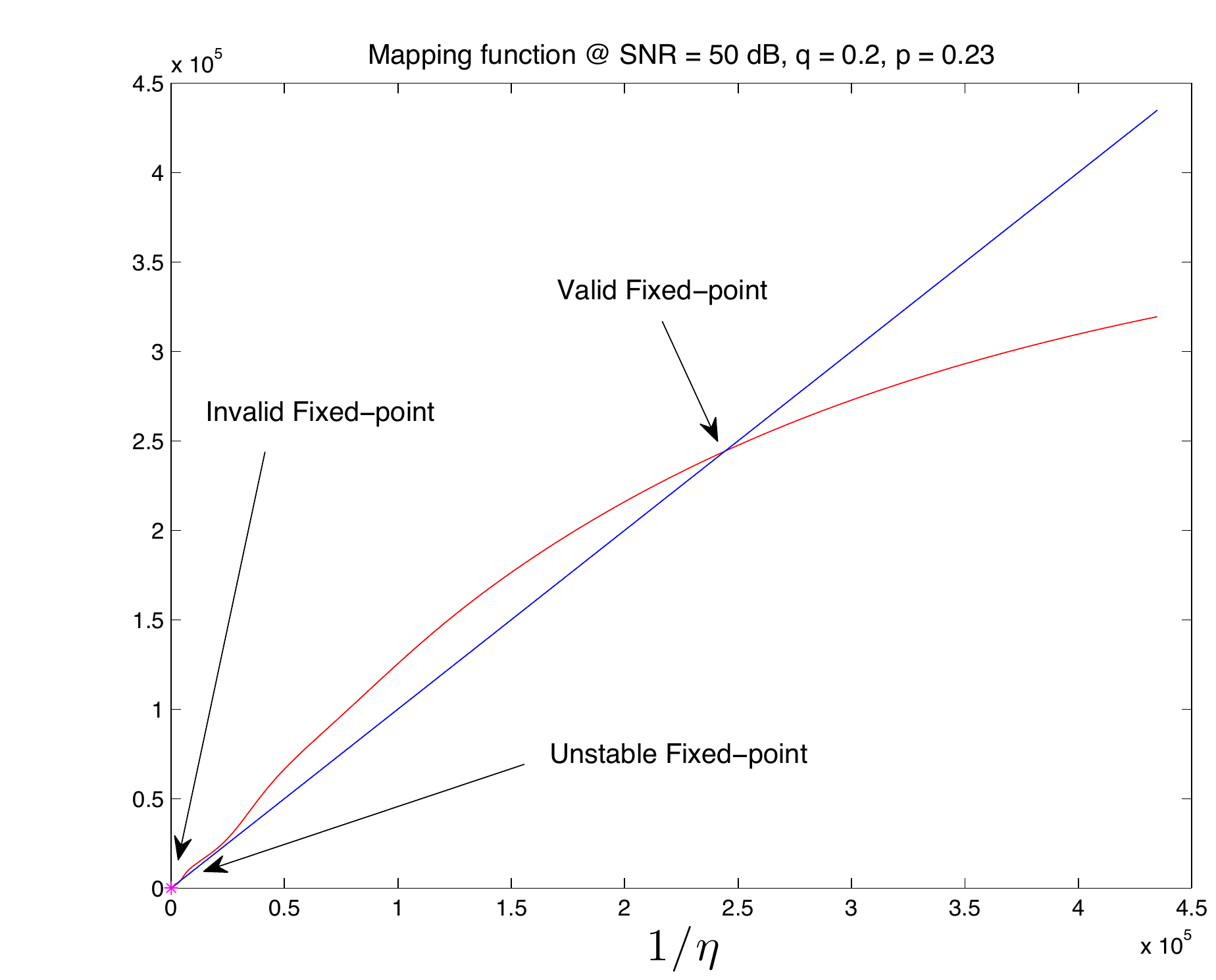}
	    \label{fig:subfig1023}
	}
	\subfigure[Detail near $1/\eta = 0$]{
	    \includegraphics[width=7cm]{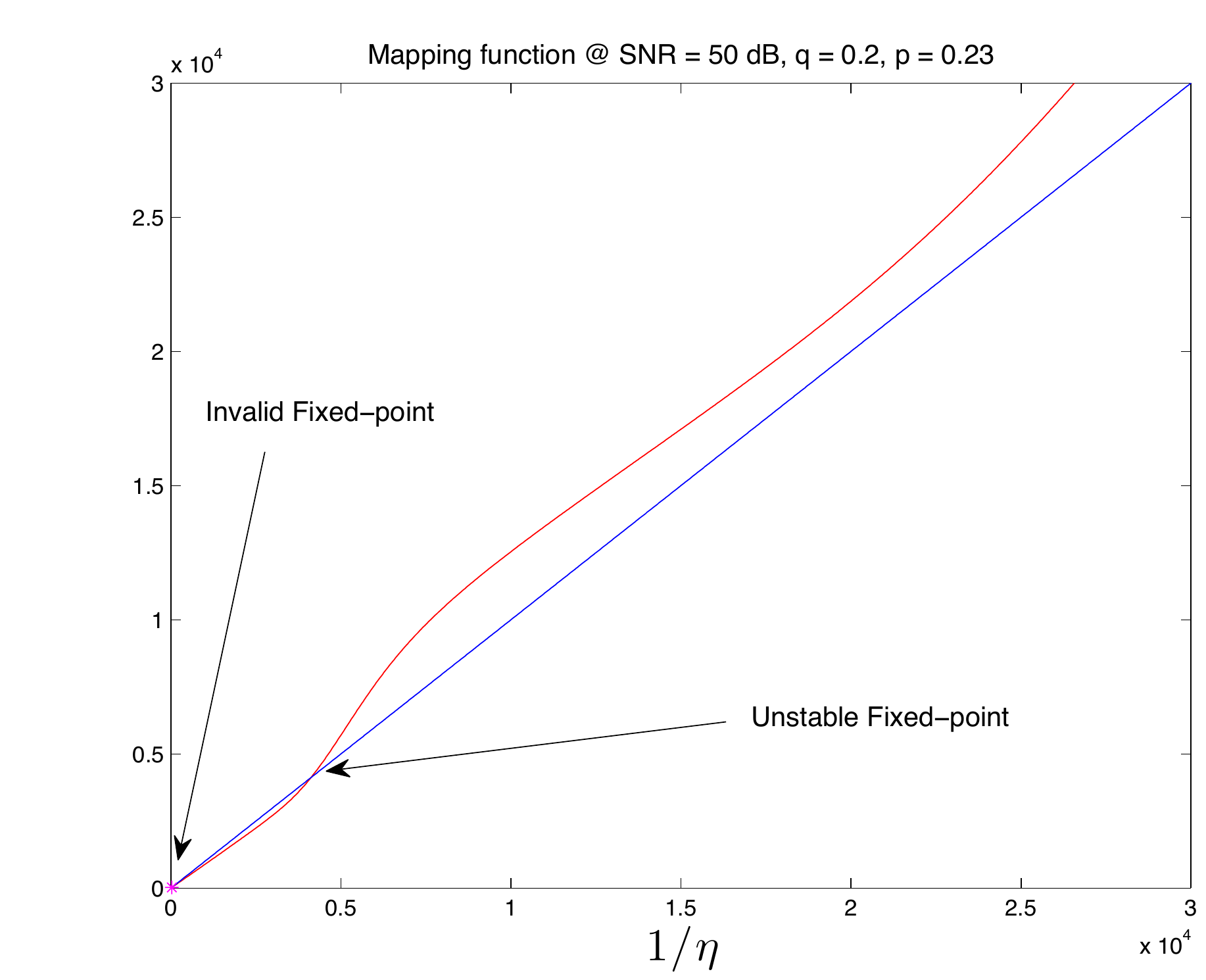}
	    \label{fig:subfig2023zoom}
	}
	\subfigure[Free energy]{
	    \includegraphics[width=7cm]{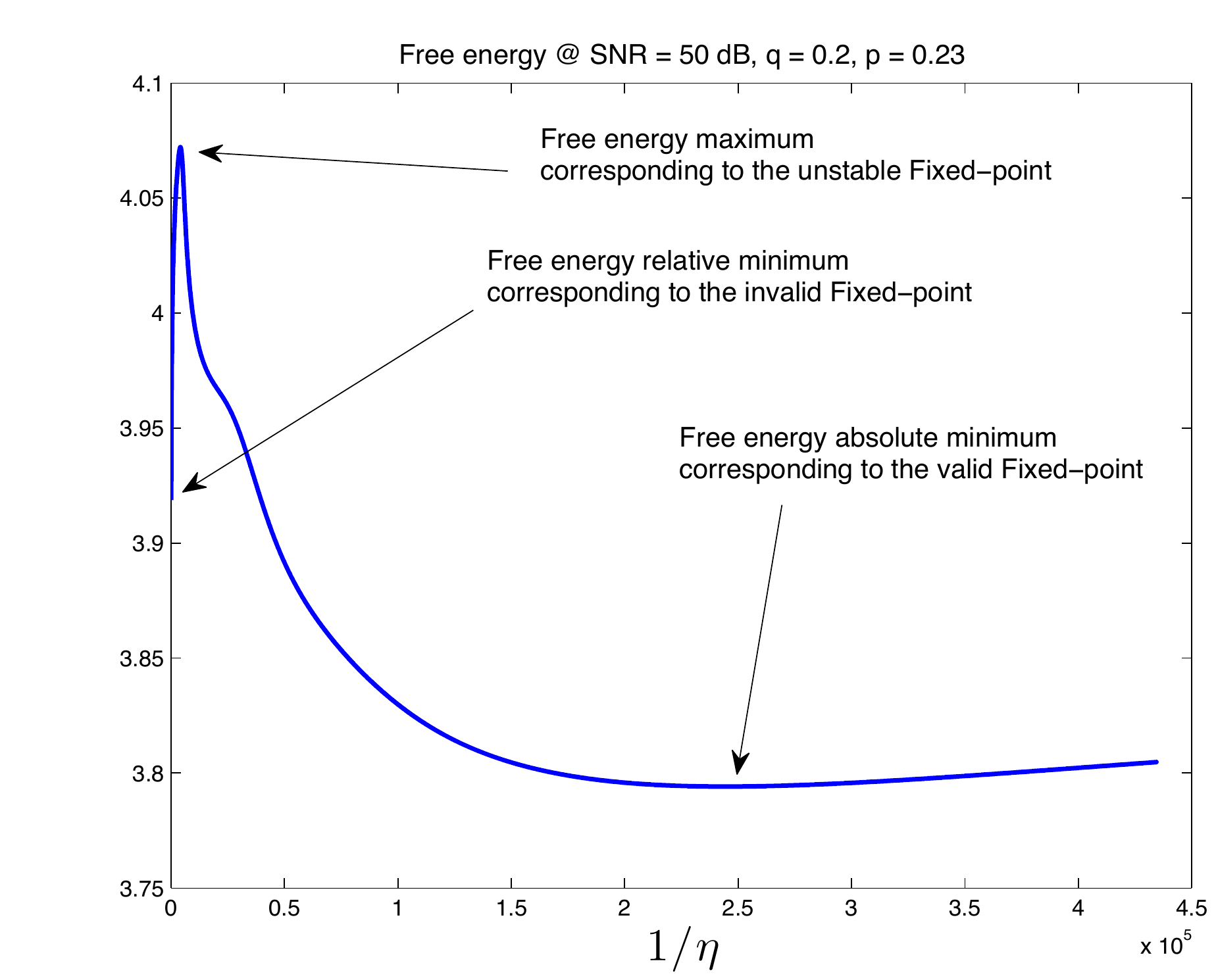}
	    \label{freeen023}
	}	
	\subfigure[Free energy detail near $1/\eta = 0$]{
	    \includegraphics[width=7cm]{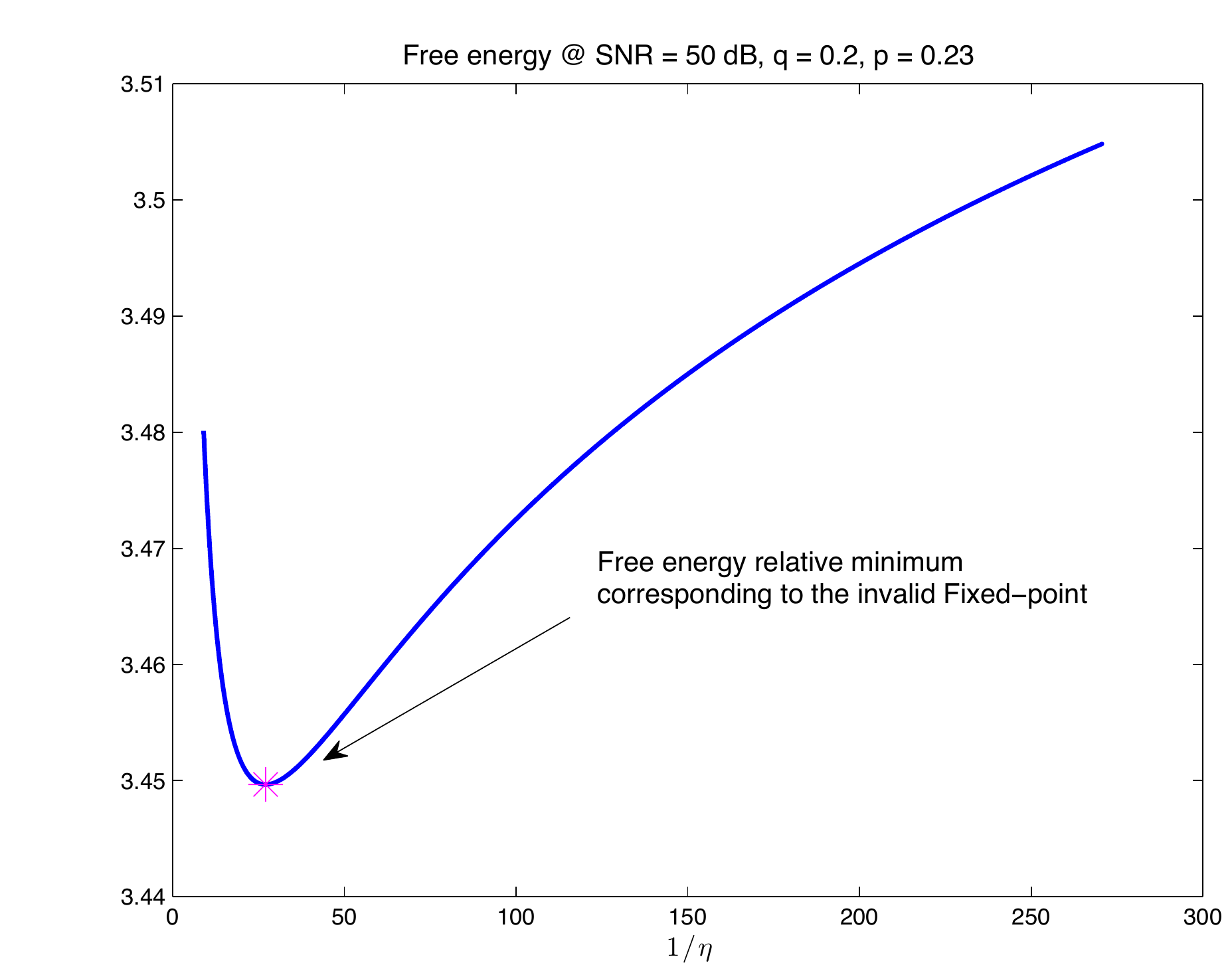}
	    \label{freeen023zoom}
	}	
	\caption{(a) Mapping function for the fixed-point equation (\ref{e:fix-pointeq1}) -- (\ref{e:fix-pointeq}) for $q = 0.2$, $p=0.23$
 and $\SNR = 50$ dB. (b) Detail in order to evidence the unstable fixed point and the left-most fixed point. (c) Corresponding free energy.
 (d) Detail of the free energy for small $1/\eta$ in order to show the minimum corresponding to the left-most fixed point.
	}	   
	\label{mapf023}
	\end{figure}

\begin{figure}[ht]
	\centering
	\subfigure[Mapping function]{
	    \includegraphics[width=7cm]{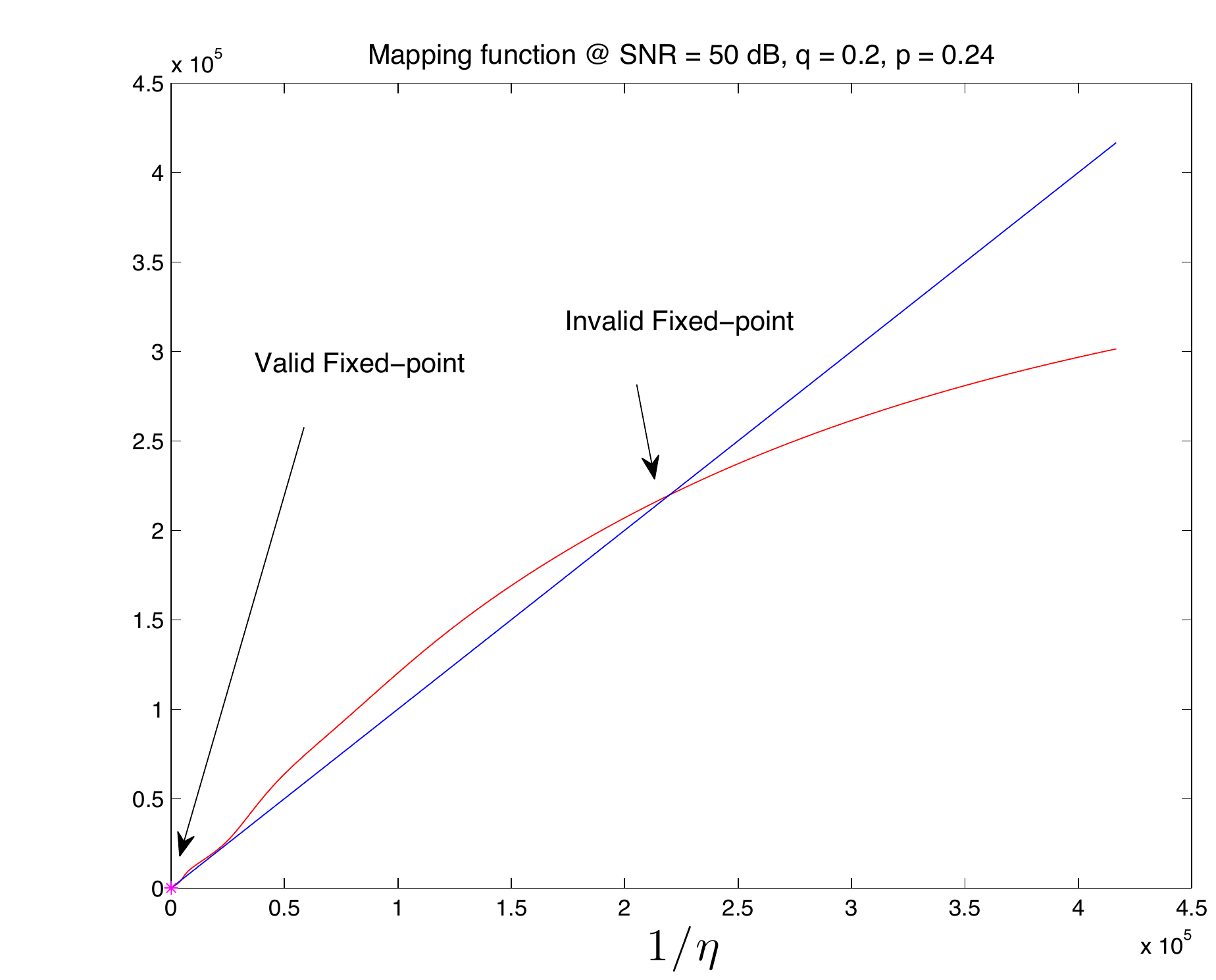}
	    \label{fig:subfig1024}
	}
	\subfigure[Detail near $1/\eta = 0$]{
	    \includegraphics[width=7cm]{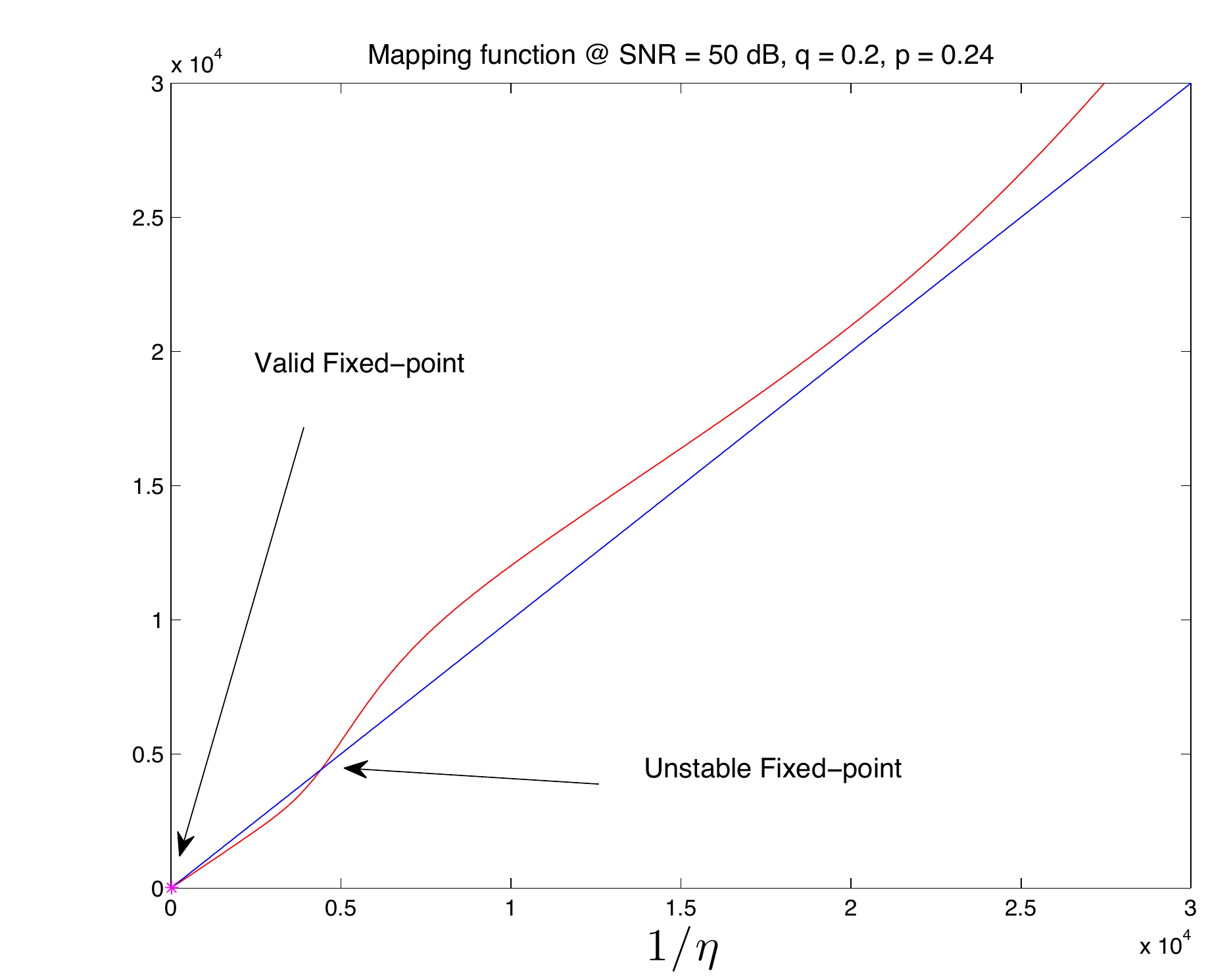}
	    \label{fig:subfig2024zoom}
	}
	\subfigure[Free energy]{
	    \includegraphics[width=7cm]{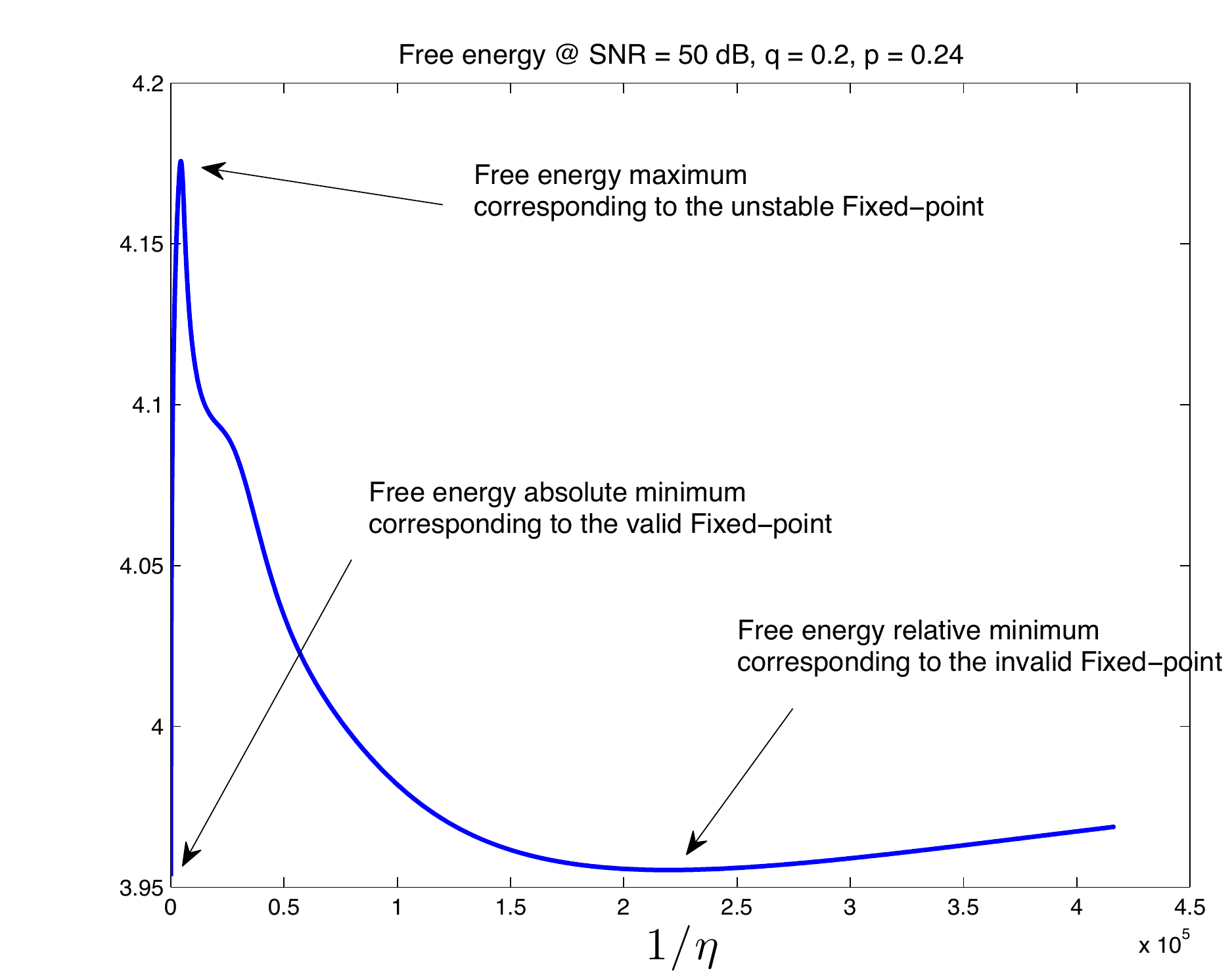}
	    \label{freeen024}
	}	
	\subfigure[Free energy detail near $1/\eta = 0$]{
	    \includegraphics[width=7cm]{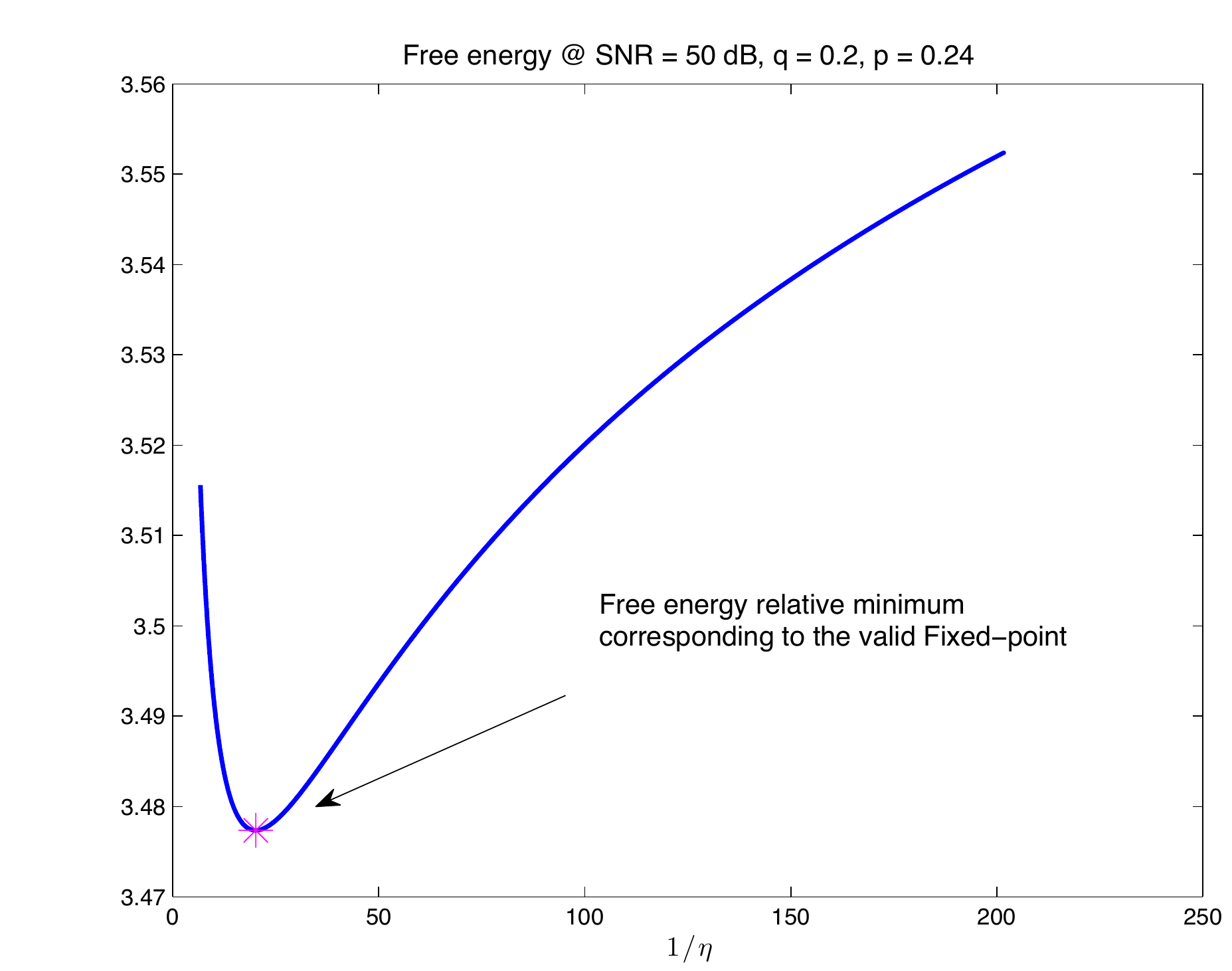}
	    \label{freeen024zoom}
	}	
	\caption{(a) Mapping function for the fixed-point equation (\ref{e:fix-pointeq1}) -- (\ref{e:fix-pointeq}) for $q = 0.2$, $p=0.24$
 and $\SNR = 50$ dB. (b) Detail in order to evidence the unstable fixed point and the left-most fixed point. (c) Corresponding free energy.
 (d) Detail of the free energy for small $1/\eta$ in order to show the minimum corresponding to the left-most fixed point. 
	}	   
	\label{mapf024}
	\end{figure}

\begin{figure}[ht]
	\centering
	\subfigure[Mapping function]{
	    \includegraphics[width=7cm]{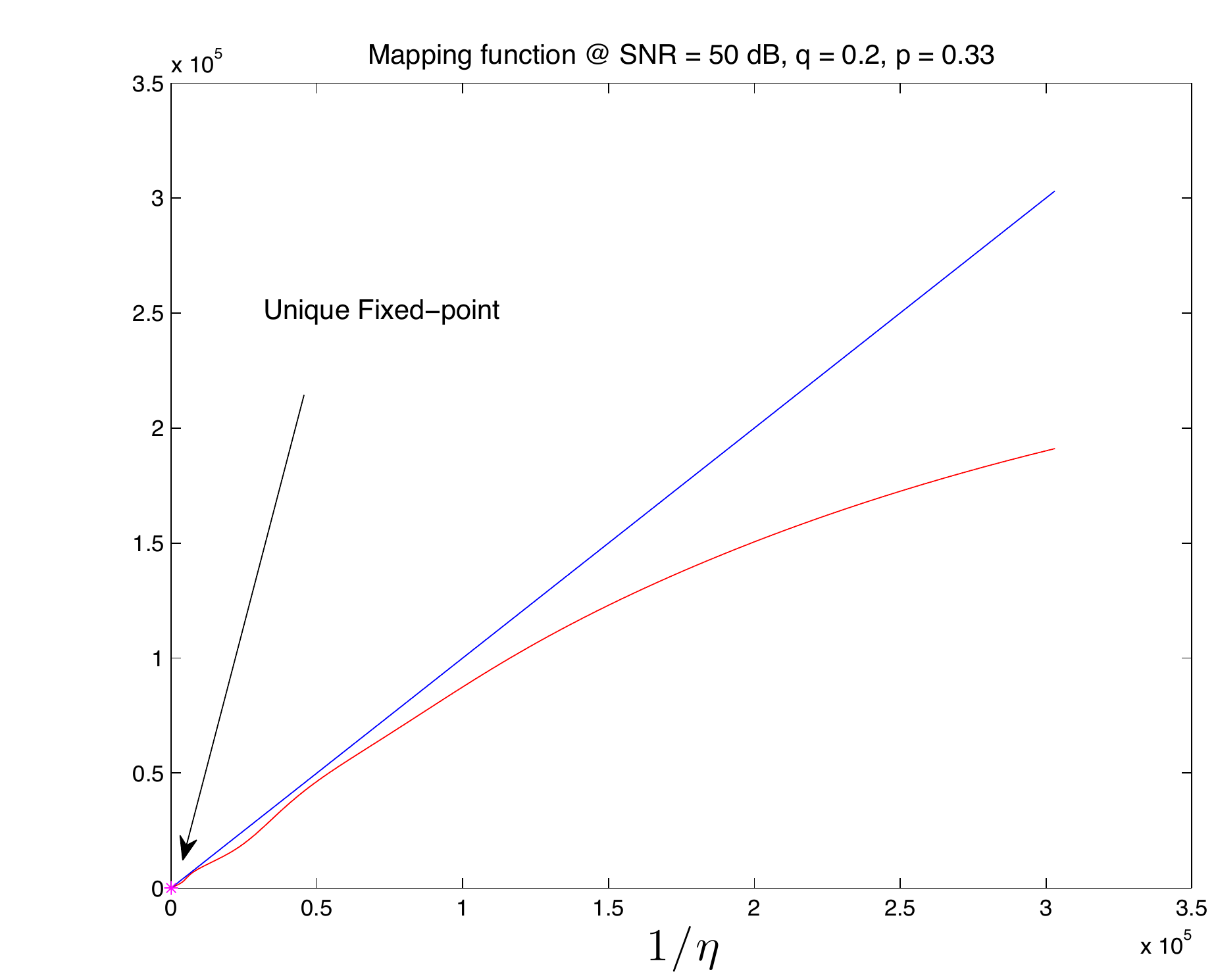}
	    \label{fig:subfig1033}
	}
	\subfigure[Detail near $1/\eta = 0$]{
	    \includegraphics[width=7cm]{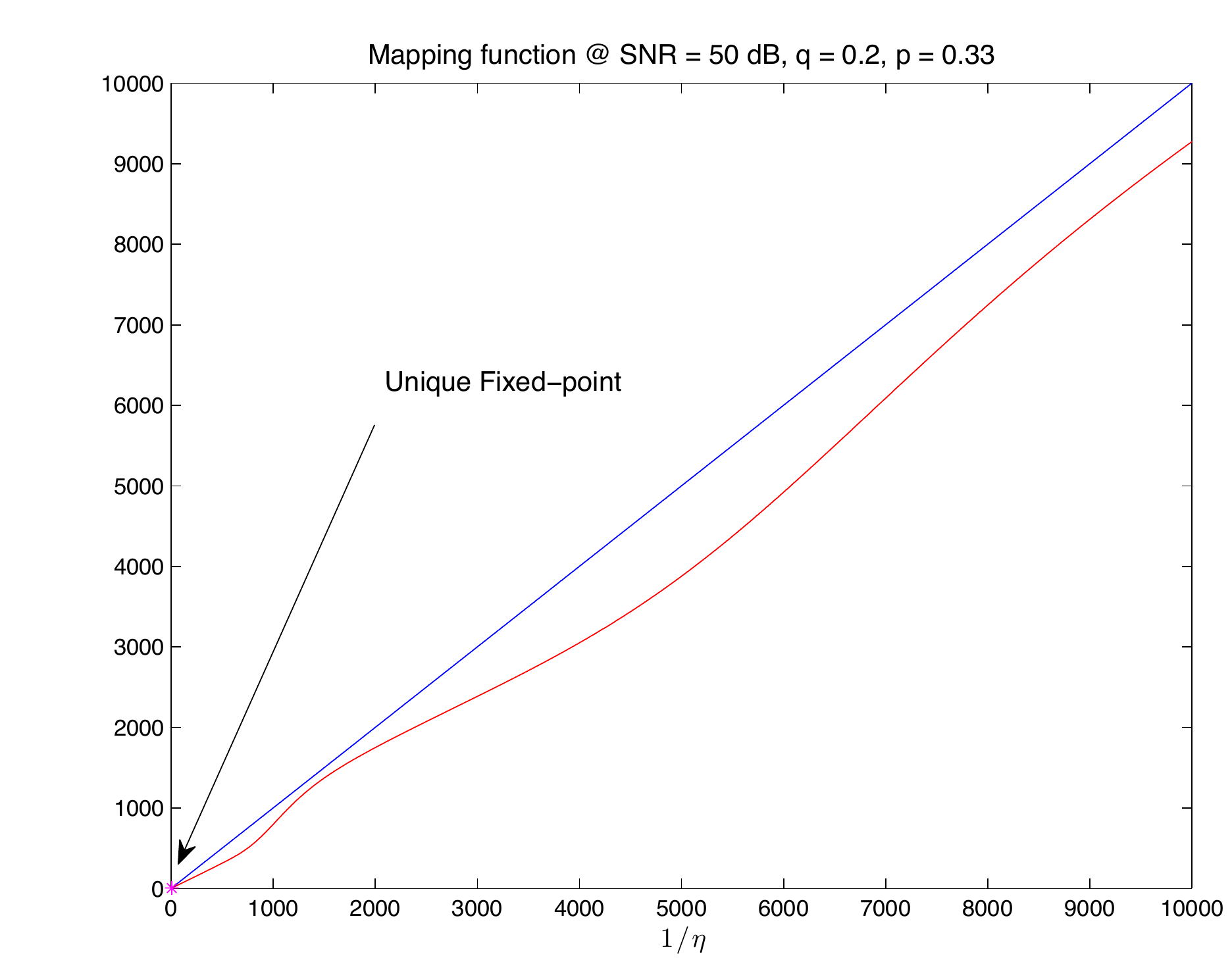}
	    \label{fig:subfig2033zoom}
	}
	\subfigure[Free energy]{
	    \includegraphics[width=7cm]{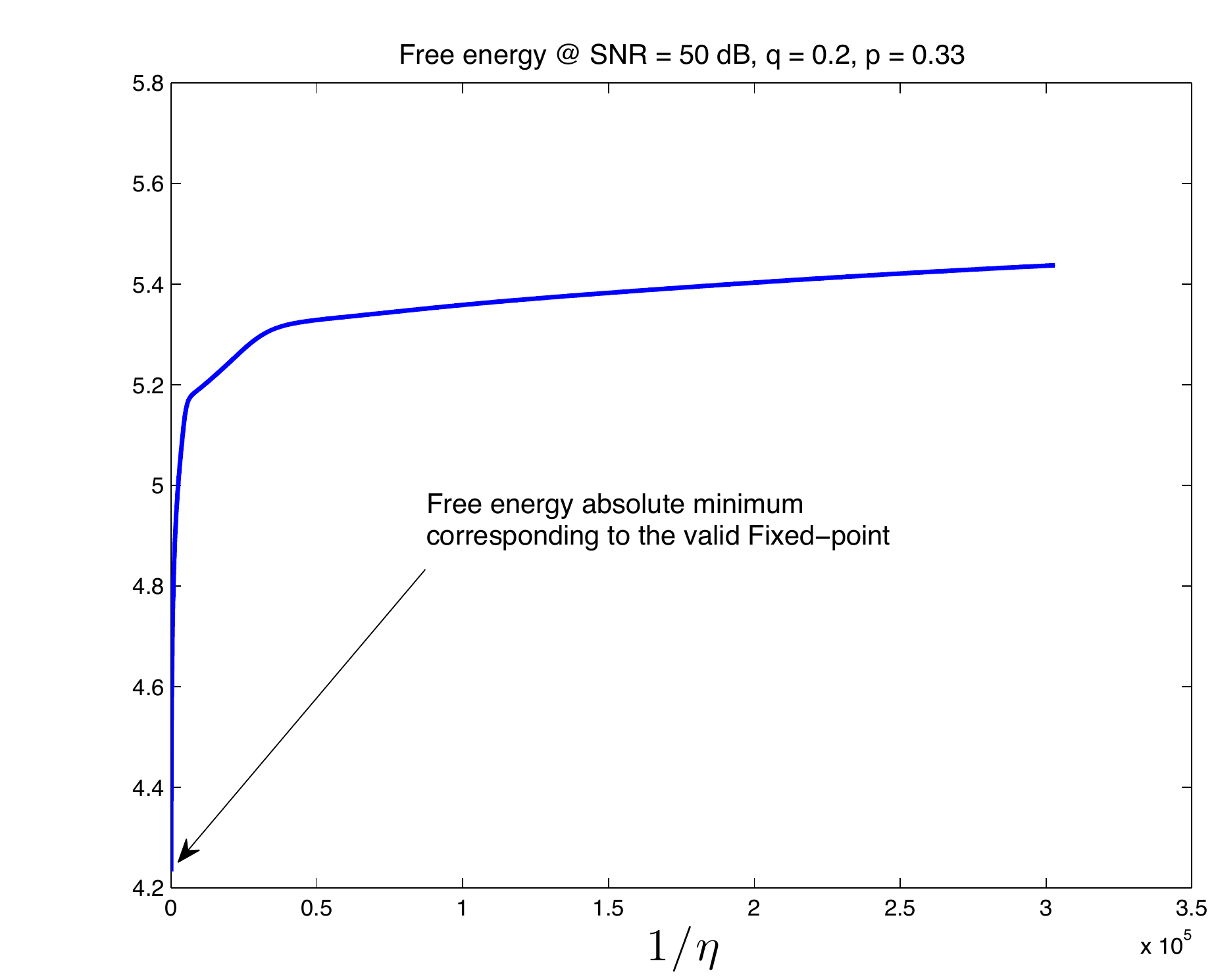}
	    \label{freeen033}
	}	
	\subfigure[Free energy detail near $1/\eta = 0$]{
	    \includegraphics[width=7cm]{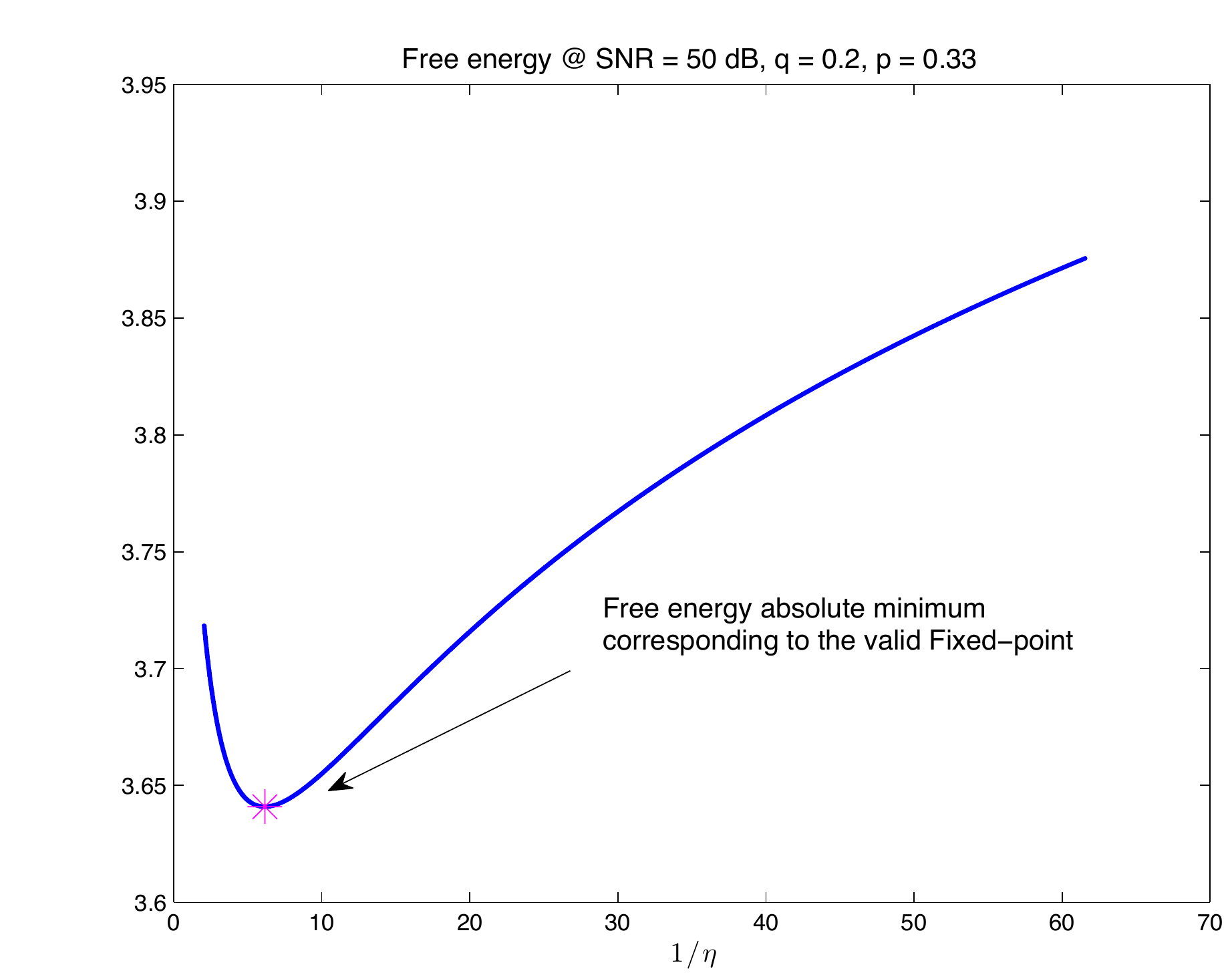}
	    \label{freeen033zoom}
	}	
	\caption{(a) Mapping function for the fixed-point equation (\ref{e:fix-pointeq1}) -- (\ref{e:fix-pointeq}) for $q = 0.2$, $p=0.33$
 and $\SNR = 50$ dB. (b) Corresponding free energy.
 (d) Detail of the free energy for small $1/\eta$ in order to show the minimum corresponding to the left-most fixed point.
	}	   
	\label{mapf033}
	\end{figure}

\section{Analysis of Estimators using the Decoupling Principle}\label{decoupling}

\subsection{Decoupling principle}  \label{decoup-sec}

The {\em decoupling principle} introduced 
by Guo and Verd\'u \cite{guo-verdu} states that 
the marginal joint distribution of each input coordinate and the corresponding estimator coordinate of a class of, 
possibly mismatched,  {\em posterior-mean estimators} (PMEs) converges, 
as the dimension grows,  to a fixed input-output joint distribution that corresponds to 
a ``decoupled'' (i.e., scalar) Gaussian observation model.  The observation model treated by Guo and Verd\'u in 
\cite{guo-verdu} is $\yv = \Sm \Gammam \xv + \zv$,  and the goal is to estimate $\xv$ from $\yv$, while knowing 
$\Sm$ and $\Gammam$, where $\xv$ is an $m \times 1$ iid vector with a given marginal distribution, $\zv$ is the iid Gaussian  noise vector,
$\Sm$ is a random $n \times m$ matrix with iid elements with mean zero and variance $1/n$, and
$\Gammam$ is an $m \times m$ diagonal matrix whose diagonal elements have an empirical distribution converging weakly 
to a given well-behaved distribution. 
Comparing the model of \cite{guo-verdu}  with (\ref{model1}), we notice that as far as the estimation of the 
Bernoulli-Gaussian iid vector $\vv = \Xm \bv$ the two models are similar,  by identifying $\Sm$ with $\Am \Um$,  
$\Gammam$ with $\Id$ and $\xv$ with $\vv$,  with the key difference that we allow a more general class of matrices satisfying 
the freeness condition given at the beginning of Section \ref{sc:setup}. In contrast, as far as the estimation of $\bv$ is concerned, 
our model differs from \cite{guo-verdu} in that in our case the diagonal iid Gaussian matrix $\Xm$ is not known to the estimator. 
 
In this section, we apply the decoupling principle to the estimation of 
$\bv$ for the observation model (\ref{model1}). This allows us to derive the minimum possible support recovery 
error rate for any estimator, achieved by the MAP-SBS estimator. 
The details of the derivations are given in Appendix \ref{proof:decoupling}, and the main results are summarized in the remainder of this section.
We also consider linear MMSE and Lasso \cite{tibshirani}, 
two popular estimators in the  compressed sensing literature. 
These estimators first produce an estimate of $\vv$ and then recover an estimate of the support $\bv$ by component wise thresholding. 
In order to analyze the suboptimal estimators, we resort to the decoupling principle for the estimation of $\vv$, which
can be derived along the same lines as Appendix \ref{proof:decoupling} or, equivalently, 
by extending the  analysis of  \cite{guo-verdu} to the class of sensing matrices considered in this paper. 
In \cite{reeves-journal1}, linear MMSE and Lasso estimators are studied for the case of 
iid sensing matrices as special cases of the Approximated Message Passing (AMP) algorithm \cite{donoho-montanary},  
the performance of which is rigorously characterized for $\Um$ with iid Gaussian entries in the 
large dimensional limit through the solution of a {\em state evolution equation} \cite{montanari-dense}.  
The current AMP rigorous analysis does not go through for the more general class of matrices considered here. 
Therefore, we resort to the {\em replica $+$ large deviation approach} of Rangan, Fletcher and Goyal \cite{rangan-replica} in order to 
obtain the decoupled model corresponding to these estimators. Interestingly, when particularizing our results to the iid case, 
we recover the  same AMP state evolution equations as given in \cite{reeves-journal1}.

For the sake of notation simplicity, we shall assume that all random variables and vectors appearing in the following formulas have
a density (possibly including Dirac distributions), indicated by $p$ with the appropriate subscripts and arguments. 
In order to limit the proliferation of symbols, we use the same symbols to indicate random variables (or vectors) and 
the corresponding dummy arguments in the probability distributions. 

The class of estimators for which the decoupling principle holds are {\em mismatched} PMEs where the mismatch is reflected in 
an assumed channel transition probability and symbol a priori probabilities that may not correspond to the actual ones.
We shall reserve the letter $q$ with the appropriate subscripts and arguments to indicate these assumed distributions.
The true conditional channel transition probability of $\yv$ given $\bv, \Am, \Um, \Xm$ of (\ref{model1}) is given by (\ref{transition-model1}). 
The corresponding assumed channel transition probability is given by 
\begin{eqnarray}  \label{assumed-multivariate-gaussian}
q_{\yv|\bv, \Am, \Um, \Xm}(\yv|\bv, \Am, \Um, \Xm) = \left ( \frac{\gamma}{\pi} \right )^n \exp \left ( - \gamma \left \| \yv - \Am\Um\Xm\bv \right \|^2 \right ),
\end{eqnarray}
where the assumed noise variance is $1/\gamma$ instead of 1. We let also $q_{\bv}(\bv) = \prod_{i=1}^n q_b(b_i)$ denote an assumed 
a-priori distribution for $\bv$, not necessarily Bernoulli-$q$.
The mismatched estimator for $\bv$ given $\yv, \Am, \Um$ is given by 
The corresponding PME takes on the form
\begin{eqnarray} \label{PME}
\widehat{\bv}(\yv, \Am, \Um) = \int \bv \; q_{\bv|\yv, \Am, \Um} (\bv|\yv, \Am, \Um) \; d\bv,
\end{eqnarray}
where
\begin{eqnarray} \label{posterior-b}
q_{\bv|\yv, \Am, \Um} (\bv|\yv, \Am, \Um)
= \frac{\int q_{\yv|\bv, \Am, \Um, \Xm}(\yv|\bv, \Am, \Um, \Xm) q_{\bv}(\bv) p_{\xv}(\xv) d\xv}
{\int q_{\yv|\bv, \Am, \Um, \Xm}(\yv|\bv', \Am, \Um, \Xm) q_{\bv}(\bv') p_{\xv}(\xv) d\xv d\bv'},
\end{eqnarray}
and where $p_{\xv}(\xv) = \frac{1}{(\pi \Pc_x)^n} \exp(-\|\xv\|^2/\Pc_x)$ is the $n$-variate iid Complex Gaussian density with components
$\sim \Cc\Nc(0,\Pc_x)$. 

In the {\em matched} case, for $\gamma = 1$ and $q_{\bv}(\bv) \equiv$ Bernoulli-$q$, 
(\ref{PME}) coincides  with the MMSE estimator.~\footnote{This is the PME for the matched statistics, which effectively
minimizes the MSE.}  
By considering general $\gamma$ and $q_{\bv}(\bv)$, we can study of a whole 
family of mismatched 
PMEs through the same unified framework \cite{Tanaka,guo-verdu}.

For the purpose of analysis,  it is convenient to define a virtual multivariate observation model
involving the random vectors $\bv_0 \sim p_{\bv_0}(\bv_0)$, Bernoulli-$q$, the corresponding observation channel output
$\yv = \Am \Um \Xm \bv_0 + \zv$ as in (\ref{model1}), and an intermediate vector $\bv \sim q_{\bv}(\bv)$, not corresponding to 
any physical quantity present in the original model, such that the conditional joint distribution of 
$\bv_0, \yv, \bv$ given $\Am,\Um$ is given by 
\begin{eqnarray} \label{joint-n-variate}
p_{\bv_0}(\bv_0) \ p_{\yv|\bv_0, \Am,\Um}(\yv|\bv_0,\Am,\Um) \ q_{\bv|\yv, \Am,\Um}(\bv|\yv,\Am,\Um),
\end{eqnarray}
with
\begin{eqnarray}  \label{true-multivariate-gaussian}
p_{\yv|\bv_0, \Am, \Um, \Xm}(\yv|\bv_0, \Am, \Um ,\Xm) = \frac{1}{\pi^n} \exp \left ( - \left \| \yv - \Am\Um\Xm\bv_0 \right \|^2 \right ).
\end{eqnarray}
Then, $\widehat{\bv}(\yv, \Am, \Um)$ can be seen as the ``matched'' PME of $\bv$ given $\yv$ with respect to the 
joint probability distribution (\ref{joint-n-variate}). Notice also that (\ref{joint-n-variate}) satisfies the conditional Markov Chain
$\bv_0 \rightarrow \yv \rightarrow \bv$, for given $\Am,\Um$.

The decoupling principle obtained in this paper and proved in Appendix \ref{proof:decoupling}
can be stated as follows.  Let  $(b_{0i}, b_i, \widehat{b}_i)$ denote the $i$-th components of the random vectors
$\bv_0, \bv, \widehat{\bv}(\yv, \Am, \Um)$, obeying the joint 
conditional distribution (\ref{joint-n-variate}) with $\widehat{\bv}(\yv, \Am, \Um)$ given in (\ref{PME}).
Then, in the limit of $n \rightarrow \infty$, under the assumption that the replica-symmetric 
analysis holds (see Appendix \ref{proof:decoupling}),
the joint distribution of $(b_{0i}, b_i, \widehat{b}_i)$ converges to the joint distribution 
of the triple $(B_0, B, \widehat{B})$ induced by
\begin{eqnarray} \label{joint-decoupled}
p_{B_0}(b_0) \ p_{Y|B_0; \eta}(y|b_0) \ q_{B|Y;\xi}(b|y),
\end{eqnarray}
and by $\widehat{B} = \int b \; q_{B|Y;\xi}(b|y) db$, where we define the decoupled channel
\begin{eqnarray} \label{decoupled-channel} 
Y = V_0 + \eta^{-\frac{1}{2}} Z, 
\end{eqnarray}
with $Z \sim \Cc\Nc(0,1)$ and $V_0 = X_0B_0$, with $B_0 \sim p_{B_0}(b_0)$, Bernoulli-$q$, and with
$X_0 \sim \Cc\Nc(0,\Pc_x)$, and where $X_0, B_0$ and $Z$ are mutually independent. 
Also, we define $V = X B$ with  $X \sim \Cc\Nc(0,\Pc_x)$ and $B \sim q_B(b)$ identically distributed as the 
the marginals of the assumed prior distribution $q_{\bv}(\bv)$. We let $p_X(\cdot)$ denote the common
density of $X_0$ and $X$, and define the following probability densities for the variables
$V_0, Y, V, B_0$ and $B$:
\begin{eqnarray}
p_{Y|V_0; \eta}(y|v_0)  & = & \frac{\eta}{\pi} \exp \left ( - \eta \left | y - v_0 \right |^2 \right ) \label{true-yv} \\
p_{Y|B_0; \eta}(y|b_0)  & = & \int p_{Y|V_0; \eta}(y|x_0 b_0) p_{X}(x_0) dx_0 \label{true-yb0} \\
q_{Y|V; \xi}(y|v)  & = & \frac{\xi}{\pi} \exp \left ( - \xi \left | y - v \right |^2 \right ) \label{fake-yv} \\
q_{Y|B; \xi}(y|b) & = & \int q_{Y|V; \xi}(y|xb) p_{X}(x) dx \label{fake-yb} \\
q_{B|Y;\xi}(b|y) & = & \frac{q_{Y|B; \xi}(y|b) q_B(b)}{\int q_{Y|B; \xi}(y|b') q_B(b') db'},
\end{eqnarray}
where the parameters $\eta$ and $\xi$ are obtained by  solving the system of fixed-point equations\footnote{We use
the dot notation $\dot{f}(x)$ to denote the first derivative  of a single-variate function $f$ with respect to its argument.}
\begin{subequations}
\begin{eqnarray}
\chi & = & \gamma \, \mmse( V | Y) \label{saddle-final-chi} \\
\delta & = & \EE \left [ |V_0 - \EE[V|Y] |^2 \right ] \label{saddle-final-delta} \\
\xi & = & \gamma\,  \Rc_{\Rm}(-\chi) \label{saddle-final-xi} \\
\eta & = & \frac{(\xi/\gamma)^2}{\xi/\gamma  + \dot{\Rc}_{\Rm}(- \chi) (\delta - \chi)}. \label{saddle-final-eta}
\end{eqnarray}
\end{subequations}
The expectations in (\ref{saddle-final-chi}) -- (\ref{saddle-final-eta}) are defined  with respect to the joint distribution of $V_0, Y, V$ given by 
\begin{eqnarray} \label{joint-decoupled-V}
p_{V_0}(v_0) \ p_{Y|V_0; \eta}(y|v_0) \ q_{V|Y;\xi}(v|y),
\end{eqnarray}
where $p_{V_0}(v_0)$ is the Bernoulli-Gaussian distribution of $V_0 = X_0B_0$,
$p_{Y|V_0; \eta}(y|v_0) $ is given in (\ref{true-yv}) and where
\begin{eqnarray}
q_{V|Y;\xi}(v|y) = \frac{q_{Y|V; \xi}(y|v) q_V(v)}{q_{Y;\xi}(y)},
\end{eqnarray}
with $q_{Y|V; \xi}(y|v)$ given in (\ref{fake-yv}), $q_V(v)$ is the distribution of $V = XB$, and
\begin{eqnarray}  \label{qY-ziobono}
q_{Y; \xi}(y)  = \int q_{Y|V;\xi}(y|v) q_V(v) dv.
\end{eqnarray}
In passing, notice also that (\ref{joint-decoupled}) and (\ref{joint-decoupled-V})  satisfy the
Markov Chains $B_0 \rightarrow Y \rightarrow B$ and $V_0 \rightarrow Y \rightarrow V$, 
respectively.

If the solution to (\ref{saddle-final-chi}) -- (\ref{saddle-final-eta})  is not unique, then
we have to select the solution that {\em minimizes} the system ``free energy'' (expressed in nats):
\begin{align} \label{free-energy-babau-final}
\Ec  & = \log\frac{\xi}{\gamma}  - \frac{\xi}{\eta} + \gamma - \xi\chi +
\left (\frac{\xi}{\eta}  -1 \right )  \frac{\xi \chi}{\gamma}  +  \int_0^{\chi} \Rc_{\Rm}(-w)dw  - \EE \left [  \log \left ( q_{Y; \xi}(Y)  \right ) \right ]. 
\end{align}
As expected, by letting $\gamma = 1$ and 
$q_B(b)$ Bernoulli-$q$  we obtain $\xi = \eta$ and $\delta = \chi$ and (\ref{saddle-final-chi}) -- (\ref{saddle-final-eta}) reduce 
to (\ref{e:fix-pointeq1}) -- (\ref{e:fix-pointeq}). It is also immediate to see that in this case we have 
$\Ec = \Ic_1 + \log(\pi e)$ where $\Ic_1$ is given  in (\ref{1l}). 

By particularizing our analysis to the case of $\Um$ with iid elements, 
using (\ref{R-transfromGV}), we obtain the simpler fixed-point equations 
\begin{subequations}
\begin{eqnarray} 
\frac{1}{\eta} & = & \frac{1}{p} \left ( 1 + \EE \left [ |V_0 - \EE[V|Y] |^2 \right ] \right ) \label{saddle-final-eta-iid} \\
\frac{1}{\xi} & = & \frac{1}{p} \left ( \frac{1}{\gamma} + \mmse( V | Y) \right ),  \label{saddle-final-xi-iid}
\end{eqnarray}
\end{subequations}
which recovers the results of \cite{guo-verdu,rangan-replica,guo-baron-shamai} up to a different 
normalization as discussed in the first example of Section \ref{section:specialcases}.

\subsection{Symbol-by-symbol MAP estimator}  \label{sec:map-sbs}

As an application of the decoupling principle, we can determine the minimum achievable $D(p,q,\Pc_x)$
by particularizing the above formulas for the MAP-SBS estimator of $b_{i}$ given 
$\yv, \Am,\Um$, operating according to the optimal decision rule
\begin{eqnarray} \label{MAP}
\widehat{b}_{i}(\yv, \Am, \Um) = \arg \max_{b \in \{0,1\}} \PP [ b_{i} = b | \yv, \Am, \Um ].
\end{eqnarray}
It is well-known that the MAP-SBS minimizes the support recovery error rate over all possible estimators.
A byproduct of the decoupling principle is that, in the matched case, (\ref{joint-decoupled}) yields immediately
that the limiting posterior marginal $\PP [ b_{i} = b | \yv, \Am, \Um ]$ for a randomly chosen $i$-th component of $\bv$ 
is given by $p_{B_0|Y; \eta}(b_0|y)$, the posterior distribution of the decoupled channel (\ref{decoupled-channel}), 
marginalized with respect to $B_0$. In the matched case, (\ref{saddle-final-chi}) -- (\ref{saddle-final-eta})
reduce to (\ref{e:fix-pointeq1}) -- (\ref{e:fix-pointeq}) in Theorem \ref{th:x}, and
$p_{B_0|Y; \eta}(b_0|y)$ is easily obtained by 
noticing that
%
$Y$ given $B_0$ is conditionally distributed as
\begin{eqnarray} \label{Y-given-B_0}
p_{Y|B_0;\eta}(y|b_0) = \frac{1}{\pi(\Pc_x |B_0|^2 + 1/\eta)} \exp\left ( - \frac{|y|^2}{\Pc_x |B_0|^2 + 1/\eta} \right ),
\end{eqnarray}
i.e.,  $Y \sim \Cc\Nc\left(0, \Pc_x +1/\eta\right)$ for $B_0 = 1$ and $Y \sim \Cc\Nc\left(0,1/\eta\right)$ for $B_0 = 0$. Then,
\begin{eqnarray}  \label{PB0=1|Y}
\PP[B_0 = 1| Y = y] = \frac{1}{1+ \frac{1 - q}{q} ( 1 + \eta \Pc_x ) \exp\left(- \eta \Pc_x \mu  |y|^2 \right)},
\end{eqnarray}
(obviously $\PP[B_0 = 0|Y=y] = 1 - \PP[B_0 = 1|Y=y]$)
where $\eta$ is obtained from (\ref{e:fix-pointeq1}) -- (\ref{e:fix-pointeq}) and where we define:
\begin{eqnarray} \label{mudef}
\mu = \frac{\eta}{1 + \Pc_x\eta}.
\end{eqnarray}
The resulting MAP-SBS estimator is
\begin{eqnarray} \label{map-decoupled} 
\widehat{B}(y) = \arg\max_{b_0 \in \{0,1\}} \; \PP[B_0 = b_0 | Y= y], 
\end{eqnarray}
with decision $\widehat{B}(y) = 1$ if
\begin{eqnarray}
\frac{1 - q}{q} ( 1 + \eta \Pc_x ) \exp \left(- \eta \Pc_x \mu |y|^2 \right) < 1
\end{eqnarray}
(with randomization on the boundary). Taking the logarithm of both sides, we find the ``energy detector'' 
(analogous to non-coherent on-off modulation with fading) given by
\begin{eqnarray} \label{energy-detect}
\widehat{B}(y) = \left \{
\begin{array}{ll}
1, & \mbox{for} \; |y|^2 \geq  \tau \\
0, & \mbox{elsewhere}
\end{array}
\right .
\end{eqnarray}
with
\begin{eqnarray}
\tau = \frac{1}{\eta \Pc_x \mu} \log  \frac{(1 - q)(1 + \eta\Pc_x)}{q}.
\end{eqnarray}
We have $\widehat{B}(y) = 1$, regardless of the value of $y \in \CC$,
if  $q > \frac{1 + \eta \Pc_x }{2 + \eta \Pc_x }$, in which case $D(p,q,\Pc_x) = 1 - q$. 
Otherwise,
\begin{eqnarray} \label{distotionexcat}
D(p,q,\Pc_x) =  q \left ( 1 - \exp \left (-\mu \tau \right ) \right ) + (1 - q) \exp
\left ( - \eta \tau \right ),
\end{eqnarray}
obtained from (\ref{energy-detect}) by observing that $|Y|^2$, conditioned on $B_0$, is central chi-square 
with two degrees of freedom with mean $\Pc_x + 1/\eta$ for $B_0 = 1$ 
and with mean $1/\eta$ for $B_0 = 0$.

For $\Um$ with iid elements, we can recover known results. In this case, 
(\ref{e:fix-pointeq1}) -- (\ref{e:fix-pointeq}) reduce to  (\ref{matched-mmse-fixed-point}), which corresponds
to the replica analysis of the MMSE estimator obtained in \cite{guo-verdu} and summarized in
\cite{reeves-journal1} in the context of support recovery in compressed sensing. 
When the iterative solution of the fixed-point equation (\ref{matched-mmse-fixed-point}) is initialized 
by $1/\eta = (1 + q \Pc_x)/p$,  then the iteration converges to the solution of  the so-called ``AMP-MMSE''  state equation given in 
\cite[Th. 6]{reeves-journal1}. In brief, by this initialization the iterative solution converges always to the right-most
fixed point of the mapping function (see Figs. \ref{mapf023} -- \ref{mapf033} and related discussion).  
Instead, if the {\em valid fixed-point} is chosen, i.e., the solution which minimizes the free energy 
$\Ic_1$, then  we obtain the so-called ``replica MMSE solution''  of \cite[Th. 8]{reeves-journal1}. 

Next, we discuss the threshold for perfect support reconstruction in the noiseless case, i.e., in the limit of
$\Pc_x \rightarrow \infty$, and $q > 0$. From Theorem \ref{th:totallysupercool} we already know that
vanishing $D(p,q,\Pc_x)$ cannot be achieved for any $p \leq q$. We now show that $D(p,q,\Pc_x)$ vanishes
for large $\Pc_x$ for all $p > q$.  This has previously been shown for both
optimal nonlinear measurement schemes and for Gaussian iid sensing matrices in \cite{wuverdunoisy}.
Therefore, the conclusion about the asymptotic optimality of Gaussian iid sensing matrices found in \cite{wuverdunoisy}
extends to sparsely sampled free random matrices.
We start by recalling the following general result from \cite{wuverdummse}:

\begin{theorem}
\label{thm:wuverdugral} 
Let $V$ is a discrete-continuous mixed
distribution, i.e. such that its distribution can be represented as
\begin{eqnarray} 
\nu = (1-\rho) \nu_d + \rho \nu_c, 
\end{eqnarray}
where $\nu_d$ is a discrete distribution and $\nu_c$ is an
absolutely continuous distribution, and $0 \leq\rho \leq 1$. 
Then, for $Z \sim \Cc\Nc(0,1)$ we have
\begin{eqnarray}
\mmse( V | \sqrt{\snr}  {V} + Z) = \frac{\rho}{\snr}  + o\left(\frac{1}{\snr}\right).
\end{eqnarray}
\hfill \QED
\end{theorem}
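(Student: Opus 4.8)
\emph{Proof sketch (following \cite{wuverdummse}).} The plan is to split $\nu$ into its discrete and continuous parts by means of a latent label, reduce the claim to the high-SNR behaviour of each part in isolation, and then argue that the coupling between the two parts does not contribute at the scale $1/\snr$.

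First I would introduce an independent label $B\in\{d,c\}$ with $\PP[B=c]=\rho$ and set $V=V_d\sim\nu_d$ on $\{B=d\}$ and $V=V_c\sim\nu_c$ on $\{B=c\}$, so $V\sim\nu$; write $Y=\sqrt{\snr}\,V+Z$, $m_b(y)=\EE[V\mid Y=y,B=b]$ and $w_b(y)=\PP[B=b\mid Y=y]$. Applying the law of total variance to the further conditioning on $B$, and re-weighting the output law by Bayes' rule, yields the exact identity
\[
\mmse(V|Y)=(1-\rho)\,\mmse\!\big(V_d\mid\sqrt{\snr}V_d+Z\big)+\rho\,\mmse\!\big(V_c\mid\sqrt{\snr}V_c+Z\big)+\Delta_{\snr},
\]
with $\Delta_{\snr}=\EE\!\big[w_d(Y)\,w_c(Y)\,|m_d(Y)-m_c(Y)|^2\big]\ge 0$. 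It then remains to show: (A) $\snr\cdot\mmse(V_c\mid\sqrt{\snr}V_c+Z)\to 1$ for absolutely continuous $V_c$ of finite variance; (B) $\snr\cdot\mmse(V_d\mid\sqrt{\snr}V_d+Z)\to 0$ for discrete $V_d$ of finite variance; and (C) $\snr\cdot\Delta_{\snr}\to 0$. Note (A) already gives the lower bound $\liminf_{\snr\to\infty}\snr\cdot\mmse(V|Y)\ge\rho$ upon discarding the nonnegative discrete and cross terms, so the real content is the matching upper bound.

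For (A) the bound $\mmse(V_c|Y)\le\EE\,|V_c-\snr^{-1/2}Y|^2=\snr^{-1}$ settles the $\limsup$; for the $\liminf$ I would rescale the posterior, observing that $\sqrt{\snr}\,(V_c-\snr^{-1/2}Y)$ given $Y=y$ has density proportional to $e^{-|u|^2}f_c(\snr^{-1/2}y+\snr^{-1/2}u)$, which converges pointwise a.e.\ to the standard complex Gaussian by continuity of a representative of $f_c$, so the conditional variance tends to $1$ and finite variance supplies the uniform integrability needed to pass the limit under the expectation. For (B) I would use the suboptimal estimator that rounds $\snr^{-1/2}Y$ to the nearest atom: it is exact unless $\snr^{-1/2}Y$ leaves the half-gap around the true atom. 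Splitting the atom set $\mathcal A$ by whether the nearest-neighbour gap is below or above $\epsilon_{\snr}:=c\,\snr^{-1/2}\log\snr$, the first class has $\nu_d$-mass $\nu_d(S_{\epsilon_{\snr}})$ where $S_\epsilon=\{a\in\mathcal A:\mathrm{dist}(a,\mathcal A\setminus\{a\})<\epsilon\}$; since each atom of a fixed discrete law is isolated, $\bigcap_{\epsilon>0}S_\epsilon=\emptyset$, so $\nu_d(S_{\epsilon_{\snr}})\to 0$; on the second class the escape probability is the Gaussian tail $\PP[|Z|\ge\tfrac12 c\log\snr]$, which is super-polynomially small. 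A truncation of the (finite-variance) large-error event then gives $\snr\cdot\mmse(V_d|Y)\to 0$.

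The hard part will be (C). The idea is to localize on $\{\snr^{-1/2}Y \text{ within } \epsilon_{\snr} \text{ of } \mathcal A\}$ and its complement, using the Bayes identity $\EE[w_d(Y)\,g(Y)\,\mathbf 1_E]=(1-\rho)\,\EE_{Y\mid B=d}[g(Y)\,\mathbf 1_E]$. Off a neighbourhood of $\mathcal A$, this identity plus a Gaussian tail bound makes the contribution super-polynomially small after truncating $m_d$; near an atom $a$, isolatedness forces $m_d(Y)$ to within $O(\epsilon_{\snr})$ of $a$ up to exponentially small leakage, and local continuity/boundedness of $f_c$ forces $m_c(Y)$ to within $O(\epsilon_{\snr})$ of $\snr^{-1/2}Y$ and hence of $a$, so $|m_d(Y)-m_c(Y)|^2=O(\epsilon_{\snr}^2)$ there, while $w_d(Y)w_c(Y)\le w_c(Y)$ is itself small precisely where the competing atom carries non-negligible mass. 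The delicate bookkeeping — organising the atoms by the level of their mass relative to $\snr^{-1}$, for which neither component dominates the posterior, and checking that the residual $\nu_c$-mass of the union of their shrinking neighbourhoods vanishes for a \emph{fixed} $\nu$ — is the technical core of \cite{wuverdummse}, and is exactly what upgrades an easy $O(1/\snr)$ bound to the claimed $o(1/\snr)$. Specializing to $\nu=(1-q)\delta_0+q\,\Cc\Nc(0,\Pc_x)$ (so $\rho=q$) gives $\mmse(V_0\mid\sqrt{\snr}V_0+Z)=q/\snr+o(1/\snr)$, consistent with the vanishing role of the point mass at the origin in the high-SNR analysis of Section~\ref{section:specialcases}.
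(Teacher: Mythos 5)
The paper does not actually prove this theorem; it is stated as an import from Wu and Verd\'u \cite{wuverdummse}, with the closing QED mark placed at the end of the statement rather than after a proof. There is therefore no internal proof to compare against, and what you have written is a blind reconstruction of the argument of \cite{wuverdummse}.

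Your total-variance decomposition through the latent label $B$ is correct: writing $Y=\sqrt{\snr}\,V+Z$, one has exactly $\mmse(V|Y)=(1-\rho)\,\mmse(V_d|\sqrt{\snr}V_d+Z)+\rho\,\mmse(V_c|\sqrt{\snr}V_c+Z)+\Delta_\snr$ with $\Delta_\snr=\EE\left[w_d(Y)w_c(Y)\,|m_d(Y)-m_c(Y)|^2\right]\ge 0$, and (A) together with nonnegativity already gives $\liminf_{\snr}\snr\cdot\mmse\ge\rho$. Step (A) is also essentially sound. The concrete gap is in (B): the claim that every atom of a discrete law is isolated is false. For instance, take $\nu_d=\sum_n 2^{-n}\delta_{a_n}$ with $\{a_n\}$ an enumeration of $\QQ\cap[0,1]$; then \emph{every} atom has zero distance to the rest of the atom set, $S_\epsilon=\mathcal{A}$ for all $\epsilon>0$, and the small-gap / large-gap dichotomy is vacuous. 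The repair is a truncation: approximate $\nu_d$ by a finitely supported $\nu_d^{(N)}$ (which does have a positive minimum gap and hence exponentially small MMSE at high $\snr$) and bound the cost of the leaked mass $\nu_d(\mathcal{A}\setminus\mathcal{A}_N)$ uniformly in $\snr$; this is the route taken in \cite{wuverdummse}. The same non-isolatedness problem infects the localization in (C), which you rightly flag as the technical core and leave heuristic: the atom-neighbourhood bookkeeping as written does not go through when atoms accumulate, and making it precise is exactly what separates the easy $O(1/\snr)$ bound from the claimed $o(1/\snr)$.
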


We are interested in the behavior of the SNR of the decoupled channel (\ref{decoupled-channel}) resulting from the 
MAP-SBS estimator, given by $q \, \eta \, \Pc_x$, as $\Pc_x \rightarrow \infty$. In particular, 
for given sparsity $0 < q \leq 1$, we are interested in determining the range of sampling rates $p$ for which
$q \, \eta \, \Pc_x \rightarrow \infty$, implying that $D(p,q,\Pc_x) \rightarrow 0$. 
Let $Z$ and $V_0$ be as defined in Claim \ref{th:x}. 
Then, using Theorem \ref{thm:wuverdugral} we can write 
\begin{eqnarray}
\mmse\left( V_0 | V_0 + \eta^{\frac{1}{2}} Z \right) & = &  \Pc_x \; \mmse\left( V_0/\sqrt{\Pc_x} | \sqrt{\Pc_x \eta} V_0 /\sqrt{\Pc_x} + Z \right) \\
& = & \frac{q}{\eta}  +  o\left( 1 \right), \label{mmseexp}
\end{eqnarray}
where, for the time being, we assume that $\Pc_x \eta$ grows unbounded as $\Pc_x \rightarrow \infty$.  
Using (\ref{mmseexp}) into (\ref{e:fix-pointeq1}) -- (\ref{e:fix-pointeq}), for sufficiently large $\Pc_x$ we have
\begin{eqnarray} \eta & = & \Rc_{\Rm}( - \mmse( V_0 | V_0 + \eta^{\frac{1}{2}}{Z})) \\
& \rightarrow & \Rc_{\Rm} \left (-\frac{q}{\eta} \right).
\label{occhibelli}
\end{eqnarray}
For the case of $\Um$ with iid elements, 
using \eqref{R-transfromGV} we obtain 
\begin{eqnarray}
\Rc_{\Rm}^{-1}(z)= 1-\frac{p}{z}
\end{eqnarray}
and solving (\ref{occhibelli}) with respect to $\eta$, we obtain 
\begin{eqnarray}
\eta = p - q .
\end{eqnarray}
In the case of Haar-distributed $\Um$, using \eqref{cool2}, we obtain
\begin{eqnarray}
\Rc_{\Rm}^{-1}(z) &=& \frac{p -z}{(1 - z)z} \\
\eta &=& \frac{p -q}{1-q}
\end{eqnarray}
For $p > q$, in those two cases the solutions are strictly positive
and, consequently, the support recovery error rate vanishes as the SNR grows without bound.
In fact, as we show next, this conclusion holds for the general class of sparsely sample free random matrices.

The goal is to show that $\lim_{\Pc_x \rightarrow \infty} \eta > 0$  for $p > q$, without relying on a closed-form expression 
for the R-transform. This implies  that 
$D(p,q,\Pc_x)$ vanishes for large $\Pc_x$ for all $p > q$.   
Assuming that (\ref{occhibelli}) holds,  using the definition of the R-transform 
as function of the $\eta$-transform given in \cite[Eq. 2.75 Sec. 2.2.5]{fnt} and the definition of $\eta$-transform 
as given in \cite[Sec. 2.2.2]{fnt},  we can rewrite the asymptotic equality $\eta = \Rc_{\Rm}(-q/\eta)$  as:
\begin{eqnarray}
q & = & 1 -  \EE\left[ \frac{1}{1+ s |{\sf R}|^2} \right] 
\label{amorebello}
\end{eqnarray}
where $s$ satisfies
\begin{eqnarray}
\frac{q}{\eta}   = \EE\left[ \frac{ s}{1+ s |{\sf R}|^2} \right]  
\label{sentimento}
\end{eqnarray}
and 
$|{\sf R}|^2$ denotes a random variable distributed as the limiting spectrum of $\Rm$. 

By eliminating $q$ and solving for $\eta$ in (\ref{amorebello}), (\ref{sentimento}) we obtain
\begin{eqnarray}
\eta = \frac{\EE\left[ \frac{|{\sf R}|^2}{1+ s |{\sf R}|^2} \right]}{\EE\left[ \frac{1}{1+ s |{\sf R}|^2} \right]}.
\label{belin-antonia1}
\end{eqnarray}
It is immediate to see that \eqref{belin-antonia1} is strictly positive for any finite $s$ 
(ranging from the mean to the harmonic mean of $|{\sf R}|^2$). 
In view of Property (\ref{boeta}) of the $\eta$-transform,  
\begin{eqnarray} 
1 - p \leq \EE\left[ \frac{1}{1+ s |{\sf R}|^2} \right] \leq 1, 
\end{eqnarray}
we conclude that (\ref{amorebello}) admits a unique positive and finite solution $s$ if and only if $1 - q \in (1 - p, 1]$, i.e., 
for $p > q$. Hence, (\ref{belin-antonia1}) yields $\eta > 0$ for $\Pc_x \rightarrow \infty$, as we wanted to show. 

We conclude this section by providing expressions for the MMSE 
in the estimation of the Bernoulli-Gaussian signal $\vv$ for high SNR. 
For iid $\Um$, we have
\begin{eqnarray}
\label{trieste} 
\mmse\left( V_0 | V_0 + \eta^{\frac{1}{2}} Z \right) &=& -\Rc_{\Rm}^{-1} ( \eta)  \\
\label{mamma} & = &  \frac{p}{\eta}  - 1, 
\end{eqnarray}
while for Haar-distributed $\Um$, we have 
\begin{eqnarray}
\label{trieste1} \mmse\left( V_0 | V_0 +
\eta^{\frac{1}{2}} Z \right) &=& -\Rc_{\Rm}^{-1} ( \eta)  \\
& = & \frac{p -\eta}{(1 - \eta)\eta}.
\end{eqnarray}
Notice that (\ref{mamma}) coincides with the result derived in \cite{wuverdunoisy} 
and that the high-SNR MMSE diverges for $p = q$. Since deleting samples
cannot improve the performance of the optimal MMSE estimator, it diverges for all $0 \leq p \leq q$.

\subsection{Replica analysis of a class of estimators via the large-deviation limit}

The classical noisy compressed sensing problem seeks the estimation of the sparse 
vector $\vv = \Xm \bv$ from $\yv$ in (\ref{model1}) for known $\Am, \Um$.  
Then, $\bv$ can be estimated by componentwise thresholding the estimate of $\vv$. 

A number of suboptimal low-complexity estimators in the compressed sensing literature take on the form 
\begin{eqnarray} \label{general-form}
\widehat{\vv} = \arg \min_{\vv \in \CC^n}  \left \{ \gamma \left \| \yv - \Am\Um \vv \right \|^2 + \sum_{i=1}^n f(v_i) \right \}, 
\end{eqnarray}
for some weighting parameter $\gamma > 0$ and cost function $f : \CC \rightarrow \RR_+$.  

The replica decoupling principle can be used to study the large-dimensional limit performance of 
such class of estimators by following the large-deviation recipe given in \cite{rangan-replica}.  
Briefly, the approach of \cite{rangan-replica} considers a sequence of  mismatched PMEs indexed by a parameter 
$\kappa \in \RR_+$,  where the assumed a priori density for $\vv$ takes on the form
\begin{eqnarray} \label{prior-kappa-v}
q^{(\kappa)}_{\vv}(\vv) = \frac{\exp\left ( - \kappa \sum_{i=1}^n f(v_i) \right )}{\int \exp\left ( - \kappa \sum_{i=1}^n f(z_i) \right ) d\zv}, 
\end{eqnarray}
(assuming that the integral 
converges for sufficiently large $\kappa$), 
and where the assumed transition density is given by 
\begin{eqnarray}  \label{assumed-multivariate-gaussian-map}
q^{(\kappa)}_{\yv|\vv, \Am, \Um}(\yv|\vv, \Am, \Um) = \left ( \frac{\gamma \kappa}{\pi} \right )^n \exp \left ( - \gamma \kappa \left \| \yv - \Am\Um\vv \right \|^2 \right ). 
\end{eqnarray}
Under a number of mild technical assumptions (see \cite{rangan-replica} for details), 
$\widehat{\vv}$ in (\ref{general-form}) can be obtained as the limit of the PME 
\begin{eqnarray} \label{mismatched-pme-kappa}
\widehat{\vv}^{(\kappa)} = \int \vv q^{(\kappa)}_{\vv | \yv, \Am, \Um} (\vv | \yv, \Am, \Um) d\vv.
\end{eqnarray}
for $\kappa \rightarrow \infty$.  Furthermore, for $n \rightarrow \infty$ and assuming the validity of the 
replica analysis,  a decoupled scalar channel model in the limit of $\kappa \rightarrow \infty$ can be established such that the joint distribution 
of $(v_{0i}, v_i, \widehat{v}_i)$ converges to the joint distribution of $(V_{0}, V, \widehat{V})$, 
where the form of the joint distribution of $V_0, Y, V$ is again given by (\ref{joint-decoupled-V}) and where $\widehat{V}$ is a 
function of $Y$.  The form of the fixed-point equations yielding $\eta$ and $\xi$ and of $\widehat{V}$ as a function of  $Y$ 
depend  on the specific estimator considered, i.e., on the value of $\gamma$ and on the cost function $f(v)$ in (\ref{general-form}). 
In particular, following in the footsteps of  \cite{rangan-replica} with a few minor variations 
in order to adapt to our case,~\footnote{Details are omitted since they can be easily worked out
from \cite{rangan-replica}. 
} it is not difficult to show that 
$\widehat{V} = \widehat{v}(Y;\xi)$, where we define
\begin{eqnarray} \label{scalar-map}
\widehat{v}(y;\xi) = \arg \min_{v \in \CC} \left \{ \xi | y - v |^2 + f(v) \right \}, 
\end{eqnarray}
and that the 
fixed-point equations yielding $\eta$ and $\xi$ in the limit of $\kappa \rightarrow \infty$ are given by 
\begin{subequations}
\begin{eqnarray}
\chi & = & \gamma \EE\left [ \sigma^2(Y;\xi) \right ] \label{saddle-final-chi-map} \\
\delta & = & \EE \left [ |V_0 - \widehat{v}(Y;\xi) |^2 \right ] \label{saddle-final-delta-map} \\
\xi & = & \gamma \Rc_{\Rm}(-\chi) \label{saddle-final-xi-map} \\
\eta & = & \frac{(\xi/\gamma)^2}{\xi/\gamma  + \dot{\Rc}_{\Rm}(- \chi) (\delta - \chi)}, \label{saddle-final-eta-map}
\end{eqnarray}
\end{subequations}
where 
\begin{eqnarray} \label{sigma-rangan}
\sigma^2(y;\xi) = \lim_{v \rightarrow \widehat{v}(y;\xi)} \frac{|v - \widehat{v}(y;\xi)|^2}{\xi | y - v |^2 + f(v)  - \left [ \xi | y -  \widehat{v}(y;\xi)|^2 + 
f(\widehat{v}(y;\xi)) \right ]},
\end{eqnarray}
When $\Um$ has iid elements, from (\ref{saddle-final-eta-iid}) -- (\ref{saddle-final-xi-iid}) we find
\begin{subequations}
\begin{eqnarray} 
\frac{1}{\eta} & = & \frac{1}{p} \left (1 + \EE \left [ |V_0 - \widehat{v}(Y;\xi) |^2 \right ] \right ) \label{rangan-fixed-point-map-eta} \\
\frac{1}{\xi} & = & \frac{1}{p} \left ( \frac{1}{\gamma} +  \EE\left [ \sigma^2(Y;\xi) \right ] \right ), \label{rangan-fixed-point-map-xi}
\end{eqnarray}
\end{subequations}
which coincide with \cite[Eq. (30a) - (30b)]{rangan-replica}, up to a different normalization 
and the fact that we consider complex circularly symmetric instead of real random variables as in \cite{rangan-replica}.

\subsection{Thresholded linear MMSE estimator} \label{section:mmseTL}

A simple suboptimal estimator for $\vv$ is the linear MMSE estimator, given by 
\begin{eqnarray} \label{lmmse-explicit}
\widehat{\vv} =  \left [ \gamma^{-1} \Id + \Rm \right ]^{-1} \Um^\dagger \Am^\dagger \yv.
\end{eqnarray}
with $\gamma = q\Pc_x$ and $\Rm$ defined in (\ref{RR}). 
It is immediate to verify that (\ref{lmmse-explicit}) can be expressed in the form (\ref{general-form})
by letting $f(v) = |v|^2$. 

Although the asymptotic performance and the decoupled channel model of linear MMSE estimation
can be obtained directly from classical results in large random matrix theory both for iid and for 
Haar-distributed $\Um$ (see \cite{fnt} and references therein), it is instructive to apply the replica large-deviation approach
outlined before. In this way, we can recover known results obtained rigorously by other means, thus 
lending support to the validity of the 
replica-based large-deviation approach. 

Particularizing (\ref{scalar-map}) and (\ref{sigma-rangan}) to the case $f(v) = |v|^2$ we obtain
\begin{eqnarray} \label{scalar-map-lmmse}
\widehat{v}(y;\xi) = \frac{\xi}{1 + \xi} y
\end{eqnarray}
and
\begin{eqnarray} \label{sigma-rangan-lmmse}
\sigma^2(y;\xi) = \frac{1}{1 + \xi}, 
\end{eqnarray}
yielding
\begin{eqnarray}  
\EE \left [ |V_0 - \widehat{v}(Y;\xi) |^2 \right ] & = & \EE \left [ \left |V_0 -\frac{\xi}{1 + \xi} Y \right |^2 \right ] \\
& = & \frac{\gamma + \xi^2/\eta}{(1 + \xi)^2},  \label{blablabla}
\end{eqnarray}
where we used the fact that $\EE[|V_0|^2] = q \Pc_x = \gamma$. 
Replacing (\ref{sigma-rangan-lmmse}) and (\ref{blablabla}) into (\ref{saddle-final-chi-map}) -- (\ref{saddle-final-eta-map}), we obtain
the fixed-point equations for the linear MMSE estimator.
In the iid case, using (\ref{rangan-fixed-point-map-eta}) -- (\ref{rangan-fixed-point-map-xi}), we obtain that $\xi = \gamma \eta$ and 
\begin{eqnarray}
\eta = \frac{-(1 + (1-p)\gamma) + \sqrt{(1 + (1-p)\gamma)^2 + 4p\gamma}}{2\gamma},
\end{eqnarray}
which coincides with the well-known expression of the multiuser efficiency of the linear 
MMSE detector for an iid matrix with aspect ratio $pn \times n$ and elements with mean 0 and variance 
$1/n$ (see \cite{fnt} and expression (\ref{eta-iid}) evaluated for $\beta = 1, s = \gamma$). 

In the Haar-distributed case, using (\ref{cool2}), we can solve explicitly for $\xi$ by eliminating $\chi$ in (\ref{saddle-final-chi-map}) 
and (\ref{saddle-final-xi-map}). After some more complicated algebra than in the iid case, 
we arrive at the solution 
\begin{eqnarray}
\xi = \frac{\gamma p}{1 + (1 - p)\gamma}
\end{eqnarray}
We also find that, as in the iid case, $\xi = \gamma\eta$. Hence
$\eta$ is given in closed form as
\begin{eqnarray} 
\eta = \frac{p}{1 + (1-p)\gamma},
\end{eqnarray}
which coincides with the well-known form of the multiuser efficiency of the linear 
MMSE detector for a CDMA system with observation model $\rv = \Am \Um \vv +\zv$, 
where $\Um$ is $n \times n$ Haar-distributed, given by 
the solution of (\ref{eta-haar-fixedp}) in the case $\beta = 1, s = \gamma$ 
(or, equivalently, by the limit of (\ref{eta-haar}) for $\beta \rightarrow 1$). 

In order to calculate the performance of the thresholded linear MMSE estimator, notice that the estimator output 
converges in distribution to $\widehat{V} = \widehat{v}(Y;\xi) = \frac{\xi}{1 + \xi} Y$ where, according to the decoupled channel model,  
$Y = V_0 + \eta^{-\frac{1}{2}} Z$, and $Z \sim \Cc\Nc(0,1)$. Thresholding $\widehat{V}$ or $Y$ is clearly equivalent. 
Hence, the support recovery error rate in this case takes on the same form already derived for the 
MAP-SBS (see (\ref{energy-detect}) -- (\ref{distotionexcat})), for a different value of $\eta$ calculated 
via (\ref{saddle-final-chi-map}) -- (\ref{saddle-final-eta-map}).

\subsection{Thresholded Lasso estimator}  \label{lasso-sec}

We now follow an approach similar to that in Section \ref{section:mmseTL} in order to analyze the Lasso estimator,
which so far has only been analyzed  for iid 
sensing matrices.

The Lasso estimator, widely studied in the compressed sensing literature
\cite{lasso,donoho-stable} comes directly in the form (\ref{general-form}) for $f(v) = |v|$. 
In this case, the parameter $\gamma$ must be optimized depending on the target performance. 
For example, in the classical noisy compressed sensing problem we are interested in the value of $\gamma$ that 
minimizes 
$\EE[\|\vv - \widehat{\vv}\|^2]$. 

Particularizing (\ref{scalar-map}) and (\ref{sigma-rangan}) to the case $f(v) = |v|$ we obtain
\begin{eqnarray} \label{scalar-map-lasso}
\widehat{v}(y;\xi) = \left [ |y| - \frac{1}{2\xi} \right ]_+ \frac{y}{|y|},
\end{eqnarray}
where $[\cdot ]_+$ takes the positive part of its argument, and
\begin{eqnarray} \label{sigma-rangan-lasso}
\sigma^2(y;\xi) = 1\left  \{ |y| - \frac{1}{2\xi} > 0 \right \} \frac{1}{\xi},
\end{eqnarray}
where $1\{ \cdot\}$ is the indicator function of the event inside the brackets.
Notice that (\ref{scalar-map-lasso}) and (\ref{sigma-rangan-lasso})
generalize the expressions found in \cite{rangan-replica} to the complex case.
In this case, we have
\begin{eqnarray}
\EE \left [ |V_0 - \widehat{v}(Y;\xi) |^2 \right ] & = & \EE \left [ \left |V_0 -  \left [ |Y| - \frac{1}{2\xi} \right ]_+ \frac{Y}{|Y|} \right |^2 \right ] \nonumber \\
& = & q \Pc_x + \frac{1-q}{\eta} \left [ e^{-\eta'}  - \sqrt{\pi \eta'} {\rm erfc} \left ( \sqrt{\eta'} \right ) \right ] \nonumber \\
& & + \frac{q}{\mu} \left [ \frac{1 - \Pc_x\eta}{1 + \Pc_x\eta} e^{-\mu'} - \frac{\sqrt{\pi \mu'}}{1 + \Pc_x\eta} {\rm erfc} \left ( \sqrt{\mu'} \right ) \right ], \label{mse-lasso-closed-form} 
\end{eqnarray}
where $\eta' = \eta/(4 \xi^2)$, $\mu' = \mu/(4\xi^2)$, and $\mu$ is defined in (\ref{mudef}). 
The derivation of (\ref{mse-lasso-closed-form}) is not completely straightforward and it is provided in Appendix \ref{app:formulas}. 

From (\ref{sigma-rangan-lasso}) we have 
\begin{eqnarray} 
\EE[\sigma^2(Y;\xi)] & = & \frac{1}{\xi} \PP[|Y| > 1/(2\xi)] \label{sigma-rangan-lasso-closed-form} \\
& = & \frac{1}{\xi} \left (  q e^{-\mu'} + (1 - q) e^{-\eta'} \right )  \label{PPY}
\end{eqnarray}
Replacing (\ref{mse-lasso-closed-form}) and (\ref{sigma-rangan-lasso-closed-form}) into
(\ref{saddle-final-chi-map}) -- (\ref{saddle-final-eta-map}), we obtain the fixed-point equation for calculating the decoupled channel parameters
$\eta, \xi$ for the analysis of the Lasso estimator for given parameter $\gamma$. 
In the iid case, using (\ref{rangan-fixed-point-map-eta}) -- (\ref{rangan-fixed-point-map-xi}), we obtain
the same system of equations given in \cite{rangan-replica}, up to a different normalization and the fact that here we consider 
complex signals. 
Furthermore, it is immediate to recognize that (\ref{rangan-fixed-point-map-eta}) corresponds to the 
state evolution of the AMP with soft-thresholding (AMP-ST) as described in \cite{reeves-journal1}, 
where the scalar soft-thresholding function is given by (\ref{scalar-map-lasso})
for an arbitrary thresholding parameter $\xi > 0$. 
The large-dimensional analysis leading to the state evolution equation (\ref{rangan-fixed-point-map-eta}) 
is rigorously proved in \cite{montanari-dense} for the case where $\Um$ is iid Gaussian. 
Based on this fact, 
it is tempting to conjecture that the analysis is valid for the general iid case (subject to usual mild conditions 
on the matrix element distribution) and that the replica analysis yields correct results also for the more general 
class of matrices considered in this paper. 

In order to obtain an estimate of $\bv$ (support of $\vv$), a natural approach consists of 
selecting the non-zero components of $\widehat{\vv}$. However, this method yields rather poor results 
in the Bernoulli-Gaussian case and in other cases where the magnitudes of the 
non-zero components of $\vv$ are not bounded away from zero.
Instead, in an iterative implementation of the Lasso solver (e.g., using the method in \cite{giannakis}, or the AMP-ST), 
it is possible to generate a ``noisy'' version of the Lasso estimate $\widehat{\vv}$ before the soft-thresholding step
(see Section \ref{sec:dist-results} and \cite{reeves-journal1}). This noisy Lasso estimate corresponds to the 
decoupled channel model with marginal distribution $Y = V_0 + \eta^{-\frac{1}{2}} Z$, 
with $\eta$ given by the fixed-point equation in the Lasso case.  Hence, the support recovery error rate 
takes on the same form already derived for the  MAP-SBS (see (\ref{energy-detect}) -- (\ref{distotionexcat})), for a different 
value of $\eta$, calculated via (\ref{saddle-final-chi-map}) -- (\ref{saddle-final-eta-map}) for the Lasso case as explained above.

\subsection{Support recovery error rate examples}  \label{sec:dist-results}

In order to illustrate the above results and compare the behavior of different support estimators, we show some
numerical examples and compare the theoretical asymptotic results with finite-dimensional simulations.
Figs.~\ref{dist_all20db} and \ref{dist_all50db} show the support recovery error rate $D(p,q,\Pc_x)$ 
versus the sampling rate $p$ for a Haar-distributed sensing matrix $\Um$ and a Gaussian-Bernoulli source signal $\vv$ with
$q = 0.2$ and $\SNR = q\Pc_x$ equal to $20$ and 50 dB, respectively.

\begin{figure}[ht]
\centerline{\includegraphics[width=12cm]{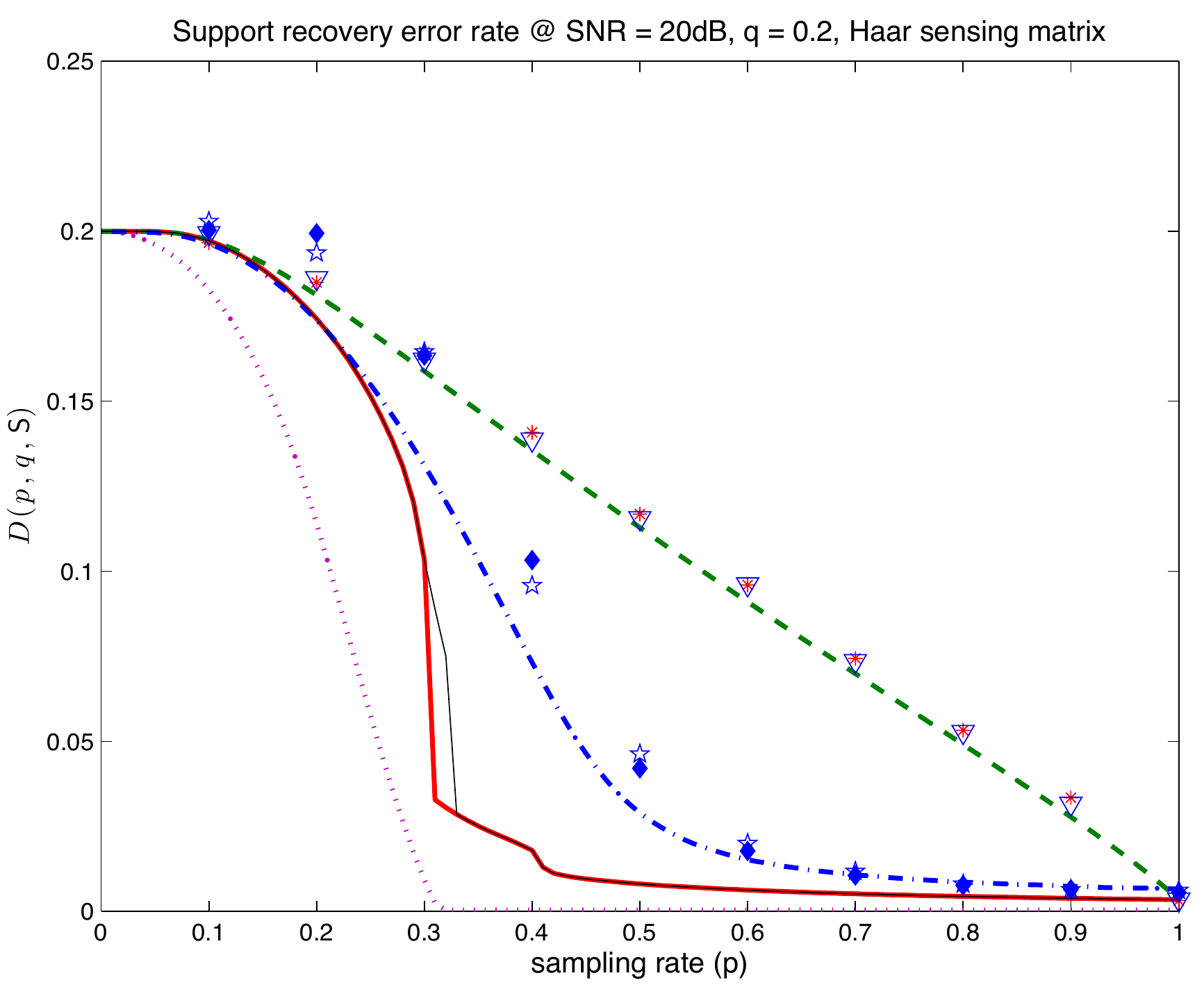}}
\caption{Support recovery error rate $D(p,q,\Pc_x)$ versus $p$, for $q = 0.2$ and $\SNR = q\Pc_x = 20$ dB for different estimators, 
asymptotic results and and finite-dimensional simulations.
Solid thick line: MAP-SBS, asymptotic; Dotted line: Information theoretic lower bound; Dot-dash line: Thresholded Lasso, asymptotic; 
Dashed line: Thresholded linear MMSE, asymptotic; Thin solid line: Conjectured AMP-MMSE, corresponding to the right-most fixed point
of (\ref{e:fix-pointeq1}) -- (\ref{e:fix-pointeq}). Some finite-dimensional simulations are shown 
for dimension $n = 100$  for the thresholded linear MMSE
estimator (asterisk: Haar sensing matrix; triangle: DFT sensing matrix) and for the thresholded Lasso
(lozenge: Haar sensing matrix; star: DFT sensing matrix).}
\label{dist_all20db}
\end{figure}

\begin{figure}[ht]
\centerline{\includegraphics[width=12cm]{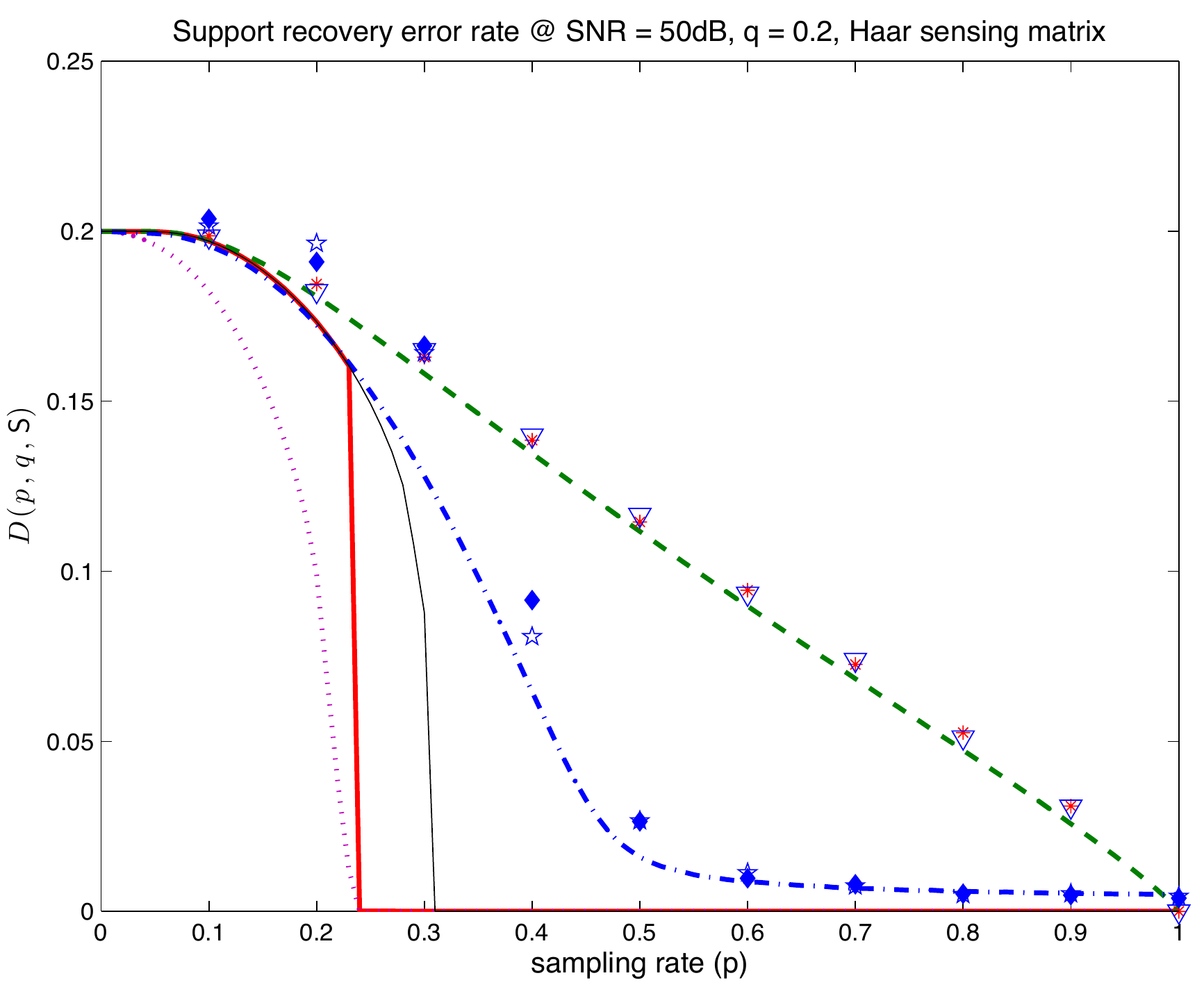}}
\caption{Support recovery error rate $D(p,q,\Pc_x)$ versus $p$, for $q = 0.2$ and $\SNR = q\Pc_x = 50$ dB for different estimators, 
asymptotic results and and finite-dimensional simulations.
Solid thick line: MAP-SBS, asymptotic; Dotted line: Information theoretic lower bound; Dot-dash line: thresholded Lasso, asymptotic; 
Dashed line: Thresholded linear MMSE, asymptotic; Thin solid line: Conjectured AMP-MMSE, corresponding to the right-most fixed point
of (\ref{e:fix-pointeq1}) -- (\ref{e:fix-pointeq}). Some finite-dimensional simulations are shown for dimension $n = 100$ for 
the thresholded linear MMSE estimator (asterisk: Haar sensing matrix; triangle: DFT sensing matrix) and for the thresholded Lasso
(lozenge: Haar sensing matrix; star: DFT sensing matrix).}
\label{dist_all50db}
\end{figure}

A few remarks are in order: 
\begin{itemize}
\item The MAP-SBS asymptotic distortion is obtained by choosing the fixed-point solution of (\ref{e:fix-pointeq1}) -- (\ref{e:fix-pointeq}) that minimizes
the free energy $\Ic_1$, as discussed in Section \ref{sec:info-results}. Instead, if we choose only the right-most fixed point, we obtain 
the solution of the conjectured state evolution equation corresponding to the AMP-MMSE applied to Haar-distributed sensing matrices.
As previously remarked, it is known that such state evolution equation is exact in the case of iid sensing matrices.
\item The information theoretic lower bound is obtained by taking the minimum of all the upper bounds on $\Ic$ developed in 
Theorems \ref{minfo-bound2ex} and \ref{th:bounds}, and using it in (\ref{love}). 
\item We run finite-dimensional simulations for dimension $n = 100$ 
for the thresholded linear MMSE and thresholded Lasso estimators. We considered both random unitary $\Um$ (Haar distributed) and
the case of a fixed deterministic $\Um = \Fm$, where $\Fm$ is the $n$-dimensional unitary DFT matrix with elements 
$[\Fm]_{m,k} = \frac{e^{j\frac{2\pi}{n}(m-1)(k-1)}}{\sqrt{n}}$. Interestingly, the simulations show that random unitary and deterministic DFT
yields essentially the same performance (up to Monte Carlo simulation fluctuations). 
This corroborates our conjecture that the asymptotic analysis 
of Haar-distributed $\Um$ carries over to the case of a DFT matrix. The case of DFT matrices is particularly relevant for applications, since in many communication and signal processing problems signals are sparse in the time (resp., frequency) domain and are randomly sampled in the dual domain, so that
a random selection of the rows of a DFT matrix arises as a sensing matrix naturally matched to the problem.
\item As already noticed in several works, the gap between the optimal MAP-SBS estimator and the suboptimal low-complexity estimators
grows for high SNR (compare Fig. \ref{dist_all20db} and Fig. \ref{dist_all50db}). In contrast, the thresholded linear MMSE estimator
yields poor performance for all $p < 1$, and this is quite insensitive to SNR. 
\item In order to solve the complex Lasso, we used the iterative method of \cite{giannakis}. This scheme has slightly 
lower complexity than AMP-ST, and provably converges to the Lasso solution. By comparing the component-wise thresholding
step in \cite{giannakis} and the symbol-by-symbol estimator $\hat{v}(Y;\xi)$ for the decoupled channel model
given in (\ref{scalar-map-lasso}), it is natural to identify the noisy Lasso solution with the vector
\begin{eqnarray}  \label{lasso-noisy}
\widetilde{\vv} = \widehat{\vv}^{(\infty)} + \Dm \Gm^\dagger \left ( \yv - \Gm \widehat{\vv}^{(\infty)} \right ),
\end{eqnarray}
where $\widehat{\vv}^{(\infty)}$ is the solution of the iterative algorithm of \cite{giannakis} after convergence, 
$\Gm$ is the matrix obtained by taking the non-zero rows of $\Am\Um$, and 
$\Dm = \diag(1/\|\gv_1\|^2, \ldots, 1/\|\gv_n\|^2)$ where $\gv_\ell$ is the $\ell$-th column of $\Gm$. 
The support recovery error rate shown in Figs.~\ref{dist_all20db} and \ref{dist_all50db} for the finite-dimensional simulation of
the thresholded Lasso is obtained by applying the threshold detector given in (\ref{energy-detect}), for 
$\eta$ calculated via the asymptotic fixed-point equations (\ref{saddle-final-chi-map}) -- (\ref{saddle-final-eta-map}), to the components
of $\widetilde{\vv}$ given in (\ref{lasso-noisy}).
The asymptotic analysis and the finite-dimensional simulation were computed for the same value of the 
parameter $\gamma$, which must be chosen for each combination of system parameters $p, q$ and $\Pc_x$. 
Several heuristic methods for the choice of $\gamma$ are proposed in the literature. Following \cite{lasso20}, 
we used $\gamma = (1/20)\|\Gm^\dagger \yv\|_\infty$ (the optimization of $\gamma$ for the asymptotic case is an interesting topic for further investigation.)
\end{itemize}

\section{Conclusion}  

%
%

In the standard compressed sensing model,  
the sensing matrix $\Am \Um$ is such that $\Am$ is diagonal with independent $\{0, 1\}$ components
and $\Um$ has iid coefficients. 
In addition to this model, 
we allow the square matrix $\Um$ to be Haar-distributed (uniformly distributed among all unitary matrices)
or, more generally, to be free from any Hermitian deterministic matrix.

Motivated by  applications, in this paper we have carried out a large-size analysis of:
\begin{enumerate}
\item
the mutual information
between the noisy observations and the Bernoulli-Gaussian input (conditioned on the sensing matrix), 
\item
the mutual information
between the noisy observations and the Gaussian input prior to being subject to random ``hole-punching".
\end{enumerate}
We have obtained asymptotic formulas using fundamentally different approaches for both mutual informations:
the first following a replica-method analysis whose scope we enlarge to encompass the desired class of random matrices, while the second invokes results from freeness and the asymptotic spectral distribution
of random matrices. 

Depending on the case, the mutual informations are expressed either 
through the mutual information between a scalar Bernoulli-Gaussian random variable and its Gaussian-contaminated version,
or
explicitly, through the solution of coupled nonlinear equations.
We have also studied how to choose among the solutions of those equations.

Our upper and lower bounds on the mutual informations do not rely on the replica method. Yet, 
they turn out to give excellent agreement with the replica analysis.  Through the analysis of the bounds we also provide 
a simple converse which shows that the asymptotic distortion is bounded away from zero regardless of signal-to-noise ratio 
for $p \leq q$.
For $p >q $, Wu and Verd\'{u} \cite{wuverdunoisy} showed that 
Gaussian iid sensing matrices are asymptotically as effective for compressed sensing as the 
best nonlinear measurement (or encoder). Here, we have been able to extend that conclusion to the
class of sparsely sampled free random matrices.

We have analyzed several decision rules
such as the optimum symbol-by-symbol rule, the Lasso, and the linear MMSE estimator, followed by thresholding
for support recovery. 
Those analyses follow the decoupling principle, originally introduced in \cite{guo-verdu} for iid matrices.
Specializing these new results we recover the iid formulas found in \cite{guo-verdu,rangan-replica,reeves-journal1},
with the exception of the ML detector analyzed in \cite{reeves-journal1}, which is tailored to the case when the number
of nonzero coefficients is known at the estimator, while in our analysis that number is binomially distributed.

The important case where $\Um$ is a deterministic DFT matrix remains open.
However, we have provided intuition and simulation evidence to buttress the conjecture
that its solution in fact coincides with the case where $\Um$ is Haar distributed.

\centerline{{\bf Acknowledgements}}

 G. Caire, S. Shamai, and S. Verd\'u wish to acknowledge the Binational Science Foundation Grant N. 2008269.
The work of G. Caire was partially supported by NSF Grant CCF-0729162.
The work of S. Verd\'u was partially supported by the Center for Science of Information (CSoI),
an NSF Science and Technology Center, under grant agreement CCF-0939370.

\appendices

\section{Proof of Theorem \ref{th:totallyelementary}}\label{proof:totallyelementary}

Let $(X,Y) \sim P_X P_{Y|X}$.
For any $\widehat{X} \in \widehat{\Xc}$, such that $X \leftrightarrow Y \leftrightarrow \widehat{X}$, and function
$\mathsf{d} \colon \Xc \times \widehat{\Xc} \rightarrow [0, \infty)$, (\ref{poju1}) and the data processing inequality  yield
\begin{eqnarray}
R ( \mathbb{E} [\mathsf{d} ( X, \widehat{X} ) ]) & \leq & I (X; \widehat{X}) \label{jopu1}\\
&\leq& I (X;Y) \label{jopu2}
\end{eqnarray}
Supremizing over $\widehat{X}$ and in view of the fact that
$R ( \cdot )$ is a monotonically non-increasing function, the result follows.

It is worth emphasizing the totally elementary nature of the proof of Theorem \ref{th:totallyelementary},
and in particular the fact that it does not involve any type of operational characterization of information theoretic fundamental coding limits. A different approach based on those limits and Fano's inequality is taken in \cite{reeves-journal1} to show
Lemma 5 therein.

\section{Proof of Claim \ref{th:x}}\label{proof:x}

We let $\vv_0 = \Xm \bv_0$ with $\Xm \diag(\xv)$ and $\xv$ an iid Gaussian vector with $p_\xv(\xv) = \frac{1}{(\pi \Pc_x)^n} \exp\left (-\|\xv\|^2/\Pc_x\right )$ and $\bv_0$ Bernoulli-$q$, with probability mass function $p_{\bv_0}(\bv_0)$. 
Notations are as in Section \ref{decoup-sec}). In particular, $\yv = \Am \Um \Xm \bv_0 + \zv$, as in (\ref{model1}). 
Consider the assumed conditional probability density 
\begin{eqnarray}  \label{postulated-y-v}
q_{\yv|\vv, \Am, \Um}(\yv|\vv, \Am, \Um) = \left ( \frac{\gamma}{\pi} \right )^n \exp \left ( - \gamma \left \| \yv - \Am\Um\vv \right \|^2 \right )
\end{eqnarray}
for some $\gamma > 0$. We also consider an assumed iid prior density on $\vv$, denoted by $g(\vv)$ for simplicity of notation, 
and let $g_0(\vv_0)$ denote the Bernoulli-Gaussian density of $\vv_0$. 
Removing the conditioning with respect to $\vv$, we obtain
\begin{eqnarray} \label{qyAU}
q_{\yv|\Am,\Um}(\yv|\Am,\Um) = \left ( \frac{\gamma}{\pi} \right )^n \int g(\vv)  \exp \left ( - \gamma \left \| \yv - \Am\Um\vv \right \|^2 \right ) d\vv.
\end{eqnarray}
We wish to calculate the mutual information rate $\Ic_1$ defined in (\ref{inforate1}), which can be expressed as
\begin{eqnarray}  
\Ic_1 & = & \lim_{n \rightarrow \infty} \frac{1}{n} I(\vv_0; \yv|\Am, \Um)   \\
& = & \lim_{n \rightarrow \infty} \frac{1}{n} \left [ - \EE [ \log p_{\yv|\Am,\Um} (\yv|\Am,\Um) ]   +  \EE[ \log p_{\yv|\vv_0, \Am\Um}(\yv|\vv_0, \Am,\Um) ] \right ]   \\
& = & - \lim_{n \rightarrow \infty} \frac{1}{n} \left . \EE [ \log Z(\yv,\Am,\Um) ]  \right |_{\stackrel{g(\cdot) = g_0(\cdot)}{\gamma = 1}} - \log(\pi e)  \label{info1}
\end{eqnarray}
where we define $Z(\yv, \Am, \Um) = q_{\yv|\Am,\Um}(\yv|\Am,\Um)$, 
and recognize that (\ref{qyAU}) can be interpreted as the partition function (from which the notation ``$Z$'')
of a statistical mechanical  system with ``quenched disorder parameters'' $\yv, \Am, \Um$, ``state'' $\vv \sim g(\cdot)$ and 
unnormalized Boltzman distribution $q_{\yv|\vv, \Am, \Um}(\yv|\vv, \Am, \Um) g(\vv)$.~\footnote{In this case,
$\Hc(\vv|\yv,\Am,\Um) =  \left \| \yv - \Am\Um\vv \right \|^2 - \frac{1}{\gamma} \log g(\vv)$ plays the role of the system's Hamiltonian, and $\gamma$ is the inverse temperature \cite{guo-verdu}.} 

The condition $g(\cdot) = g_0(\cdot0$ and $\gamma=1$ correspond to the case where the assumed prior and noise variance in the observation model 
are ``matched'', i.e., they coincide with the true priors and noise variance. However, it is useful to consider the derivation for
general $\gamma$ and $g(\cdot)$, since this same derivation will apply to the general class  of mismatched PMEs defined in Section 
\ref{decoup-sec}).  The quantity
\begin{eqnarray} \label{free-energy-def}
\Ec = - \lim_{n \rightarrow \infty} \frac{1}{n}  \EE [ \log Z(\yv,\Am,\Um) ]
\end{eqnarray}
is the system per-component {\em free-energy} of the underlying physical system. In the following, we
shall compute $\Ec$ using the Replica Method of statistical physics,
under the so-called Replica Symmetry (RS) assumption
\cite{Nishimori,Mezard,Tanaka,guo-verdu,Guo1}. Summarizing, the method
comprises the following steps: since computing the expectation of
the log in (\ref{free-energy-def}) is usually complicated, we use the identity
\begin{eqnarray} \label{replica-trick1}
\EE [ \log Z(\yv,\Am,\Um) ]  = \lim_{u \rightarrow 0} \frac{\partial}{\partial u} \log \left ( \EE \left [ Z^u(\yv, \Am, \Um) \right ] \right )
\end{eqnarray}
for $u \in \RR_+$. Then, exchanging limits, we can write
\begin{eqnarray} \label{replica-trick2}
\Ec =  - \lim_{u \rightarrow 0} \frac{\partial}{\partial u}    \lim_{n \rightarrow \infty} \frac{1}{n} \log \left ( \EE \left [ Z^u(\yv, \Am, \Um) \right ] \right )
\end{eqnarray}
Finally, we evaluate the quantity
\begin{eqnarray} \lim_{n \rightarrow \infty} \frac{1}{n} \log \left ( \EE \left [ Z^u(\yv, \Am, \Um) \right ] \right ) \end{eqnarray}
for $u$ positive integer, such that $Z^u(\yv, \Am, \Um)$ can be seen as the partition function of a 
$u$-fold Cartesian product system (i.e., $u$ parallel ``replicas'' of the original system), with state vectors 
$\vv_1,\ldots, \vv_u$, and the {\em same} quenched parameters
$\yv, \Am, \Um$. In particular, we can write
\begin{eqnarray} \label{free-energy-step1}
Z^u(\yv,\Am,\Um)
& = &
\left (\frac{\gamma}{\pi}\right )^{un}  \int  d\vv_1 \cdots d\vv_u \left ( \prod_{a=1}^u g(\vv_a)  \right )
\exp \left ( - \gamma \sum_{a=1}^u \left \| \yv - \Am\Um\vv_a \right \|^2 \right ).
\end{eqnarray}
The next step consists of calculating
\begin{align} \label{free-energy-step2}
& \EE[Z^u(\yv,\Am,\Um) | \Am,\Um]  =   \\
& =  \left (\frac{\gamma}{\pi}\right )^{un}  \int  d\vv_0 d\vv_1 \cdots d\vv_u \left ( g_0(\vv_0) \prod_{a=1}^u g(\vv_a)  \right ) \; \int \exp \left ( - \gamma \sum_{a=1}^u \left \| \zv - \Am\Um(\vv_a - \vv_0) \right \|^2 \right ) \; \frac{1}{\pi^n} e^{-\|\zv\|^2} \; d\zv
\end{align}
Standard Gaussian integration (by completing the squares) yields
\begin{eqnarray} \label{free-energy-step3}
\int \exp \left ( - \gamma \sum_{a=1}^u \left \| \zv - \Am\Um(\vv_a - \vv_0) \right \|^2 \right ) \; \frac{1}{\pi^n} e^{-\|\zv\|^2} \; d\zv & = &
(1 + u\gamma)^{-n} \exp \left ( -n \trace \left (\Rm \Lm \right ) \right )
\end{eqnarray}
where $\Rm = \Um^\dagger \Am^\dagger \Am \Um$,  as defined in (\ref{RR}),
and where $\Lm$ is a rank-$u$ matrix defined as follows: let  $\sv_a = \vv_a - \vv_0$ for $a = 1,\ldots, u$, and let
$\Sm = [\sv_1, \ldots, \sv_u]$. Then,
\begin{eqnarray} \label{LL}
\Lm = \frac{\gamma}{n} \Sm \left ( \Id - \frac{1}{\gamma^{-1} + u} \onev\onev^\transp \right ) \Sm^\dagger
\end{eqnarray}
where $\onev$ denotes an all-ones column vector of appropriate
dimension. Next, we need to average with respect to $\Am, \Um$, 
i.e.,  with respect to $\Rm$. To this purpose, we apply the generalized
Harish-Chandra-Itzykson-Zuber integral \cite{Harish,Itzykson} as
follows:
\begin{align} \label{free-energy-step4}
& \lim_{n \rightarrow \infty} \frac{1}{n} \log \EE[Z^u(\yv,\Am,\Um)] =    \nonumber \\
& = u \log \gamma - u \log \pi - \log (1 + u\gamma) \nonumber \\
& + \lim_{n \rightarrow \infty} \frac{1}{n} \log \left ( \int  d\vv_0  \cdots d\vv_u \left ( g_0(\vv_0) \prod_{a=1}^u g(\vv_a)  \right )
\EE \Big [ \exp \left ( -n \trace \left (\Rm \Lm \right )  \right ) \Big | \Lm \Big ] \right )   \\
& = u \log \gamma - u \log \pi - \log (1 + u\gamma) \nonumber \\
& + \lim_{n \rightarrow \infty} \frac{1}{n} \log \left ( \int  d\vv_0 \cdots d\vv_u \left ( g_0(\vv_0) \prod_{a=1}^u g(\vv_a)  \right ) 
 \exp \left ( -n \sum_{i=1}^u \int_0^{\lambda_i(\Lm)} \Rc_{\Rm}(-w) dw \right ) \right )
\end{align}
where $\Rc_{\Rm}(w)$ denotes the R-transform of $\Rm$ and $\lambda_(\Lm)$ denotes the $i$-th eigenvalue of $\Lm$.

Our goal now is to evaluate the limit
\begin{eqnarray} \label{free-energy-step5}
\lim_{n \rightarrow \infty} \frac{1}{n} \log \left ( \int  d\vv_0 \cdots d\vv_u \left ( g_0(\vv_0) \prod_{a=1}^u g(\vv_a)  \right )  \;
\exp \left ( -n \sum_{i=1}^u \int_0^{\lambda_i(\Lm)} \Rc_{\Rm}(-w) dw \right ) \right ).
\end{eqnarray}
In order to proceed, we make the common RS assumption and define the empirical correlations
\begin{eqnarray} \label{empirical-corr}
Q_{a,a'} = \frac{1}{n} \sum_{k=1}^n v_{ak} v^*_{a'k}
\end{eqnarray}
of vectors $\vv_a, \vv_{a'}$ for $0 \leq a,a' \leq u$. 
Noticing that the limit (\ref{free-energy-step5}) is given as the limit of a normalized log-sum,
Varadhan's lemma yields that this limit is given by the ``dominant configuration'' of the vectors $\vv_0,\ldots, \vv_u$, defined in terms of their empirical correlation matrix $\Qm = [Q_{a,a'}]$. The RS assumption ``postulates'' that  this dominant configuration satisfies the following 
symmetric form:
\begin{eqnarray} \label{QRS}
\Qm = \left [ \begin{array}{cc}
\epsilon_0 & \vartheta \onev^\transp  \\
\vartheta^* \onev & (\epsilon_1 - \omega)\Id + \omega \onev\onev^\transp \end{array} \right ].
\end{eqnarray}
In Appendix \ref{eigenvalues-L} we show that, for $\Qm$ in the form (\ref{QRS}), the eigenvalues of $\Lm$ are given by
\begin{eqnarray} \label{eigenvalues}
\lambda_1 \eqdef \lambda_1(\Lm) & = & \frac{\epsilon_1 - \omega + u(\epsilon_0 -  2\Re\{\vartheta\} + \omega)}{\gamma^{-1} + u}   \\
\lambda_2 \eqdef \lambda_i(\Lm) & = & \gamma(\epsilon_1 - \omega), \;\;\;\; i = 2,\ldots, u   \\
\lambda_i(\Lm) & = & 0, \;\;\;\; i = u+1, \ldots, n.
\end{eqnarray}
Therefore, we define
\begin{eqnarray} \label{chandra-exponent}
\Gc^{(u)}(\Qm) & \eqdef & \sum_{i=1}^u \int_0^{\lambda_i(\Lm)} \Rc_{\Rm}(-w) dw   \\
& = & \int_0^{\lambda_1} \Rc_{\Rm}(-w) dw + (u - 1) \int_0^{\lambda_2}  \Rc_{\Rm}(-w) dw.
\end{eqnarray}
The argument of the logarithm in (\ref{free-energy-step5}) can be
interpreted as an expectation with respect to $\vv_0, \ldots,
\vv_u$, with joint pdf $g_0(\vv_0) \prod_{a=1}^u g(\vv_a)$. By the
law of large numbers, this measure satisfies a concentration
property with respect to the empirical correlations
(\ref{empirical-corr}). Hence, we can invoke
Cram\'er's large
deviation theorem \cite{cramer}
as follows. Since $\Qm$ is a function of $\vv_0, \ldots, \vv_u$,  the conditional pdf of $\Qm$ given $\vv_0, \ldots, \vv_u$ is just a multi-dimensional delta
function (i.e., a product of delta functions), hence,  we can write
\begin{align}
& \int  d\vv_0 \cdots d\vv_u \left ( g_0(\vv_0) \prod_{a=1}^u g(\vv_a)  \right )  \times   \exp \left ( -n \Gc^{(u)} (\Qm) \right )   \\
& \; = \EE \left [ \int \exp \left ( -n \Gc^{(u)} (\Qm) \right )  \mu_n^{(u)}(d\Qm| \vv_0, \ldots, \vv_u) \right ]   \\
& \; =  \int \exp \left ( -n \Gc^{(u)} (\Qm) \right )  \mu_n^{(u)}(d\Qm) \label{thisquantity} \\
& \; \approx \int   \exp \left ( -n \left ( \Gc^{(u)} (\Qm)+ I^{(u)}(\Qm) \right ) \right ) d\Qm  \label{free-energy-step6}
\end{align}
where (\ref{free-energy-step6}) holds in the sense that, when we consider the quantity (\ref{thisquantity}) inside
the logarithm in the limit (\ref{free-energy-step5}), it can be replaced by (\ref{free-energy-step6}).

The rate function $I^{(u)}(\Qm)$  of the measure $\mu^{(u)}_n(d\Qm)$ defined as
\begin{eqnarray} \label{measure-mun}
\mu_n^{(u)}(d\Qm) & = &  \int  d\vv_0 \cdots d\vv_u \left ( g_0(\vv_0) \prod_{a=1}^u g(\vv_a)  \right )
\prod_{a\leq a'}^u \delta \left ( \sum_{k=1}^n v_{ak}v^*_{a'k} - n Q_{a,a'} \right ) \; d\Qm    \\
& = & \EE \left [ \prod_{a\leq a'}^u \delta \left ( \sum_{k=1}^n v_{ak}v^*_{a'k} - n Q_{a,a'} \right ) \right ] \; d\Qm,
\end{eqnarray}
is given by the Legendre-Fenchel  transform of the log-Moment Generating Function (log-MGF) of the random vector
$\underline{\Vm} = (V_0, V_1, \ldots, V_u)^\transp$, where
$V_0 = X_0 B_0$ and $V_a = X_a B_a$,  $X_0, X_1, \ldots, X_u$ are iid Gaussian RVs $\sim p_X(x) = \frac{1}{\pi \Pc_x} \exp(-|x|^2/\Pc_x)$,
and $B_0, B_1, \ldots, B_u$ are independent variables
with $B_0 \sim p_{B_0}$ and $B_a \sim q_B$. 
The MGF of $\underline{\Vm}$ is given by
\begin{eqnarray} \label{MGFV}
M^{(u)}(\tilde{\Qm}) = \EE \left [ \exp\left ( \underline{\Vm}^\dagger \tilde{\Qm} \underline{\Vm} \right ) \right ]
\end{eqnarray}
and the rate function is given by
\begin{eqnarray}
I^{(u)}(\Qm) = \sup_{\tilde{\Qm}} \left \{ \trace(\tilde{\Qm} \Qm ) - \log M^{(u)}(\tilde{\Qm}) \right \}
\end{eqnarray}
Eventually, using this into (\ref{free-energy-step6}) and the resulting expression in the limit (\ref{free-energy-step5})
and applying Varadhan's lemma, we arrive at the saddle-point condition
\begin{eqnarray} \label{free-energy-step7}
\lim_{n \rightarrow \infty} \frac{1}{n} \log \left ( \int  \mu^{(u)}_n(d\Qm)  \exp \left ( -n \Gc^{(u)}(\Qm) \right ) \right ) & = &
-\inf_{\Qm} \left \{ \Gc^{(u)}(\Qm)  + \sup_{\tilde{\Qm}} \left \{ \trace(\tilde{\Qm} \Qm ) - \log M^{(u)}(\tilde{\Qm}) \right \} \right \}   \nonumber  \\
& &  \\
& = & -\inf_{\Qm} \sup_{\tilde{\Qm}} \left \{ \Gc^{(u)}(\Qm)  + \trace(\tilde{\Qm} \Qm ) - \log M^{(u)}(\tilde{\Qm})  \right \} \nonumber \\
& &
\end{eqnarray}
Now we focus on the calculation of the MGF.
Under the RS assumption, the supremum in (\ref{free-energy-step7}) is achieved for $\tilde{\Qm}$ in the form
\begin{eqnarray} \label{QtildeRS}
\tilde{\Qm} = \left [ \begin{array}{cc}
c & d \onev^\transp  \\
d^* \onev & (g - f)\Id + f \onev\onev^\transp \end{array} \right ]
\end{eqnarray}
where $c,d,g,f$ are parameters.
Using the RS form for  $\tilde{\Qm}$ we obtain
\begin{eqnarray} \label{MGFV1}
M^{(u)}(\tilde{\Qm}) = \EE \left [ \exp \left ( \left | \frac{d}{\sqrt{f}} V_0 + \sqrt{f} \sum_{a=1}^u V_a \right |^2 + \left (c - \frac{|d|^2}{f}\right ) |V_0|^2 + (g-f) \sum_{a=1}^u |V_a|^2 \right ) \right ]
\end{eqnarray}
We use the complex circularly-symmetric version of the scalar Hubbard-Stratonovich transform
\cite{Stratonovich,Hubbard}:
\begin{eqnarray} e^{|x|^2} = \frac{\eta}{\pi} \int \exp \left ( - \eta |z|^2 + 2 \sqrt{\eta} \Re\{ x^*z\} \right ) dz \end{eqnarray}
for $x,z \in \CC$ and $\eta \in \RR_+$.  Choosing $\eta = |d|^2/f$, we obtain
\begin{eqnarray} 
\exp\left (  \left | \frac{d}{\sqrt{f}} V_0 + \sqrt{f} \sum_{a=1}^u V_a \right |^2 \right ) & = & 
\frac{|d|^2}{\pi f} \int \exp \left ( - \frac{|d|^2}{f} |z|^2 + 2 \frac{|d|}{\sqrt{f}}  \Re\left \{  \left (\frac{d}{\sqrt{f}} V_0 + \sqrt{f} \sum_{a=1}^u V_a \right )^*  z \right \} \right ) dz \nonumber \\
& & 
\end{eqnarray}
Using this into (\ref{MGFV1}), after some straightforward algebra, we find
\begin{eqnarray} \label{MGFV2}
M^{(u)}(\tilde{\Qm}) & = &  \EE \left [ \frac{|d|^2}{\pi f} \int \exp \left ( - \frac{|d|^2}{f} \left | z -  (d/|d|) V_0\right |^2 + c|V_0|^2 \right )
\exp \left ( \sum_{a=1}^u \left ( 2 |d| \Re\{ V_a^* z\} + (g - f) |V_a|^2 \right ) \right )  dz \right ]  \nonumber \\
& & 
\end{eqnarray}
Notice that $V_0$ has a circularly symmetric distribution, therefore $(d/|d|) V_0$ and $V_0$ are identically distributed. Hence, we can write
\begin{eqnarray} \label{MGFV3}
M^{(u)}(\tilde{\Qm}) & =  & \EE \left [ \frac{|d|^2}{\pi f} \int \exp \left ( - \frac{|d|^2}{f} \left | z -  V_0\right |^2 + c|V_0|^2 \right )
\exp \left ( \sum_{a=1}^u \left ( 2 |d| \Re\{ V_a^* z\} + (g - f) |V_a|^2 \right ) \right )  dz \right ] \nonumber \\
& & 
\end{eqnarray}
Since (\ref{MGFV3}) depends only on $|d|$, without loss of generality we re-define the parameter $d$ to be in $\RR_+$.
Also, notice from (\ref{MGFV3}) that $\lim_{u \rightarrow 0} M^{(u)}(\tilde{\Qm}) = 1$.

Following the replica derivation steps outlined at the beginning of
this section, we have to determine the saddle-point $\Qm^\star(u)$
and $\tilde{\Qm}^\star(u)$ achieving the extremal condition in
(\ref{free-energy-step7}), for general $u$, and finally replace the
result in (\ref{free-energy-step4}), differentiate with respect to
$u$ and let $u \rightarrow 0$. Since the function in (\ref{free-energy-step7}) is differentiable
and admits a minimum and a maximum, following the result of Appendix \ref{antonia-general-proof} we have
that determining the saddle-point $(\Qm^\star(u), \tilde{\Qm}^\star(u))$,
replacing it in (\ref{free-energy-step4}), differentiating the resulting expression with respect to $u$ and letting $u \rightarrow 0$
yields the same result of replacing in (\ref{free-energy-step7}) the saddle-point for $u = 0$,
denoted by $\Qm^\star(0) = \Qm^\star$ and $\tilde{\Qm}^\star(0) = \tilde{\Qm}^\star$,
differentiating the result with respect to $u$ and letting $u \rightarrow 0$, where now
$\Qm^\star, \tilde{\Qm}^\star$ are constants independent of $u$.

Differentiating (\ref{free-energy-step7}) with respect to $\tilde{\Qm}$, we obtain the equation
\begin{eqnarray} \label{saddle1}
\Qm = \frac{\EE \left [ \underline{\Vm} \underline{\Vm}^\dagger \exp \left (\underline{\Vm}^\dagger \tilde{\Qm} \underline{\Vm}\right ) \right ]}
{\EE \left [ \exp \left (\underline{\Vm}^\dagger \tilde{\Qm} \underline{\Vm}\right ) \right ]}
\end{eqnarray}
Since we evaluate the saddle-point conditions at $u \rightarrow 0$, and since the denominator in
(\ref{saddle1}) is $M^{(u)}(\tilde{\Qm})$, which is equal to 1 at $u \downarrow 0$, we can just disregard the denominator and focus on the
numerator in the following.  Using the expression (\ref{chandra-exponent}) for $\Gc^{(u)}(\Qm)$, with eigenvalues $\lambda_1$ and $\lambda_2$ given by
(\ref{eigenvalues}), and noticing that the RS conditions (\ref{QRS}) and (\ref{QtildeRS}) yield
\begin{eqnarray} \label{traceQQ}
\trace(\tilde{\Qm} \Qm ) = \epsilon_0 c + u \epsilon_1 g + 2 \Re\{\vartheta\} d u + u (u - 1) \omega f
\end{eqnarray}
we have that the whole exponent depends only on the real part of $\vartheta$. Therefore, we re-define $\vartheta$ to be a real parameter and
differentiate with respect to $\vartheta$, $\omega$, $\epsilon_1$ and $\epsilon_0$, and impose that
the partial derivatives are equal to zero. We find the conditions
\begin{eqnarray}
d & = & \frac{1}{\gamma^{-1} + u} \Rc_{\Rm}(-\lambda_1) \label{dstar} \\
f & = & \frac{1}{u} \left ( \gamma \Rc_{\Rm}(-\lambda_2)  - \frac{1}{\gamma^{-1} + u} \Rc_{\Rm}(-\lambda_1)   \right ) \label{fstar} \\
g - f & = & -\gamma \Rc_{\Rm}(-\lambda_2) \label{gstar} \\
c & = & - \frac{u}{\gamma^{-1} + u} \Rc_{\Rm}(-\lambda_1) \label{cstar}
\end{eqnarray}
Evaluating these conditions for $u \downarrow 0$ and noticing that, as $u$ vanishes,
$\lambda_1 \rightarrow \lambda_2$, we find:
\begin{eqnarray}
d^\star & = & \gamma \Rc_{\Rm}(-\lambda^\star_2) \label{dstar1} \\
f^\star & = & \lim_{u \rightarrow 0} \frac{1}{u} \left ( \gamma \Rc_{\Rm}(-\lambda^\star_2)  - \frac{1}{\gamma^{-1} + u} \Rc_{\Rm}(-\lambda^\star_1)   \right )   \\
& = &   \left .  - \frac{\partial}{\partial u}  \left [ \frac{1}{\gamma^{-1} + u} \Rc_{\Rm}(-\lambda^\star_1)  \right ] \right |_{u=0}   \\
& = &  \gamma^2 \Rc_{\Rm}(-\lambda^\star_2) + \gamma^2 \dot{\Rc}_{\Rm}(-\lambda^\star_2) \left ( \epsilon^\star_0 - 2\vartheta^\star + \omega^\star - \gamma(\epsilon^\star_1 - \omega^\star) \right )
\label{fstar1} \\
g^\star - f^\star & = & - d^\star \label{gstar1} \\
c^\star & = & 0 \label{cstar1}
\end{eqnarray}
where $\lambda_2^\star = \gamma(\epsilon_1^\star - \omega^\star)$ and where $\dot{\Rc}_{\Rm}(\cdot)$ 
denotes the first derivative of $\Rc_{\Rm}(\cdot)$.

The conditions for $\epsilon_0^\star, \epsilon_1^\star, \vartheta^\star, \omega^\star$ in terms of $d^\star, g^\star$ and $f^\star$ are obtained
from (\ref{saddle1}), recalling that, by definition, $\epsilon_0 = Q_{00}, \epsilon_1 = Q_{11}, \vartheta = Q_{01}$ and $\omega = Q_{12}$.
In order to obtain more useful expressions for these parameters,  we use (\ref{cstar1}) and (\ref{gstar1}) in (\ref{MGFV3}) and write
\begin{eqnarray} \label{MGFV4}
M^{(u)}(\tilde{\Qm}^\star) & =  &
\EE \left [ \frac{(d^\star)^2}{\pi f^\star} \int \exp \left ( - \frac{(d^\star)^2}{f^\star} \left | z -  V_0\right |^2 \right )
\exp \left ( \sum_{a=1}^u \left ( 2 d^\star \Re\{ V_a^* z\}  - d^\star |V_a|^2 \right ) \right )  dz \right ]    \nonumber \\
& &   \\
& = & \EE \left [ \frac{(d^\star)^2}{\pi f^\star} \int \exp \left ( - \frac{(d^\star)^2}{f^\star} \left | z -  V_0\right |^2 \right )
\exp \left ( - d^\star \sum_{a=1}^u \left | z - V_a \right |^2 \right )  e^{ d^\star u |z|^2} dz \right ]    \\
& = & \EE \left [ \frac{\eta}{\pi} \int \exp \left ( - \eta \left | z -  V_0\right |^2 \right )
\exp \left ( - \xi \sum_{a=1}^u \left | z - V_a \right |^2 \right )  e^{ \xi |z|^2 u} dz \right ]    \\
\end{eqnarray}
where we define $\eta = (d^\star)^2/f^\star$ and $\xi = d^\star$.

Focusing on the numerator in (\ref{saddle1}) and following steps similar to the derivation of (\ref{MGFV4}), we obtain the following expressions for the correlation
coefficients $\epsilon_0, \vartheta, \epsilon_1, \omega$:
\begin{enumerate}
\item For $\epsilon_0 = Q_{00}$ we have
\begin{eqnarray}
\epsilon_{0}^\star & = &  \EE \Big [ \int \frac{\eta}{\pi}  |V_0|^2 \exp\left ( - \eta |z - V_0|^2  \right ) \cdot
\exp \left ( - \xi \sum_{a=1}^u |z  - V_a |^2 \right ) \; e^{\xi |z|^2 u}   dz \Big ] \Big |_{u \downarrow 0}   \\
& = & \EE[|V_0|^2]
\end{eqnarray}

\item For $\vartheta = Q_{01}$, we introduce the RV $V \sim g(\cdot)$ (same distribution as any of the $V_a$'s) and independent of $V_0, V_1$. Then, we can write
\begin{align} \label{zio1}
& \vartheta^\star =   \\
& = \EE \Big [ \int \frac{\eta}{\pi}  V_0  \exp\left ( - \eta |z - V_0|^2  \right ) \cdot V^*_1 \exp \left ( - \xi  |z  - V_1|^2 \right ) \cdot
\exp \left ( - \xi \sum_{a=2}^u |z  - V_a|^2 \right )  \; e^{\xi |z|^2 u}   dz \Big ] \Big |_{u \downarrow 0}   \\
& = \EE \Big [ \int \frac{\eta}{\pi}  V_0  \exp\left ( - \eta |z - V_0|^2  \right ) \cdot V^*_1 \exp \left ( - \xi  |z  - V_1 |^2 \right ) \cdot \EE \left [ \exp \left ( - \xi |z  - V|^2 \right ) \right ]^{u-1}  \; e^{\xi |z|^2 u}   dz \Big ] \Big |_{u \downarrow 0}   \\
& = \EE \Big [ \int \frac{\eta}{\pi}  V_0  \exp\left ( - \eta |z - V_0|^2  \right ) \cdot
\EE \left [ \frac{V^* \exp \left ( - \xi  |z  - V|^2 \right )}{ \EE \left [ \exp \left ( - \xi |z  - V|^2 \right ) \right ]} \right ] dz \Big ]
\end{align}

\item For $\epsilon_1 = Q_{11}$ we have:
\begin{align} \label{zio2}
& \epsilon_1^\star =   \\
& = \EE \Big [ \int \frac{\eta}{\pi}  \exp\left ( - \eta |z - V_0|^2  \right ) \cdot |V_1|^2  \exp \left ( - \xi  |z  - V_1 |^2 \right ) \cdot
\exp \left ( - \xi \sum_{a=2}^u |z  - V_a |^2 \right )  \; e^{\xi |z|^2 u}   dz \Big ] \Big |_{u \downarrow 0}   \\
& = \EE \Big [ \int \frac{\eta}{\pi}  \exp\left ( - \eta |z - V_0|^2  \right ) \cdot |V_1|^2 \exp \left ( - \xi  |z  - V_1 |^2 \right ) \cdot
\EE \left [ \exp \left ( - \xi |z  - V|^2 \right )  \right ]^{u-1}  \; e^{\xi |z|^2 u}   dz \Big ] \Big |_{u \downarrow 0}   \\
& = \EE \Big [ \int \frac{\eta}{\pi}  \exp\left ( - \eta |z - V_0|^2  \right ) \cdot
\EE \left [  \frac{|V|^2 \exp \left ( - \xi  (z  - V )^2 \right )}{ \EE \left [  \exp \left ( - \xi |z  - V|^2 \right ) \right ]}  \right ] dz \Big ]
\end{align}

\item For $\omega = Q_{12}$ we have:
\begin{align} \label{zio3}
& \omega^\star =    \\
& = \EE \Big [ \int \frac{\eta}{\pi}
\exp\left ( - \eta |z - V_0|^2  \right )  V_1 V^*_2 \exp \left ( - \xi  |z  - V_1 |^2 \right ) \exp \left ( - \xi  |z  - V_2 |^2 \right )  \cdot   \\
& \;\;\;\;\;\;\;\;\;\;\;\;\;\;\;\;\;\;\;\;\;\;\;\;\;\;\;\;\;\;\;\;\;\;\;\;\;\;\;\;\;\;\;\;\;\;\;\;\;\;\;\;\;\;\;\;\;\;\;\;\;\;\; \cdot \exp \left ( - \xi \sum_{a=3}^u |z  - V_a |^2 \right )  \; e^{\xi |z|^2 u}   dz \Big ] \Big |_{u \downarrow 0}   \\
& = \EE \Big [ \int \frac{\eta}{\pi}  \exp\left ( - \eta |z - V_0|^2  \right ) V_1 V^*_2 \exp \left ( - \xi  |z  - V_1 |^2 \right ) \exp \left ( - \xi |z  - V_2 |^2 \right )  \cdot   \\
& \;\;\;\;\;\;\;\;\;\;\;\;\;\;\;\;\;\;\;\;\;\;\;\;\;\;\;\;\;\;\;\;\;\;\;\;\;\;\;\;\;\;\;\;\;\;\;\;\;\;\;\;\;\;\;\;\;\;\;\;\;\;\; \cdot
\EE \left [ \exp \left ( - \xi |z  - V|^2 \right )  \right ]^{u-2}  \; e^{\xi |z|^2 u}   dz \Big ] \Big |_{u \downarrow 0}   \\
& = \EE \Big [ \int \frac{\eta}{\pi}  \exp\left ( - \eta |z - V_0|^2  \right ) \cdot
\left | \EE \left [ \frac{V \exp \left ( - \xi  |z  - V |^2 \right )}{ \EE \left [ \exp \left ( - \xi |z  - V|^2 \right ) \right ]}  \right ] \right |^2  dz \Big ]
\end{align}
\end{enumerate}
Finally, we define a single-letter joint probability distribution and restate the expectations appearing in 
(\ref{zio1}), (\ref{zio2}), (\ref{zio3})
in terms of this new single-letter model.  
Let $p_{V_0}(v_0)$ denote the Bernoulli-Gaussian density of $V_0$, 
induced by $p_X(\cdot)$ and by $p_{B_0}(\cdot)$, and let
\begin{eqnarray} \label{decoupled-p}
p_{Y|V_0; \eta}(y|v_0; \eta) = \frac{\eta}{\pi} \exp \left ( - \eta |y - v_0|^2 \right )
\end{eqnarray}
denote the transition probability density of the complex (scalar) circularly symmetric AWGN channel
\begin{eqnarray}
Y = V_0 + \eta^{-\frac12} Z
\end{eqnarray}
with $Z \sim \Cc\Nc(0,1)$.
Also, define the conditional complex circularly symmetric Gaussian pdf
\begin{eqnarray} \label{decoupled-q}
q_{Y|V; \xi}(y|v; \xi) = \frac{\xi}{\pi} \exp \left ( - \xi |y - v|^2 \right )
\end{eqnarray}
and, using Bayes rule,  consider the a-posteriori probability distribution
\begin{eqnarray} \label{decoupled-posterior-q}
q_{V|Y; \xi} (v | y ; \xi) & = &
\frac{q_{Y|V; \xi}(y|v; \xi) g(v)}{\int q_{Y|V; \xi}(y|v; \xi) g(v) dv}   \\
& = & \frac{\exp\left ( - \xi |y - v|^2\right )  g(v)}{\EE\left [ \exp\left ( - \xi |y - V|^2\right )  \right ]}.
\end{eqnarray}
The joint single-letter probability distribution of interest for the variables $V_0, Y$ and $V$ is given by
\begin{eqnarray} \label{decoupled-joint}
p_{V_0}(v_0) p_{Y|V_0; \eta}(y|v_0; \eta) q_{V|Y; \xi}(v|y; \xi).
\end{eqnarray}
This explains the decoupled channel single-letter probability model  (\ref{joint-decoupled-V}).

Now, we can define the conditional mean of $V$ given $Y$ as
\begin{eqnarray} \label{decoupled-posterior-mean}
\EE[V|Y=y] & = & \int v \; q_{V|Y; \xi} (v | y ; \xi) dv   \\
& = & \EE \left [ \frac{V \exp\left ( - \xi |y - V|^2\right )}{\EE\left [ \exp\left ( - \xi |y - V|^2\right )  \right ]} \right ].
\end{eqnarray}
The corresponding conditional second moment is given by
\begin{eqnarray} \label{decoupled-posterior-second-moment}
{\sf sm}_V(y; \xi) & = & \int |v|^2 \; q_{V|Y; \xi} (v | y ; \xi) dv   \\
& = & \EE \left [ \frac{|V|^2 \exp\left ( - \xi |y - V|^2\right )}{\EE\left [ \exp\left ( - \xi |y - V|^2\right )  \right ]} \right ].
\end{eqnarray}
At this point, it is easy to identify the terms and write the expressions (\ref{zio1}), (\ref{zio2}), (\ref{zio3}) in terms of expectations with respect to
the single-letter joint probability measure defined in (\ref{decoupled-joint}). We have
\begin{eqnarray} \label{decoupled-parameters}
\epsilon_0^\star & = & \EE[|V_0|^2] \label{puppa0} \\
\vartheta^\star & = & \EE \left [ V_0  (\EE[V|Y])^*  \right ] \label{puppa1} \\
\epsilon_1^\star & = & \EE \left [  {\sf sm}_V(Y; \xi)   \right ] \label{puppa2} \\
\omega^\star & = & \EE \left [  \left | \EE[V|Y] \right |^2 \right ]. \label{puppa3}
\end{eqnarray}
In order to obtain the desired fixed-points equations for the saddle-point that defines the result in (\ref{free-energy-step7}), we notice that
\begin{eqnarray} \label{decoupled-mse}
\epsilon_0^\star - 2\vartheta^\star + \omega^\star = \EE \left [ \left |V_0 - \EE[V|Y] \right |^2 \right ]
\end{eqnarray}
and that
\begin{eqnarray} \label{decoupled-varq}
\epsilon_1^\star - \omega^\star & = & \EE \left [\left  |V - \EE[V|Y] \right |^2 \right ]   \\
& = & {\sf mmse}(V|Y).
\end{eqnarray}
Using (\ref{dstar}), (\ref{fstar}), the equality $\lambda_2^\star = \gamma(\epsilon_1^\star - \omega^\star)$, and recalling 
that $\xi = d^\star$ and $\eta = (d^\star)^2/f^\star$, we arrive at the system of fixed-point equations
(\ref{saddle-final-chi}) -- (\ref{saddle-final-eta}).
In the matched case, where $q_B(\cdot) =  p_B(\cdot)$ and $\gamma = 1$ we immediately obtain 
that $\delta = \chi$ and therefore $\xi = \eta$, and the fixed-point equations reduce to  
(\ref{e:fix-pointeq1}) -- (\ref{e:fix-pointeq}) in Claim \ref{th:x}.

Using the values solution of (\ref{saddle-final-chi}) -- (\ref{saddle-final-eta})
into (\ref{free-energy-step7}), using the trace expression (\ref{traceQQ}) and finally putting everything together 
into (\ref{free-energy-step4}) and taking the derivative w.r.t. $u$
evaluated at $u \downarrow 0$, we eventually obtain the free energy $\Ec$ in (\ref{free-energy-def}) as given by
\begin{eqnarray} \label{free-energy-babau}
\Ec
& = &
\log(\pi/\gamma) + \gamma + \frac{\partial}{\partial u} \left . \left \{  \int_0^{\lambda_1^\star} \Rc_{\Rm}(-w) dw + (u-1) \int_0^{\lambda_2^\star} \Rc_{\Rm}(-w)dw \right \} \right |_{u \downarrow 0}   \\
& & + \left . \frac{\partial}{\partial u} \Big \{ \epsilon^\star_0 c^\star + u \epsilon^\star_1 g^\star + 2 \vartheta^\star d^\star u + u (u - 1) \omega^\star f^\star \Big \}
\right |_{u \downarrow 0}   \\
& & - \frac{\partial}{\partial u} \left .   \log M^{(u)}(\tilde{\Qm}^\star) \right |_{u \downarrow 0}
\end{eqnarray}
Examining each term separately, we have:
\begin{align}  \label{antonia}
& \displaystyle{\frac{\partial}{\partial u} \left . \left \{  \int_0^{\lambda_1^\star} \Rc_{\Rm}(-w) dw + (u-1) \int_0^{\lambda_2^\star} \Rc_{\Rm}(-w)dw \right \} \right |_{u \downarrow 0}}   \\
& = \displaystyle{ \left . \Rc_{\Rm}(-\lambda_1^\star) \frac{(\epsilon_0^\star - 2\vartheta^\star + \omega^\star)(\gamma^{-1} + u) -
(\epsilon_1^\star - \omega^\star + u(\epsilon_0^\star - 2\vartheta^\star + \omega^\star))}{(\gamma^{-1} + u)^2}  \right |_{u \downarrow 0} + \int_0^{\lambda_2^\star} \Rc_{\Rm}(-w)dw}   \\
& = \displaystyle{ \gamma \Rc_{\Rm}(-\lambda_2^\star) \left (\epsilon_0^\star - 2\vartheta^\star + \omega^\star - \gamma (\epsilon_1^\star - \omega^\star)\right ) + \int_0^{\lambda_2^\star} \Rc_{\Rm}(-w)dw}   \\
& = \displaystyle{ \gamma \Rc_{\Rm}(-\chi) (\delta -  \chi ) + \int_0^{\chi} \Rc_{\Rm}(-w)dw}
\end{align}
where we have used  the definition of $\chi$ and $\delta$ in (\ref{saddle-final-chi}) and (\ref{saddle-final-delta}), respectively, and the relations
(\ref{decoupled-mse}) and (\ref{decoupled-varq}).
For the trace term, recalling that $c^\star = 0$, we have
\begin{eqnarray} \label{antonia1}
\left . \frac{\partial}{\partial u}  \Big \{ \epsilon^\star_0 c^\star + u \epsilon^\star_1 g^\star + 2 \vartheta^\star d^\star u + u (u - 1) \omega^\star f^\star \Big \}
\right |_{u \downarrow 0} & = & \epsilon^\star_1 g^\star + 2 \vartheta^\star d^\star - \omega^\star f^\star
\end{eqnarray}
Finally, for the log-MGF term we use (\ref{MGFV4}) and performing the expectation with respect to $V_1, \ldots, V_u$ (independent and
identically distributed as $V$) first, we obtain
\begin{eqnarray} \label{MGFV5}
M^{(u)}(\tilde{\Qm}^\star) & =  &
\EE \left [ \frac{\eta}{\pi} \int \exp \left ( - \eta \left | z -  V_0\right |^2 \right )   \left (
\EE \left [  \exp \left ( - \xi  \left | z - V \right |^2 \right ) \right ]  e^{ \xi |z|^2} \right )^u  dz \right ]
\end{eqnarray}
Hence,
\begin{eqnarray} \label{antonia2}
- \frac{\partial}{\partial u} \left .   \log M^{(u)}(\tilde{\Qm}^\star) \right |_{u \downarrow 0}  & = &
- \EE \left [ \frac{\eta}{\pi} \int \exp \left ( - \eta \left | z -  V_0\right |^2 \right )  \log \left ( \EE \left [  \exp \left ( - \xi  \left | z - V \right |^2 \right ) \right ] \right ) dz \right ]   \\
& & - \xi \EE \left [ \frac{\eta}{\pi} \int |z|^2 \exp \left ( - \eta \left | z -  V_0\right |^2 \right ) dz \right ]   \\
& = &  - \EE \left [  \log \left ( q_{Y; \xi}(Y)  \right ) \right ] + \log \frac{\xi}{\pi} - \frac{\xi}{\eta} - \xi \EE[|V_0|^2]
\end{eqnarray}
where in the last line we use (\ref{decoupled-q}) and define
\begin{eqnarray} \label{decoupled-q1}
q_{Y; \xi} (y) & = & \int  q_{Y|V;\xi} (y|v) g(v) dv   \\
& = &  \frac{\xi}{\pi} \EE \left [ \exp \left ( - \xi |y - V|^2 \right ) \right ].
\end{eqnarray}
It is understood that if (\ref{saddle-final-chi}) -- (\ref{saddle-final-eta})
have multiple solutions, then the solution that minimizes the free energy
should be chosen.

We conclude by showing that (\ref{free-energy-babau}) can be written in the form (\ref{free-energy-babau-final}).
Putting together (\ref{antonia1})  and the last term of (\ref{antonia2}) and recalling that $\EE [|V_0|^2 ] = \epsilon^\star_0$
and that $\xi= d^\star$  we have:
\begin{eqnarray} \epsilon^\star_1 g^\star + 2 \vartheta^\star d^\star - \omega f^\star - \epsilon^\star_0 \xi  =
\epsilon^\star_1 g^\star + 2 \vartheta^\star d^\star - \omega^\star f^\star -  \epsilon_0^\star d^\star. \end{eqnarray}
Adding and subtracting $\omega d^\star$ and using (\ref{decoupled-mse}) and the definition of $\delta$ (see (\ref{saddle-final-delta})) we have
\begin{eqnarray}
-  \epsilon_0^\star d^\star  + 2 \vartheta^\star d^\star  - \omega^\star d^\star + \omega^\star d^\star - \omega^\star f^\star  + \epsilon^\star_1 g^\star
& =  & (-  \epsilon_0^\star  + 2 \vartheta^\star  - \omega^\star) d^\star + \omega^\star d^\star - \omega^\star f^\star  + \epsilon^\star_1 g^\star    \nonumber \\
& & \\
& = & - \delta  d^\star  -  \omega^\star  (f^\star - d^\star)   + \epsilon^\star_1 g^\star.
\end{eqnarray}
Recalling that $g^\star= f^\star-d^\star$, we obtain
\begin{eqnarray} - \delta  d^\star  +  (\epsilon^\star_1  - \omega^\star  ) g^\star \end{eqnarray}
Recalling that  $\epsilon^\star_1  - \omega^\star = \lambda^\star_2/\gamma  = \chi/\gamma$, we get
\begin{eqnarray} -  \delta  d^\star  +  g^\star \chi/\gamma \end{eqnarray}
Finally, using $d^\star = \xi$ and $\eta = (d^\star)^2/f^\star$ we arrive at
\begin{eqnarray} \label{antonia3}
-  \delta \xi  +  d^\star ( f^\star/ d^\star -1) \chi/\gamma
=   -   \delta \xi  +  d^\star ( (f^\star d^\star)/  (d^\star)^2 -1) \chi/\gamma =  - \delta \xi  +  \xi  (\xi / \eta  -1)  \chi/\gamma
\end{eqnarray}
Next, we use (\ref{antonia3}), the remaining terms of (\ref{antonia2}), (\ref{antonia}) and
the first terms in (\ref{free-energy-babau}), together with the saddle-point equations  
(\ref{saddle-final-chi}) -- (\ref{saddle-final-eta}),
to eventually obtain $\Ec$ in the form (\ref{free-energy-babau-final}).

For the case $q_B(\cdot) = p_B(\cdot)$ and $\gamma = 1$, noticing that 
$\delta = \chi$ and $\xi = \eta$, with $\chi$ and $\eta$ 
given by (\ref{e:fix-pointeq1}) -- (\ref{e:fix-pointeq}), the free energy takes on the form
\begin{eqnarray}  \label{free-energy-babau-final-matched}
\Ec  =  I\left (V_0; V_0 +  \eta^{-\frac{1}{2}} Z \right )  + \int_0^\chi \left (\Rc_{\Rm}(-w) - \eta \right ) dw + \log (e\pi),
\end{eqnarray}
where $Z \sim \Cc\Nc(0,1)$ and where  we used the fact that when $\xi = \eta$ and $V \sim V_0$, 
then $q_{Y; \xi}(y) = p_{Y; \eta}(y) = \frac{\pi}{\eta}  \EE \left [ \exp(-\eta(y- V_0)^2) \right ]$, so that
\begin{eqnarray} 
- \EE \left [ \log q_{Y;\xi}(Y) \right ] - \log \frac{e\pi}{\xi} = h\left (V_0 + \eta^{-\frac{1}{2}} Z \right ) - h\left (\eta^{-\frac{1}{2}} Z \right ) = I\left (V_0; V_0 +  \eta^{-\frac{1}{2}}   Z\right ).  
\end{eqnarray}
Using (\ref{free-energy-babau-final-matched}) in the mutual information expression (\ref{info1}) 
we obtain (\ref{1l}) in Claim \ref{th:x}.

\section{Proof of Theorem \ref{th:y}}\label{proof:y}

We start by recalling some transforms in random matrix theory and some related results from \cite{fnt}.

\begin{definition}
\label{def:eta} The $\eta$-transform of a nonnegative random
variable $X$ is
 \begin{eqnarray}
 \eta_{X} ( s ) = \E \left[ \frac{1}{1 + s X } \right]
 \label{etadef}
 \end{eqnarray}
with $s \geq 0$.  \hfill $\lozenge$
\end{definition}
Note that
\begin{eqnarray}\label{boeta}
\PP(X = 0) <\eta_X ( s ) \leq 1
\end{eqnarray}
with the lower bound asymptotically tight as $s \rightarrow
\infty$.

\begin{definition}
\label{def:shannon} The Shannon transform of a nonnegative random
variable $X$ is defined as
\begin{eqnarray}
{\mathcal V}_{X} ( s ) = \EE [ \log ( 1 + s X ) ]
\label{shtr}
\end{eqnarray}
with $s \geq 0$. \hfill $\lozenge$
\end{definition}

Assuming that the logarithm in (\ref{shtr}) is natural, the $\eta$
and Shannon transforms are related through
\begin{eqnarray}\label{link}
\frac{d}{ds} \Vc_X(s) = \frac{1 - \eta_X(s)}{s}
\end{eqnarray}
Also, it is useful to recall here the definition of the S-transform
of free probability (see \cite{fnt} and references therein),
which is used in some of the proofs that follow.

\begin{definition} \label{def:S}
The S-transform of a nonnegative random variable $X$ is defined as
\begin{eqnarray}\label{defS}
\Sigma_{X} (z) &=& - \frac{z+1}{z} \eta_{X}^{-1} ( z+1 )
\end{eqnarray}
where $\eta_{X}^{-1}(\cdot)$ denotes the inverse function of the
$\eta$-transform. \hfill $\lozenge$
\end{definition}

It is common to denote the $\eta$-transform, the Shannon transform
and the S-transform of the  spectral distribution of a sequence of
nonnegative-definite $n \times n$ random matrices $\Bm$, for $n
\rightarrow \infty$,  by $\eta_\Bm(\cdot)$, $\Vc_\Bm(\cdot)$ and
$\Sigma_{\Bm}(\cdot)$, respectively. In this case, the lower bound
in \eqref{boeta} corresponds to the limiting fraction of zero
eigenvalues of $\Bm$.

\begin{theorem}
Let ${\Am}$ and ${\Bm}$ be  nonnegative asymptotically free random
matrices, then for $0 < s < 1 $,
\begin{eqnarray}
\label{etainp} \eta_{\bf A B}^{-1} ( s ) = \frac{s}{1 -
s} \, \eta_{\bf A  }^{-1} ( s ) \, \eta_{\bf   B}^{-1} (
s )
\end{eqnarray}
\hfill \QED
\end{theorem}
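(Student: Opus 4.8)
The plan is to reduce the identity to the multiplicativity of the S-transform under free multiplicative convolution, and then to invoke (or combinatorially re-derive) that multiplicativity. Starting from Definition~\ref{def:S}, $\Sigma_X(z) = -\frac{z+1}{z}\,\eta_X^{-1}(z+1)$, and putting $z = s-1$ with $s\in(0,1)$, one obtains the elementary rearrangement
\begin{eqnarray}
\eta_X^{-1}(s) \;=\; \frac{1-s}{s}\,\Sigma_X(s-1),
\end{eqnarray}
valid for every nonnegative random variable $X$ whose limiting spectrum does not place mass $1-s$ or more at the origin, so that $\eta_X^{-1}(s)$ is well defined. Applying this to $\Am$, $\Bm$ and $\Am\Bm$, the right-hand side of the claimed identity becomes $\frac{s}{1-s}\cdot\frac{(1-s)^2}{s^2}\,\Sigma_{\Am}(s-1)\,\Sigma_{\Bm}(s-1) = \frac{1-s}{s}\,\Sigma_{\Am}(s-1)\,\Sigma_{\Bm}(s-1)$, while its left-hand side is $\eta_{\Am\Bm}^{-1}(s) = \frac{1-s}{s}\,\Sigma_{\Am\Bm}(s-1)$. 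Hence the statement to prove is equivalent to
\begin{eqnarray}
\Sigma_{\Am\Bm}(s-1) \;=\; \Sigma_{\Am}(s-1)\,\Sigma_{\Bm}(s-1),
\end{eqnarray}
i.e.\ $\Sigma_{\Am\Bm} = \Sigma_{\Am}\,\Sigma_{\Bm}$ on the relevant interval.

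It then remains to establish the multiplicativity of the S-transform for free $\Am$ and $\Bm$, which is Voiculescu's classical theorem. I would prove it combinatorially: expand the moments $\varphi\big((\Am\Bm)^n\big)$ via the moment--cumulant relation as a sum over $\mathrm{NC}(2n)$ applied to the alternating word $\Am\Bm\Am\Bm\cdots$; use the vanishing of mixed free cumulants (freeness of $\Am$ and $\Bm$) to restrict the surviving non-crossing partitions to those respecting the $\Am$/$\Bm$ colouring, which factorizes each term as $\kappa_\pi[\Am,\dots,\Am]\,\kappa_{K(\pi)}[\Bm,\dots,\Bm]$, indexed by a non-crossing partition $\pi$ on the $\Am$-positions and its Kreweras complement $K(\pi)$ on the $\Bm$-positions; finally, package these sums into the moment generating series, pass to the compositional inverse $\psi^{\langle -1\rangle}$, and observe that the normalizing factor $\frac{1+z}{z}$ in the definition of $\Sigma$ is precisely what converts the Kreweras-complement pairing into an honest product of power series. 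Equivalently, one may cite this multiplicativity directly from \cite{fnt}, or use the analytic subordination proof; for this paper's purposes either route suffices since the result is standard.

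The only point requiring care is the domain: for $s\in(0,1)$ one needs $s$ to exceed the atom that the limiting spectra of $\Am$, $\Bm$ and $\Am\Bm$ put at $0$ (recall $\mu_{\Am\Bm}(\{0\}) = \max\{\mu_{\Am}(\{0\}),\mu_{\Bm}(\{0\})\}$ for free multiplicative convolution), so that all three inverse $\eta$-transforms, equivalently all three S-transforms evaluated at $s-1\in(-1,0)$, are defined and analytic; on that common interval the claimed equality is an identity between analytic functions and, once checked on any sub-interval, extends by analytic continuation. I expect the genuine content to lie entirely in the S-transform multiplicativity step — the non-crossing-partition bookkeeping and the resulting generating-function identity — while the reduction in the first paragraph and the domain discussion are routine.
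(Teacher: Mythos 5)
Your reduction is correct, and it is exactly how this identity is normally understood: plugging $z=s-1$ into Definition~\ref{def:S} gives $\eta_X^{-1}(s)=\tfrac{1-s}{s}\,\Sigma_X(s-1)$, and substituting this into both sides of \eqref{etainp} collapses the claim to Voiculescu's multiplicativity $\Sigma_{\Am\Bm}=\Sigma_{\Am}\Sigma_{\Bm}$, which is then either cited (as in \cite{fnt}) or re-derived combinatorially as you outline. The paper itself supplies no proof here — the theorem is stated with a terminal \textsc{qed} mark, signalling a recalled known result from \cite{fnt} — so your argument fills in what the paper leaves implicit, and your domain remark (that one must stay above the atom at zero so all three inverse $\eta$-transforms exist) is a sensible precaution consistent with the lower bound in \eqref{boeta}. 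There is no gap; the only substantive ingredient is indeed the S-transform multiplicativity, and everything else is the algebraic change of variables you perform in the first paragraph.
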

In addition, the following implicit relation is also useful:
\begin{eqnarray}
\label{newark} \eta_{\Am \Bm}(s)= \eta_{\Am} \left (
\frac{s}{\Sigma_{\Bm}(\eta_{\Am \Bm}(s) -1 )}  \right)
\end{eqnarray}

The next two results are instrumental to the proof of Theorem \ref{th:y}. While
they might have appeared elsewhere, a simple and self-contained proof is given 
here for the sake of completeness.

\begin{theorem}\label{thm:finalapp}
Let ${\Am}$ and ${\Bm}$ be  nonnegative asymptotically free random
matrices. For $s \geq 0$, let  $(\eta,\alpha,\nu)$ be the
solution of the system of equations:
\begin{eqnarray}
\eta &=& \eta_{\bf A} \left ( \alpha\, s \right)
\label{eta-equivalent}
\\
\eta&=& \eta_{\bf B}\left ( \nu s\right) \label{putomadrid}
\\
\eta &=& \frac{1}{1+\alpha\,  \nu s} \label{putomilan}
\end{eqnarray}
Then, the $\eta$-transform of ${\Am} \Bm$ is given by
\begin{eqnarray}
\label{gesu} \eta_{{\Am} \Bm }( s ) = \eta.
\end{eqnarray}
\end{theorem}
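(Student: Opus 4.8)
The plan is to recognize the system \eqref{eta-equivalent}--\eqref{putomilan} as an encoding, via a change of variables, of the multiplicative free convolution identity \eqref{etainp} for the inverse $\eta$-transform of $\Am\Bm$, and to read Theorem~\ref{thm:finalapp} off from that identity. First I would dispose of the degenerate case $s=0$: each of \eqref{eta-equivalent}--\eqref{putomilan} forces $\eta=1$, which agrees with $\eta_{\Am\Bm}(0)=1$ from Definition~\ref{def:eta}, so \eqref{gesu} holds trivially. For $s>0$ I would record the elementary fact that $\eta_{\Am}(\cdot)$, $\eta_{\Bm}(\cdot)$ and $\eta_{\Am\Bm}(\cdot)$ are strictly decreasing on $[0,\infty)$ — immediate from \eqref{etadef}, since $t\mapsto 1/(1+tX)$ is strictly decreasing whenever the matrix is not asymptotically the zero matrix — hence each is a bijection from $[0,\infty)$ onto its range $(\PP(\cdot=0),1]$ and admits a well-defined, nonnegative-valued inverse on that range.

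The core computation is then short. Given any solution $(\eta,\alpha,\nu)$ of \eqref{eta-equivalent}--\eqref{putomilan}, note that \eqref{eta-equivalent}--\eqref{putomadrid} are equivalent, by invertibility, to $\alpha s=\eta_{\Am}^{-1}(\eta)$ and $\nu s=\eta_{\Bm}^{-1}(\eta)$, while \eqref{putomilan} rearranges to $\alpha\nu s=(1-\eta)/\eta$. Multiplying the first two relations and substituting the third gives
\[
\eta_{\Am}^{-1}(\eta)\,\eta_{\Bm}^{-1}(\eta)=(\alpha s)(\nu s)=\alpha\nu s^{2}=\frac{1-\eta}{\eta}\,s ,
\]
that is, $s=\dfrac{\eta}{1-\eta}\,\eta_{\Am}^{-1}(\eta)\,\eta_{\Bm}^{-1}(\eta)$. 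By the free-probability identity \eqref{etainp} the right-hand side equals $\eta_{\Am\Bm}^{-1}(\eta)$, so $s=\eta_{\Am\Bm}^{-1}(\eta)$, i.e. $\eta_{\Am\Bm}(s)=\eta$, which is \eqref{gesu}.

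Running this chain of equalities backwards establishes existence and uniqueness simultaneously. For existence, set $\eta:=\eta_{\Am\Bm}(s)\in(0,1)$, put $\alpha:=\eta_{\Am}^{-1}(\eta)/s\ge 0$ and $\nu:=\eta_{\Bm}^{-1}(\eta)/s\ge 0$; then \eqref{eta-equivalent} and \eqref{putomadrid} hold by construction, and \eqref{putomilan} follows by rearranging \eqref{etainp} evaluated at $\eta$. Uniqueness in $\eta$ is then forced by the previous paragraph — any solution has $\eta=\eta_{\Am\Bm}(s)$ — after which $\alpha$ and $\nu$ are pinned down by the invertibility of $\eta_{\Am}$ and $\eta_{\Bm}$. (If one prefers, the single displayed identity above can equally be obtained from the implicit relation \eqref{newark} together with Definition~\ref{def:S}, but the route through \eqref{etainp} is the cleanest.)

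I expect the only delicate point to be domain bookkeeping: one must be sure the relevant value $\eta$ lies strictly inside the ranges $(\PP(\Am\!=\!0),1]$, $(\PP(\Bm\!=\!0),1]$ and $(\PP(\Am\Bm\!=\!0),1]$ so that all three inverse $\eta$-transforms are defined at $\eta$ and \eqref{etainp} — which is stated for arguments in $(0,1)$ — is applicable. This is automatic for $s>0$: on the one hand $\eta_{\Am\Bm}(s)<1$, and on the other hand, for any candidate solution, equations \eqref{eta-equivalent}--\eqref{putomadrid} cannot even be written unless $\eta$ is in the ranges of $\eta_{\Am}$ and $\eta_{\Bm}$, and the freeness of $\Am,\Bm$ guarantees the zero-eigenvalue mass of $\Am\Bm$ is no larger than those of the factors. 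Apart from this, the argument is pure algebraic manipulation of the three defining equations, so no further obstacle is anticipated.
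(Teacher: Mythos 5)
Your proof is correct. It is closely related to the paper's but takes a cleaner, more symmetric route: you eliminate $\alpha$ and $\nu$ from the system by multiplying $\alpha s=\eta_{\Am}^{-1}(\eta)$ and $\nu s=\eta_{\Bm}^{-1}(\eta)$ and invoking \eqref{putomilan}, then read off the conclusion from the inverse-$\eta$ form of multiplicative free convolution \eqref{etainp}. The paper instead proceeds constructively and asymmetrically from \eqref{newark}: it starts from $\eta:=\eta_{\Am\Bm}(s)$, defines $\alpha=1/\Sigma_{\Bm}(\eta-1)$ so that \eqref{eta-equivalent} holds, then unfolds the S-transform definition (Definition~\ref{def:S}) to produce $\nu$ and verify \eqref{putomadrid}--\eqref{putomilan}. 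Both arguments ultimately rest on the same multiplicative free convolution fact, but yours has two modest advantages: it treats $\Am$ and $\Bm$ symmetrically, and it explicitly establishes both directions (that any solution of the system yields $\eta=\eta_{\Am\Bm}(s)$, and that a solution exists), whereas the paper's proof only exhibits one solution and leaves uniqueness implicit. Your domain-bookkeeping remarks about the ranges of the inverse $\eta$-transforms are sensible and correctly identify the only place where care is needed.
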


\begin{IEEEproof} Letting $\eta_{{\Am} \Bm }( s ) = \eta(s)$ for simplicity of notation and using \eqref{newark},  we have:
\begin{eqnarray} \label{atroops}
\eta(s)= \eta_{\Am }( \alpha\,  s)
\end{eqnarray}
where
\begin{eqnarray} \label{atroops1}
\alpha=\frac{1}{\Sigma_{\bf B}(\eta(s) -1 )}
\end{eqnarray}
which is equivalent, using Definition \ref{def:S}, to:
\begin{eqnarray} \label{atroops2}
\eta_{\bf B}\left(\frac{1}{\alpha}\left( \frac{1}{\eta(s)}-1 \right)
\right) =\eta(s)
\end{eqnarray}
Letting $$ \nu= \frac{1}{ \alpha\,  s}\left( \frac{1}{\eta(s)}-1
\right),
 $$
from  \eqref{atroops} and  \eqref{atroops2}, Theorem
\ref{thm:finalapp} follows immediately. \end{IEEEproof}

As a consequence of Theorem \ref{thm:finalapp},  we have:
\begin{theorem}\label{thm:finalapp1}
Let ${\Am}$ and ${\Bm}$ be  nonnegative asymptotically free random
matrices. The Shannon-transform of ${\Am}{\Bm}$  is given by
\begin{eqnarray}
{\mathcal V}_{{\Am}{\Bm}}(  s )  =  {\mathcal V}_{{\Am}}\left(
\alpha\,  s  \right) + {\mathcal V}_{{\Bm}}\left(\nu s
\right)  - \log ( 1 + \alpha\,  \nu s ) \label{eurocopa}
\end{eqnarray}
where $\alpha$ and $\nu$ are the solutions of the system of equations
(\ref{eta-equivalent}) - (\ref{putomilan}), which depend on $s$.
\end{theorem}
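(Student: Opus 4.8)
The plan is to prove \eqref{eurocopa} by differentiating both sides in $s$ and then fixing the constant of integration at $s=0$. The engine is the relation \eqref{link}, $\frac{d}{ds}\Vc_X(s)=\frac{1-\eta_X(s)}{s}$, together with Theorem \ref{thm:finalapp}, which tells us that $\eta_{\Am\Bm}(s)$ equals the common value $\eta$ generated by the system \eqref{eta-equivalent}--\eqref{putomilan}.

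Write $F(s)$ for the right-hand side of \eqref{eurocopa}. For $s>0$, and assuming (by the implicit function theorem, using the smoothness and strict monotonicity of the $\eta$-transforms away from the origin) that $\alpha=\alpha(s)$ and $\nu=\nu(s)$ are differentiable, I would compute
\[
F'(s)=\Vc_{\Am}'(\alpha s)\,(\alpha+s\alpha')+\Vc_{\Bm}'(\nu s)\,(\nu+s\nu')-\frac{\alpha\nu+s(\alpha'\nu+\alpha\nu')}{1+\alpha\nu s}.
\]
Then I substitute $\Vc_{\Am}'(\alpha s)=\frac{1-\eta_{\Am}(\alpha s)}{\alpha s}=\frac{1-\eta}{\alpha s}$ and $\Vc_{\Bm}'(\nu s)=\frac{1-\eta}{\nu s}$, using \eqref{eta-equivalent}--\eqref{putomadrid}, and I use \eqref{putomilan} in two forms, $\frac{1}{1+\alpha\nu s}=\eta$ and $\eta\,\alpha\nu s=1-\eta$ (equivalently $\eta\nu s=\frac{1-\eta}{\alpha}$ and $\eta\alpha s=\frac{1-\eta}{\nu}$). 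The contributions proportional to $\alpha'$ and $\nu'$ then cancel pairwise, and what is left is $F'(s)=\frac{1-\eta}{s}$, which by Theorem \ref{thm:finalapp} and \eqref{link} is exactly $\Vc_{\Am\Bm}'(s)$.

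It then remains to evaluate both sides at $s=0$. As $s\downarrow 0$ one has $\eta_{\Am\Bm}(s)\to 1$, so $\alpha s\to 0$ and $\nu s\to 0$ (by \eqref{eta-equivalent}--\eqref{putomadrid} and the strict monotonicity of the $\eta$-transforms), whence $\Vc_{\Am}(\alpha s),\Vc_{\Bm}(\nu s)\to 0$; moreover \eqref{putomilan} gives $\alpha\nu s=\frac{1-\eta}{\eta}\to 0$, so $\log(1+\alpha\nu s)\to 0$. Hence $F(0)=0=\Vc_{\Am\Bm}(0)$, and combining this with $F'\equiv \Vc_{\Am\Bm}'$ on $(0,\infty)$ gives $F\equiv\Vc_{\Am\Bm}$, i.e.\ \eqref{eurocopa}. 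The main obstacle is the bookkeeping in the derivative step: one has to invoke the three relations of Theorem \ref{thm:finalapp} at exactly the right places so that the $\alpha'$- and $\nu'$-terms visibly annihilate, which is where the saddle-point structure of the system \eqref{eta-equivalent}--\eqref{putomilan} is really doing the work; the differentiability of $\alpha(\cdot)$ and $\nu(\cdot)$ is a minor technical point disposed of by the implicit function theorem.
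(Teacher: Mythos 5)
Your proposal is correct and follows essentially the same strategy as the paper: both sides of \eqref{eurocopa} vanish at $s=0$, and the derivatives agree on $(0,\infty)$ by applying \eqref{link} together with the three relations of Theorem \ref{thm:finalapp} (the paper organizes the computation as showing that the derivative of the difference is zero, whereas you show the two derivatives each equal $\frac{1-\eta}{s}$, but the algebraic cancellation of the $\dot\alpha,\dot\nu$ terms is identical).
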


\begin{IEEEproof}
The proof  follows an idea originated in
\cite{verdu-shamai-random-cdma-fading} to write the Shannon
transform when the $\eta$-transform is given as the solution of a
fixed-point equation: for any differentiable function $f$, the
definition of the Shannon transform of an arbitrary nonnegative
random variable $X$ leads to
\begin{eqnarray} \label{play}
\frac{\intd}{\intd x} \mathcal{V}_X ( s f(s) ) = \EE \left[ 
\frac{(s \dot{f} (s)  + f(s)) X}{1 + s f(s) X}\right],
\end{eqnarray}
where the ``dot'' here denotes differentiation with respect to the variable $s$.
Since both sides of (\ref{eurocopa}) are equal to zero at
$s =0$, it is sufficient to show that the derivatives with respect to $s$ of both sides of (\ref{eurocopa})
 coincide. Letting $\Asf$ and $\Bsf$ denote random variables distributed according to the spectral distribution of
 $\Am$ and $\Bm$, respectively, differentiating w.r.t. $s$ the difference of the right side minus the left side of (\ref{eurocopa}) yields
 \begin{eqnarray}  
&& \EE\left [  \frac{( \dot{\alpha}  s  + \alpha ) \Asf }{ 1
+ \alpha\,  s  \Asf }   \right ] + \EE \left[  \frac{(
\dot{\nu} s + \nu)\Bsf }{ 1 + \nu  s
\Bsf }  \right] - \frac{\alpha\,  \nu + \alpha\,  \dot{\nu} s +
\dot{\alpha} \nu s}{1 + \alpha\,  \nu s} - \frac{1 -
\eta_{{\Am}{\Bm}} (s)}{s}
\label{fabregas}\\
&=& \frac{\dot{\alpha}  s  + \alpha}{\alpha\,  s}\left( 1 -
\eta_{\Am} (\alpha\,  s)\right) + \frac{ \dot{\nu}
s + \nu}{\nu  s}\left( 1 - \eta_{\Bm} (\nu s )
\right) - \frac{\alpha\,  \nu + \alpha\,  \dot{\nu} s + \dot{\alpha}
\nu s}{1 + \alpha\,  \nu s} - \frac{1 - \eta_{{\Am}{\Bm}}
(s)}{s}
  \\
& & \label{torres}\\
&=& \frac{\dot{\alpha}  s  + \alpha}{\alpha\,  s}\left( 1 -
\eta\right) + \frac{ \dot{\nu} s + \nu}{\nu  s}\left( 1 -
\eta\right) - \frac{\alpha\,  \nu + \alpha\,  \dot{\nu} s +
\dot{\alpha} \nu s}{1 + \alpha\,  \nu s} - \frac{1 -
\eta}{s}
\label{villa}\\
&=& \left( \alpha\,  \nu + \alpha\,  \dot{\nu} s + \dot{\alpha} \nu
s \right) \left( \frac{1 - \eta}{\alpha\,  \nu s} - \eta
\right)
\\
\label{iker}&=& 0
\end{eqnarray}
where used (\ref{link})
to write the left side of (\ref{fabregas});
the right side of (\ref{fabregas}) follows from the definition of the
$\eta$-transform; (\ref{villa}) follows from Theorem
\ref{thm:finalapp} for $(\eta, \alpha,\nu)$ solutions of (\ref{eta-equivalent}) - (\ref{putomilan});
and \eqref{iker} follows again from the equality in (\ref{putomilan}).
\end{IEEEproof}

Theorem \ref{th:y} now follows as an application of Theorem \ref{thm:finalapp1} by identifying the terms.
We write
\begin{eqnarray}
\Ic_2 & = & \lim_{n\rightarrow\infty} \frac{1}{n} I(\xv;\yv|\Am,\Um,\bv) \\
& = & \lim_{n\rightarrow\infty} \frac{1}{n} \EE \left [ \log \det \left ( \Id + \Pc_x \Am \Um \Bm \Bm^\dagger \Um^\dagger \Am^\dagger \right ) \right ] \\
& = & \lim_{n\rightarrow\infty} \frac{1}{n} \EE \left [ \log \det \left ( \Id + \Pc_x \Um^\dagger \Am^\dagger \Am \Um \Bm \Bm^\dagger  \right ) \right ] \\
& = & \lim_{n\rightarrow\infty} \frac{1}{n} \EE \left [ \log \det \left ( \Id + \Pc_x \Rm  \Bm \Bm^\dagger   \right ) \right ] \\
& = & \Vc_{\Rm \Bm \Bm^\dagger}(\Pc_x) \\
& = & \Vc_{\Rm}(\alpha\,  \Pc_x) + \Vc_{\Bm \Bm^\dagger}(\nu\Pc_x) - \log (1 + \alpha\,  \nu \Pc_x)
\end{eqnarray}
where $(\alpha,\nu)$ are solutions of (\ref{eta-equivalent}) - (\ref{putomilan}) after replacing $\Am$ by $\Rm$,
$\Bm$ by $\Bm \Bm^\dagger$ and $\gamma$ by $\Pc_x$.
The final expressions (\ref{2I}) and  (\ref{e:fix-pointeq-alpha-nu}) follow by noticing that the spectral distribution of $\Bm$ has only two
mass points at zero and at one, with probabilities $1 - q$ and $q$, respectively.

\section{Proof of Theorems \ref{th:bounds} and \ref{LB-theorem}}\label{proof:bounds}

Notations are as in Section \ref{sc:setup}, following the observation model (\ref{model1}). In particular,
we let $\Xm = \diag(\xv)$ and $\Bm = \diag(\bv)$, and $\vv = \Xm \bv = \Bm \xv$.

{\bf Proof of bound (\ref{minfo-bound1}):}
We have
\begin{eqnarray} \label{minfo-bound-proof1}
I(\vv;\yv | \Am,\Um)
& \leq & \EE \left [ \log \det \left ( \Id   +    q \Pc_x \Am \Um \Um^\dagger \Am^\dagger\right ) \right ]
\end{eqnarray}
where the inequality follows by the fact that, conditionally on $\Am, \Um$, the differential entropy of $\yv = \Am \Um \vv + \zv$ for assigned covariance
\begin{eqnarray} \label{sAm}
 \EE[ \yv \yv^\dagger |\Am,\Um] = \Id + q \Pc_x \Am \Um \Um^\dagger \Am^\dagger
\end{eqnarray}
is maximized by a Gaussian complex circularly symmetric distribution $\yv \sim \Cc\Nc(\zerov,  \Id + q \Pc_x \Am \Um \Um^\dagger \Am^\dagger)$.
Recalling the definition of $\Rm$ in (\ref{RR}), we have
\begin{eqnarray} \label{shannon-transf}
\lim_{n \rightarrow \infty} \frac{1}{n}  \EE \left [ \log \det \left ( \Id   +    q \Pc_x \Am \Um \Um^\dagger \Am^\dagger\right ) \right ] & = &
\EE[ \log (1 + q\Pc_x |{\sf R}|^2 )]   \\
& = & \Vc_{\Rm}(q\Pc_x),
\end{eqnarray}
from the definition of Shannon transform.  Hence, (\ref{minfo-bound1}) follows.

{\bf Proof of bound (\ref{minfo-bound3}):}
This bound can be regarded as a ``matched filter bound'' on the vector channel with input $\vv = \Bm\xv$ and output $\yv$.  We can write
\begin{eqnarray}
I(\vv; \yv|\Am,\Um) & = & I(\vv; \Am\Um \vv + \zv |\Am,\Um)  \\
& = & \sum_{i=1}^n I \left (V_i ; \Am \Um \vv + \zv | \Am, \Um, V_1^{i-1} \right ) \\
& \leq & \sum_{i=1}^n I \left (V_i ; \Am \Um \vv + \zv, V_{i+1}^n  | \Am, \Um, V_1^{i-1} \right ) \\
& \stackrel{(a)}{=}  & \sum_{i=1}^n I \left (V_i ; \Am \Um \vv + \zv | \Am, V_1^{i-1}, V_{i+1}^n \right ) \\
& \stackrel{(b)}{=}  & \sum_{i=1}^n I \left (V_i ; \Am \uv_i V_i  + \zv | \Am,\Um \right ) \\
& \stackrel{(c)}{=}  & \sum_{i=1}^n I \left (V_i ; \uv_i^\dagger \Am^\dagger \Am \uv_i V_i  + W_i | \Am,\Um \right )
\end{eqnarray}
where (a) follows from the fact that $\vv$ is iid, in (b)
we define $\uv_i$ to be the $i$-th columns of $\Um$ and in (c)
we define $\uv_i^\dagger \Am^\dagger \zv = W_i \sim \Cc\Nc(0,\uv_i^\dagger \Am^\dagger \Am^\dagger \uv_i )$, conditionally on $\Am, \Um$.
Dividing both sides by $n$, letting defining the iid variables $Z_i \sim \Cc\Nc(0,1)$ and taking the limit, we obtain
\begin{eqnarray}  \label{mfb}
\Ic_1 & \leq  & \lim_{n\rightarrow \infty} \frac{1}{n} \sum_{i=1}^n I \left (\left . V_i ; \sqrt{\uv_i^\dagger \Am^\dagger \Am \uv_i}  V_i  + Z_i \right | \Am,\Um \right )   \\
& \stackrel{(a)}{\leq} &  \lim_{n\rightarrow \infty}  I \left (\left . V_i ; \sqrt{\frac{1}{n} \sum_{i=1}^n \uv_i^\dagger \Am^\dagger \Am \uv_i}  V_i  + Z_i \right | \Am,\Um \right )   \\
& = & I\left (V_0; \sqrt{\EE[|{\sf R}|^2]} V_0 + Z \right )
\end{eqnarray}
where $Z \sim Z_i$, where by definition $\frac{1}{n} \sum_{i=1}^n \uv_i^\dagger \Am^\dagger \Am \uv_i = \frac{1}{n} \trace (\Rm) \rightarrow \EE[|{\sf R}|^2]$ and
where in (a) we used Jensen's inequality and the fact that the mutual information $I(V; \sqrt{s} V + Z)$, for any distribution of $V$ with bounded second moment, is concave in $s$ \cite{guo-verdu-shamai}.

{\bf Proof of bound (\ref{I1LB}):}
Let $\Um = [\uv_1, \ldots, \uv_n]$ where $\uv_j$ denotes the $j$-th column of $\Um$. Then,   we have
\begin{eqnarray}
I(\vv; \yv | \Am,\Um)
& = & \sum_{i=1}^n I(V_i ; \yv |\Am,\Um, V_{i+1}^n) \\
& = & \sum_{i=1}^n I\left (\left . V_i ; \yv  -  \Am \sum_{j=i+1}^{n} \uv_j V_j \right |\Am,\Um, V_{i+1}^n \right ) \label{lbstep1} \\
& \geq & \sum_{i=1}^n I\left (\left . V_i ; \gv_i^\dagger \left ( \yv  -  \Am \sum_{j=i+1}^{n} \uv_j V_j \right ) \right |\Am,\Um, V_{i+1}^n \right ) \label{lbstep2} \\
& = & \sum_{i=1}^n I\left (\left . V_i ; \gv_i^\dagger \left ( \Am \sum_{j=1}^i \uv_j V_j + \zv \right ) \right |\Am,\Um \right ) \label{lbstep3}
\end{eqnarray}
where (\ref{lbstep1}) follows by the chain rule and by subtracting the conditioning term, preserving the mutual information,
(\ref{lbstep2}) holds for any linear projection defined by the vector $\gv_i$, function of $\Am, \Um$,
(\ref{lbstep3}) follows by noticing that the arguments of the mutual information do not depend any longer on $V_{i+1}^n$.

Next, we choose $\widetilde{\gv}_i$ to be the linear MMSE (LMMSE) receiver for ``user'' $i$, of the formally equivalent  CDMA system
\begin{eqnarray}  \label{merouane-channel}
\rv_i = \Am \uv_i V_i + \Am \underbrace{\left [ \uv_1, \uv_2, \ldots, \uv_{i-1} \right ]}_{\Um_{i-1}}  \left [ \begin{array}{c} V_1 \\ V_{2} \\ \vdots \\ V_{i-1} \end{array} \right ] + \zv,
\end{eqnarray}
In particular, using $\EE[|V_i|^2] = q \Pc_x$, we obtain
\begin{eqnarray} \label{lmmse-bingo}
\gv_i = \left [ \Id  + q\Pc_x  \Am \Um_{i-1} \Um_{i-1}^\dagger \Am^\dagger \right ]^{-1} \Am \uv_i
\end{eqnarray}
We indicate by
\begin{eqnarray}
\eta^{(n)}(q\Pc_x; i/n)  = \uv_i^\dagger \Am^\dagger \left [ \Id  + q \Pc_x \Am \Um_{i-1}\Um_{i-1}^\dagger \Am^\dagger \right ]^{-1} \Am \uv_i
\end{eqnarray}
the corresponding {\em multiuser efficiency} of the LMMSE detector for ``user'' $i$. Noticing that, in the limit of $n \rightarrow \infty$,
the residual noise plus interference at the output of the LMMSE detector is marginally Gaussian (we omit the explicit proof of this well-known fact,
which holds under the assumptions of our model), \cite{fnt} letting $\beta = i/n$, and denoting by $\eta(q \Pc_x; \beta)$ the limiting multiuser efficiency for
$n \rightarrow \infty$,  from (\ref{lbstep3}) we arrive at (\ref{I1LB}) by dividing by $n$ and
taking the limit.

\section{Proof of the Decoupling Principle}\label{proof:decoupling}

Notations and definitions are as in Section \ref{decoupling} and Appendix \ref{proof:x}.
We let $(b_{0\kappa}, b_\kappa, \widehat{b}_\kappa)$ denote the $\kappa$-th components of the random vectors
$\bv_0, \bv, \widehat{\bv}$, obeying the joint $n$-variate conditional  distribution (\ref{joint-n-variate}) for given $\Am,\Um$,
with $\widehat{\bv} = \widehat{\bv}(\yv, \Am, \Um)$ given by (\ref{PME}).
We are interested in showing that the asymptotic joint marginal distribution of $(b_{0\kappa}, b_\kappa, \widehat{b}_\kappa)$, 
for some generic index $\kappa$, converges to the joint distribution of the triple $(B_0, B, \widehat{B})$ 
given by (\ref{joint-decoupled}) in Section \ref{decoupling}, independent of $\kappa$.

To this purpose, we follow in the footsteps of \cite{guo-verdu} and consider the calculation of the joint moments
$\EE[b_{0\kappa}^i b^j_\kappa]$ for arbitrary integers $i,j \geq 0$.
Since the moments are uniformly bounded, the $\kappa$-th joint marginal distribution
is thus uniquely determined  due to Carleman's Theorem \cite[p. 227]{feller}. 
The desired result will follow upon showing that the moments converge to limits independent of $\kappa$.
Furthermore, as we will see, the form of the asymptotic moments yields explicitly the joint distribution of $(B_0, B, \widehat{B})$ 
given in (\ref{joint-decoupled}).

In order to proceed, we define the replicated model given by the distribution
of $\bv_0, \yv, \bv_1,\ldots, \bv_u$, for given $\Am, \Um$, as:
\begin{eqnarray} \label{TCVS-joint-replica}
p_{\bv_0}(\bv_0)  p_{\yv|\bv_0,\Am,\Um}(\yv|\bv_0,\Am,\Um)  \prod_{a=1}^u q_{\bv|\yv,\Am,\Um}(\bv_a|\yv,\Am,\Um).
\end{eqnarray}
All expectations in the following derivations are with respect to the joint measure (\ref{TCVS-joint-replica}).
For a function $f(\bv_0, \bv_1, \ldots, \bv_u)$, we define
\begin{eqnarray} \label{TCVS-Zuh}
Z^{(u)}(\yv, \Am, \Um, \bv_0; h) & = & \sum_{\bv_1, \ldots, \bv_u} e^{h f(\bv_0, \bv_1, \ldots, \bv_u)} \prod_{a=1}^u q_{\bv}(\bv_a) q_{\yv|\bv,\Am,\Um}(\yv|\bv_a,\Am,\Um).
\end{eqnarray}
By \cite[Lemma 1]{guo-verdu}, if $\EE[f(\bv_0, \bv_1, \ldots, \bv_u) | \yv, \Am, \Um, \bv_0]$ is $O(n)$ and does not depend on $u$, then
\begin{eqnarray} \label{lemma-gv1}
\lim_{n \rightarrow \infty} \frac{1}{n}  \EE[f(\bv_0, \bv_1, \ldots, \bv_u) ] = \lim_{n \rightarrow \infty} \lim_{u \rightarrow 0} \left . \frac{\partial}{\partial h} \frac{1}{n} \log \EE[ Z^{(u)}(\yv, \Am, \Um, \bv_0; h) ] \right |_{h = 0}.
\end{eqnarray}
In our case, we let $f(\bv_0, \bv_1, \ldots, \bv_u) = \sum_{k=1}^{n} b_{0k}^i b_{mk}^j$ for given $i,j \geq 0$, and some replica index
$m \in \{1,\ldots, u\}$.  By the symmetry with respect to the replica index and the indices of the vector components, for any $\kappa$ we can write
\begin{eqnarray} \label{moments-step0}
\EE[b_{0\kappa}^i b_\kappa^j] & = & \lim_{n \rightarrow \infty} \frac{1}{n} \EE \left [ \sum_{k=1}^{n}  b_{0k}^i b_{mk}^j \right ]   \\
& = & \lim_{n \rightarrow \infty} \frac{1}{n} \EE[ f(\bv_0, \bv_1, \ldots, \bv_u) ].
\end{eqnarray}
Using the procedure outlined before, we need to calculate
\begin{eqnarray} \label{TCVS-ziofa}
\lim_{n \rightarrow \infty} \lim_{u \rightarrow 0} \left . \frac{\partial}{\partial h} \frac{1}{n} \log \EE[ Z^{(u)}(\yv, \Am, \Um, \bv_0; h) ] \right |_{h = 0}.
\end{eqnarray}
As usual in replica derivations, we switch limits and calculate first
\begin{eqnarray} \label{TCVS-ziofafa}
\lim_{n \rightarrow \infty} \frac{1}{n} \log \EE[ Z^{(u)}(\yv, \Am, \Um, \bv_0; h) ].
\end{eqnarray}
In passing, we notice that $Z^{(u)}(\yv, \Am, \Um, \bv_0; h=0) =  Z^u(\yv, \Am, \Um)$, so that the calculation of
(\ref{TCVS-ziofa}) is closely related to the calculation of the free energy by the replica method in Appendix \ref{proof:x},
i.e., to  the evaluation of the limit
\begin{eqnarray} \label{TCVS-free-energy1}
\lim_{n \rightarrow \infty} \frac{1}{n} \log \EE[ Z^u(\yv, \Am, \Um) ].
\end{eqnarray}

Operating along the same steps leading to (\ref{free-energy-step5}) in the derivation of the free energy (see Appendix \ref{proof:x}),  we arrive at:
\begin{align}
&  \lim_{n \rightarrow \infty} \frac{1}{n} \log \EE[ Z^{(u)}(\yv, \Am, \Um, \bv_0; h)]   \\
& = \; u \log \gamma - u \log \pi - \log(1 + u\gamma) +   \\
&  \; + \lim_{n \rightarrow \infty} \frac{1}{n} \log \left ( \sum_{\bv_0, \ldots, \bv_u} \int  d\Xm_0 \cdots d\Xm_u p_{\bv_0}(\bv_0) \prod_{a=1}^u q_{\bv}(\bv_a)  \prod_{a=0}^u p_{\xv}(\Xm_a)    \right . \label{sucalimit} \\
& \;  \left . \cdot \exp \left ( h \sum_{k=1}^n b_{0k}^i b_{mk}^j\right )  \cdot  \exp \left ( -n \sum_{i=1}^u \int_0^{\lambda_i(\Lm)} \Rc_{\Rm}(-w) dw \right ) \right )  \label{gfunct}
\end{align}
We notice that the second exponential term in (\ref{gfunct}) is identical to what appears in the computation of (\ref{TCVS-free-energy1})
and, following the steps in Appendix \ref{proof:x}, yields an exponential term $\exp (- n \Gc^{(u)}(\Qm))$ given in (\ref{chandra-exponent}),
function of the empirical correlations  of the vectors $\vv_a = \Xm_a \bv_a$ as defined in (\ref{empirical-corr}), and collected in the empirical correlation
matrix $\Qm$ whose form, under the RS assumption,  is given in  (\ref{QRS}).

Invoking the large deviation theorem,  we can write
\begin{align} 
& \displaystyle{\sum_{\bv_0, \ldots, \bv_u} \int  d\Xm_0 \cdots d\Xm_u p_{\bv_0}(\bv_0) 
\prod_{a=1}^u q_{\bv}(\bv_a)  \prod_{a=0}^u p_{\xv}(\Xm_a) \cdot \exp \left ( h \sum_{k=1}^n b_{0k}^i b_{mk}^j\right )
\cdot   \exp \left ( -n \Gc^{(u)} (\Qm) \right )}  \\
& \; = \EE \left [ \int \exp \left ( -n \Gc^{(u)} (\Qm) \right ) \mu^{(u)}_n(d\Qm;h |\bv_0,\ldots, \bv_u, \Xm_0, \ldots, \Xm_u) \right ] \\
& \; = \int \exp \left ( -n \Gc^{(u)} (\Qm) \right ) \mu^{(u)}_n(d\Qm;h) \label{exact-large-dev} \\
& \; \approx \int   \exp \left ( -n \left ( \Gc^{(u)} (\Qm)+ I^{(u)}(\Qm;h) \right ) \right ) d\Qm \label{moments-step1}
\end{align}
where the approximation step holds in the sense that when $n$ gets large 
we can replace the argument of the logarithm in 
(\ref{sucalimit}) with the quantity in (\ref{moments-step1}).

Using Cram\'er's theorem, we have that the rate function $I^{(u)}(\Qm;h)$  for the measure
\begin{eqnarray} 
\mu^{(u)}_n(d\Qm;h) & = &
\sum_{\bv_0, \ldots, \bv_u} \int  d\Xm_0 \cdots d\Xm_u p_{\bv_0}(\bv_0) \prod_{a=1}^u q_{\bv}(\bv_a)  \prod_{a=1}^u p_{\xv}(\Xm_a)   \\
& & \cdot \exp \left ( h \sum_{k=1}^n b_{0k}^i b_{mk}^j\right ) \cdot \prod_{a\leq a'}^u \delta \left ( \sum_{k=1}^n x_{ka}b_{ka}x^*_{ka'}b^*_{ka'} - n Q_{a,a'}\right ) d\Qm  \\
& = & \EE \left [ \exp \left ( h \sum_{k=1}^n b_{0k}^i b_{mk}^j\right ) \prod_{a\leq a'}^u \delta \left ( \sum_{k=1}^n x_{ka}b_{ka}x^*_{ka'}b^*_{ka'} - n Q_{a,a'}\right ) d\Qm \right ] \label{measure-fuckedup}
\end{eqnarray}
is given by the Legendre-Fenchel  transform
\begin{eqnarray}
I^{(u)}(\Qm;h) = \sup_{\tilde{\Qm}} \left \{ \trace(\tilde{\Qm} \Qm ) - \log M^{(u)}(\tilde{\Qm};h) \right \}
\end{eqnarray}
where the relevant MGF for the measure (\ref{measure-fuckedup}) is
\begin{eqnarray} \label{MGFV-fuckedup}
M^{(u)}(\tilde{\Qm};h) = \EE \left [ \exp\left (h B_0^i B_m^j +  \underline{\bv}^\herm \underline{\Xm}^\herm \tilde{\Qm} \underline{\Xm} \underline{\bv} \right ) \right ],
\end{eqnarray}
where we define $\underline{\bv} = (B_0, B_1, \ldots, B_u)^\transp$ and $\underline{\Xm} = (X_0, X_1, \ldots, X_u)^\transp$, with
$B_0 \sim p_{B_0}$ (Bernoulli-$q$), 
$B_a \sim q_B$ (marginal of the assumed prior distribution $q_{\bv}(\cdot)$) 
for $1 \leq a \leq u$, and $X_a \sim p_X(x) = \frac{1}{\pi\Pc_x} e^{-|x|^2/\Pc_x}$ 
for all $0 \leq a \leq u$.

Plugging (\ref{MGFV-fuckedup}) into (\ref{moments-step1}) and the resulting expression in the limit of $\frac{1}{n} \log (\cdot)$ appearing in (\ref{sucalimit})
and applying Varadhan's lemma, we arrive at the saddle-point condition
\begin{eqnarray} \label{moments-step2}
\lim_{n \rightarrow \infty} \frac{1}{n} \log \left (  \int \mu^{(u)}_n(d\Qm)  \exp \left ( -n \Gc^{(u)}(\Qm) \right ) \right )  & = &
 -\inf_{\Qm} \sup_{\tilde{\Qm}} \left \{ \Gc^{(u)}(\Qm)  + \trace(\tilde{\Qm} \Qm ) - \log M^{(u)}(\tilde{\Qm};h)  \right \}   \nonumber \\
 & &
\end{eqnarray}
Following the replica derivation steps outlined at the beginning of this Appendix, we have to determine
the saddle-point $\Qm^\star(h)$ and $\tilde{\Qm}^\star(h)$ achieving the extremal condition in
(\ref{moments-step2}), for general $h$, and finally replace the result in (\ref{TCVS-ziofafa}),
differentiate with respect to $h$ and evaluate the result for $h = 0$.
Using again the result of Appendix \ref{antonia-general-proof},
since the function in (\ref{TCVS-ziofafa}) is differentiable and admits a minimum and a maximum,
we can replace the saddle-point of (\ref{moments-step2}) for $h = 0$, denoted by $\Qm^\star$ and $\tilde{\Qm}^\star$, and then differentiate
the result with respect to to $h$ and let $h \rightarrow 0$.
Noticing that for $h = 0$ the saddle-point condition (\ref{moments-step2}) coincides with the saddle-point condition (\ref{free-energy-step7}),
we have that $\Qm^\star$ and $\tilde{\Qm}^\star$ coincide with what derived in Appendix \ref{proof:x} for the
free energy (\ref{TCVS-free-energy1}). In particular, under the RS assumption, these parameters are given by the fixed-point
equations (\ref{saddle-final-chi}) -- (\ref{saddle-final-eta}). Furthermore, since (\ref{moments-step2}) and therefore the whole limit
(\ref{TCVS-ziofafa}) depends on $h$ only through the log-MGF term, using (\ref{lemma-gv1}) and (\ref{moments-step0})
we arrive at
\begin{eqnarray} \label{moments-step3}
\EE[ b_{0\kappa}^i b_{m\kappa}^j] & = & \lim_{u \rightarrow 0} \left . \frac{\partial}{\partial h} \log M^{(u)}(\tilde{\Qm}^\star;h) \right |_{h=0}   \\
& = & \lim_{u \rightarrow 0} \frac{\EE\left [ B_0^i B_m^j \exp\left ( \underline{\bv}^\herm \underline{\Xm}^\herm \tilde{\Qm}^\star \underline{\Xm} \underline{\bv} \right ) \right ]}
{\EE\left [\exp\left ( \underline{\bv}^\herm \underline{\Xm}^\herm \tilde{\Qm}^\star \underline{\Xm} \underline{\bv} \right ) \right ] }.
\end{eqnarray}
The denominator of (\ref{moments-step3}) is identical to the MGF $M^{(u)}(\tilde{\Qm}^\star)$ defined in (\ref{MGFV}) and,
as shown in Appendix \ref{proof:x} we have that  $\lim_{u \rightarrow 0} M^{(u)}(\tilde{\Qm}^\star) = 1$.
As for the numerator, we follow steps similar to the derivation of (\ref{MGFV4}) and obtain
\begin{eqnarray} 
\EE[ b_{0\kappa}^i b_{m\kappa}^j]
& = & \EE \Big [ \int \frac{\eta}{\pi}  B_0^i   \exp\left ( - \eta |z - X_0B_0|^2  \right ) \cdot  B_m^j \exp \left ( - \xi  |z  - X_mB_m|^2 \right ) \cdot   \\
& & \exp \left ( - \xi \sum_{a \neq m }^u |z  - X_aB_a|^2 \right )  \; e^{\xi |z|^2 u}   dz \Big ] \Big |_{u \downarrow 0} \\
& = & \EE \Big [ \int \frac{\eta}{\pi}  B_0^i  \exp\left ( - \eta |z - X_0B_0|^2  \right ) \cdot B_m^j \exp \left ( - \xi  |z  - X_mB_m |^2 \right ) \cdot   \\
& & \EE \left [ \exp \left ( - \xi |z  - XB|^2 \right ) \right ]^{u-1}  \; e^{\xi |z|^2 u}   dz \Big ] \Big |_{u \downarrow 0}  \\
& = & \EE \left  [ \int \frac{\eta}{\pi}  B^i_0  \exp\left ( - \eta |z - X_0B_0|^2  \right ) \cdot
\EE \left [ \frac{B^j \exp \left ( - \xi  |z  - XB|^2 \right )}{ \EE \left [ \exp \left ( - \xi |z  - XB|^2 \right ) \right ]} \right ] dz \right ],    \\
& &  \label{moments-step4}
\end{eqnarray}
where we define $X \sim X_a$ for $a = 0, \ldots, u$, $B \sim B_a$ for $a = 1,\ldots, u$, and
where the parameters $\eta$ and $\xi$ are given by the fixed-point equations (\ref{saddle-final-chi}) -- (\ref{saddle-final-eta}).

Finally, we define a single-letter joint probability distribution and restate the expectations
appearing in (\ref{moments-step4}) in terms of this new single-letter model.
We let
\begin{eqnarray} \label{decoupled-p-moments}
p_{Y|B_0; \eta}(y|b_0) = \frac{\eta}{\pi} \int \exp \left ( - \eta |y - x b_0|^2 \right ) p_X(x) dx
\end{eqnarray}
denote the transition probability density of the complex circularly symmetric AWGN channel with Gaussian circularly symmetric fading
not known at the receiver,
\begin{eqnarray} \label{decoupled-channel1}
Y = X_0 B_0 + \eta^{-\frac12} Z,
\end{eqnarray}
with $Z \sim \Cc\Nc(0,1)$ and $X_0 \sim p_X(x)$. Also, we define the conditional pdf
\begin{eqnarray} \label{decoupled-q2}
q_{Y|B; \xi}(y|b) = \frac{\xi}{\pi} \int \exp \left ( - \xi |y - xb|^2 \right ) p_X(x) dx
\end{eqnarray}
and, using Bayes rule,  consider the a-posteriori probability distribution
\begin{eqnarray} \label{decoupled-posterior-q1}
q_{B|Y; \xi} (b | y) & = &
\frac{q_{Y|b; \xi}(y|b) q_B(b)}{\sum_{b'} q_{Y|b'; \xi}(y|b) q_B(b')}   \\
& = & \frac{\int \exp\left ( - \xi |y - xb|^2\right )  q_B(b) p_X(x) dx}{\EE\left [ \exp\left ( - \xi |y - XB|^2\right )  \right ]}.
\end{eqnarray}
The joint single-letter probability distribution of interest for the variables $B_0, Y$ and $B$ is given by
\begin{eqnarray} \label{decoupled-joint-moments}
p_{B_0,Y,B; \eta,\xi} (b_0, y, b; \eta,\xi) = p_{B_0}(b_0) \ p_{Y|B_0; \eta}(y|b_0) \ q_{B|Y; \xi}(b|y).
\end{eqnarray}
With these definitions,  it is immediate to identify the moment expression (\ref{moments-step4}) as the joint moment of the single-letter
probability distribution (\ref{decoupled-joint-moments}), by writing:
\begin{eqnarray} 
\EE[ B_{0\kappa}^i B_{m\kappa}^j]
& = & \EE \Big [ \int \frac{\eta}{\pi}  B^i_0  \exp\left ( - \eta |z - X_0B_0|^2  \right ) \cdot
\EE \left [ \frac{B^j \exp \left ( - \xi  |z  - XB|^2 \right )}{ \EE \left [ \exp \left ( - \xi |z  - XB|^2 \right ) \right ]} \right ] dz \Big ]   \\
& = & \int  \EE \left [ B^i_0  \int \frac{\eta}{\pi} \exp\left ( - \eta |z - x_0 B_0|^2  \right ) p_X(x_0) dx_0 \right ] \cdot   \\
& &  \EE \left [ \frac{B^j \int \exp \left ( - \xi  |z  - xB|^2 \right ) p_X(x) dx }{ \EE \left [ \exp \left ( - \xi |z  - XB|^2 \right ) \right ]} \right ] dz   \\
& = & \int  \left ( \sum_{b_0} b^i_0  p_{Y|B_0;\eta}(z|b_0) p_{B_0}(b_0)  \right ) \cdot  \left ( \sum_{b} b^j q_{B|Y; \xi}(b|z) \right ) dz  \\
& = &   \sum_{b_0} \sum_b \int  b^i_0  b^j  p_{B_0}(b_0)  p_{Y|B_0;\eta}(z|b_0)  q_{B|Y; \xi}(b|z)  dz  \\
& = &   \EE [ B_0^i B^j]   \label{moments-step5} 
\end{eqnarray}
where the expectation in (\ref{moments-step5})  the last line is with respect to the probability distribution (\ref{decoupled-joint-moments}).

Summarizing, we have that as far as the joint probability distribution of each component of $\bv$ in (\ref{model1}) and the
corresponding component of the PME $\widehat{\bv}$ (matched or
mismatched), the system decouples asymptotically into a bank of
``parallel'' AWGN channels of the form  (\ref{decoupled-channel1}), with symbol-by-symbol PME given by
\begin{eqnarray} 
\widehat{B} = \EE[B | Y] = \sum_b  \; b \; q_{B|Y; \xi}(b|Y), 
\end{eqnarray}
for $B_0, Y, B$ distributed as in (\ref{decoupled-joint-moments}),
where the parameters $\eta$ and $\xi$ are given by (\ref{saddle-final-chi}) -- (\ref{saddle-final-eta}).

\section{Eigenvalues of the matrix $\Lm$} \label{eigenvalues-L}

The eigenvalues of $\Lm$ are readily computed
from (\ref{LL}). Notice that the matrix $\left ( \Id   -
\frac{1}{\gamma^{-1}+u} \onev \onev^\dagger  \right )$ has eigenvalues
\begin{eqnarray} \nu_1 = \frac{1}{1+u\gamma} \end{eqnarray}
corresponding to the (normalized) eigenvector
$\frac{1}{\sqrt{u}}\onev$, and $\nu_2 = \cdots = \nu_u = 1$,
corresponding to eigenvectors $\ev_2, \ldots, \ev_u$ forming an
orthonormal basis of the orthogonal complement of Span$\{\onev\}$ in
$\CC^u$. It follows that
\begin{eqnarray} \Lm = \frac{\gamma}{ n  } \Sm \Em \;  \diag\left (\frac{1}{1 + u \gamma} , 1, \ldots, 1 \right ) \; \Em^\dagger \Sm^\dagger \end{eqnarray}
where
\begin{eqnarray} \Em = \left [ \frac{1}{\sqrt{u}} \onev, \ev_2, \ldots, \ev_u \right ] \end{eqnarray}
The non-zero eigenvalues of $\Lm$ are the same as those of the ``flipped'' matrix
\begin{eqnarray} \diag\left (\sqrt{\frac{1}{1 + u \gamma}} , 1, \ldots, 1 \right ) \;
\Em^\dagger \left ( \frac{\gamma}{n} \Sm^\dagger \Sm \right ) \Em \;
\diag\left (\sqrt{\frac{1}{1+u \gamma}} , 1, \ldots, 1 \right )
\end{eqnarray}
Under the RS assumption, the empirical
correlation matrix of the vectors $\sv_1, \ldots, \sv_u$ takes on the form
\begin{eqnarray} \label{replica-symmetry}
\frac{1}{n} \Sm^\dagger \Sm \rightarrow (\alpha - \beta) \Id + \beta
\onev \onev^\dagger
\end{eqnarray}
Using the orthonormality properties of the columns of $\Em$, we have
\begin{eqnarray} \Em^\dagger \left ( (\alpha - \beta) \Id + \beta \onev \onev^\dagger \right ) \Em = \diag \left (
\alpha + (u-1)\beta, \alpha - \beta, \ldots, \alpha - \beta \right ) \end{eqnarray}
Finally, we have that under the RS assumption and
in the limit of large $n$ the eigenvalues of $\Lm$ are given by
\begin{eqnarray} \label{eigv-L}
\lambda_1 & = & \frac{\alpha + (u-1)\beta}{\gamma^{-1}+u}   \\
\lambda_a & = & \gamma(\alpha - \beta), \;\;\;\; \mbox{for} \;\; a =
2,\ldots, u
\end{eqnarray}
Using the fact that $\sv_a = \Xm_a\bv_a - \Xm_0 \bv_0$, we have that
\begin{eqnarray} \alpha = \epsilon_1 +  \epsilon_0 - 2 \Re \{ \vartheta \}, \;\;\;\;
\beta = \omega + \epsilon_0 - 2 \Re \{ \vartheta \} \end{eqnarray}
Therefore, the eigenvalues (\ref{eigv-L}) can be expressed in terms
of the correlations $\epsilon_0, \epsilon_1, \vartheta, \omega$ in the form (\ref{eigenvalues}).

\section{A property of stationary points of multivariate functions} \label{antonia-general-proof}

Let $f(\tv, \vv, \theta)$ be a differentiable multivariate function
with $\tv \in \mathbb{C}^N$, $\vv \in \mathbb{C}^L$ and $\theta \in
\mathbb{R}$. Let $t_n$ with $n=1, \ldots, N$, $v_\ell$ with $\ell=1,
\ldots, L$ denote the $n$-th and $\ell$-th component of $\tv$ and
$\vv$ respectively. We are interested in evaluating:
$$\frac{d}{d \theta} \inf_{\tv} \sup_{\vv} f(\tv, \vv, \theta)|_{\theta=0}
$$

Let:
$$
[\tv^*(\theta), \vv^*(\theta)] = \arg \inf_{\tv} \sup_{\vv} f(\tv,
\vv, \theta)
$$

then:
\begin{eqnarray}
\frac{d}{d \theta} \inf_{\tv} \sup_{\tv} f(\tv, \vv, \theta) &=&
\frac{d}{d \theta} f(\tv^*(\theta), \vv^*(\theta), \theta) \\ &=&
\sum_{n=1}^N \dot{f}_{t_n}(\tv^*, \vv^*, \theta) \frac{d}{d\theta}
t_{n}^*((\theta)) \\ && + \sum_{\ell=1}^L \dot{f}_{v_{\ell}}(\tv^*,
\vv^*, \theta) \frac{d}{d\theta} v_{\ell}^*((\theta)) +
\dot{f}_{\theta}(\tv^*, \vv^*, \theta)\label{dinnertime}
\end{eqnarray}
with
\begin{eqnarray}
\dot{f}_{t_n}(\tv^*, \vv^*, \theta)& =&  \frac{\partial}{\partial
t_n}
{f}(\tv, \vv, \theta)|_{\tv=\tv^*, \vv=\vv^*}, \\
\dot{f}_{v_n}(\tv^*(\theta), \vv^*(\theta), \theta) &=&
\frac{\partial}{\partial v_n}
{f}(\tv, \vv, \theta)|_{\tv=\tv^*,  \vv=\vv^*},\\
\dot{f}_{\theta}(\tv^*(\theta), \vv^*(\theta), \theta) &=&
\frac{\partial}{\partial \theta} {f}(\tv, \vv, \theta)|_{\tv=\tv^*,
\vv=\vv^*}.
\end{eqnarray}

Under the assumption that the supremum and the infimum are achieved
by $f(\tv, \vv, \theta)$, by Fermat's theorem every local extremum
of a differentiable function is a stationary point hence by their
definition $\tv^*(\theta), \vv^*(\theta)$ are such that for all
$\theta$
\begin{eqnarray}
\dot{f}_{t_n}(\tv^*, \vv^*, \theta)& =&  0, \\
\dot{f}_{v_n}(\tv^*, \vv^*, \theta) &=& 0.
\end{eqnarray}

Hence (\ref{dinnertime}) becomes:
\begin{eqnarray}
\frac{d}{d \theta} \inf \sup f(\tv, \vv, \theta) &=&
\dot{f}_{\theta}(\tv^*(\theta), \vv^*(\theta),
\theta)\label{dinnertime1}
\end{eqnarray}
Consequently
\begin{eqnarray}
\frac{d}{d \theta} \inf \sup f(\tv, \vv, \theta)|_{\theta=0} &=&
\dot{f}_{\theta}(\tv^*(0), \vv^*(0),0) \label{dinnertime2}
\end{eqnarray}
from which it follows that we are allowed to compute the
saddle-point (and hence the fixed-point equation) for $\theta = 0$,
then replace the result in the multivariate function, and
differentiate the result w.r.t. to $\theta$ and  then let $\theta =
0$.

\section{Useful formulas} \label{app:formulas}

This Appendix is devoted to provide methods and explicit formulas to evaluate the quantities appearing 
in the main results. It is worthwhile to notice that the numerical evaluation of the fixed-point equations and
the corresponding free energy is not completely trivial from a numerical stability viewpoint, 
especially for large signal-to-noise ratio $q\Pc_x$ and small sparsity $q$ and sampling rate $p$. Therefore, 
some care must be dedicated to avoid as much as possible brute-force numerical integration. 

We start by considering the calculation of $I(V_0; \sqrt{a} V_0 + Z)$, for $V_0 = X_0 B_0$ Bernoulli-Gaussian, 
and $Z \sim \Cc\Nc(0,1)$, which is instrumental in evaluating (\ref{1l}) and the bounds (\ref{minfo-bound3}) and (\ref{I1LB}), 
for suitable choices of the parameter $a > 0$. We can write
\begin{eqnarray} 
I(V_0; \sqrt{a} V_0 + Z) & = & h(\sqrt{a} V_0 + Z) - h(Z) \nonumber \\
& = & - \EE \left [\log \left ( \frac{q}{1+a\Pc_x} e^{-|Y|^2/(1+a\Pc_x)} + (1 - q) e^{-|Y|^2} \right ) \right ] - \log e\label{IV0}
\end{eqnarray}
where $Y = \sqrt{a} V_0 + Z$. The expectation in (\ref{IV0}), can be calculated by integration in polar coordinates
and, after some algebra, takes on the form
\begin{align}
& q \int_0^\infty \log \left ( \frac{q}{1+a\Pc_x} e^{-r} + (1 - q) e^{-(1+a\Pc_x) r} \right ) e^{-r} dr \nonumber \\
& + (1 - q) \int_0^\infty \log \left ( \frac{q}{1+a\Pc_x} e^{-r/(1+a\Pc_x)} + (1 - q) e^{-r} \right ) e^{-r} dr. 
\end{align}
Finally, both the above integrals can be efficiently and accurately evaluated by using Gauss-Laguerre quadratures. 

Similarly, the MMSE term appearing in (\ref{e:fix-pointeq}) can be calculated as
follows. Letting $Y = V_0 + \eta^{-\frac{1}{2}} Z$,  we have
\[ \EE[V_0|Y] = 
G(|Y|^2;\eta,q,\Pc_x) \frac{\Pc_x \eta}{1 + \Pc_x\eta} Y, \]
where 
\[ G(z;\eta, q, \Pc_x) = \frac{\frac{q}{1+\Pc_x\eta} \exp(-\mu z)}{\frac{q}{1+\Pc_x\eta} \exp(-\mu z) + (1 - q) \exp(-\eta z)} \]
and where $\mu = \eta/(1 + \Pc_x\eta)$.  Notice that for $q = 1$ the observation model becomes jointly Gaussian, and we obtain the usual Gaussian MMSE estimator  $\EE[V_0|Y] = \frac{\Pc_x \eta}{1 + \Pc_x\eta} Y$. 
The  resulting MMSE in the general Bernoulli-Gaussian case is given by 
\begin{eqnarray*}
{\sf mmse} \left ( V_0  | V_0 + \eta^{-\frac12}  Z \right ) & = & \EE\left [|V_0|^2 \right ] - \EE \left [ | \EE[V_0|Y] |^2 \right ] \\
& = & q\Pc_x - \left ( \frac{\Pc_x\eta}{1 + \Pc_x\eta} \right )^2 \EE \left [ G(|Y|^2;\eta,q,\Pc_x)^2  |Y|^2 \right ].
\end{eqnarray*}
Performing integration in polar coordinates and after some algebra we obtain
\[ {\sf mmse} \left ( V_0  | V_0 + \eta^{-\frac12}  Z \right ) = q \left [ \Pc_x - \frac{1}{\eta(1 + \Pc_x\eta)} \Phi \left ( -  (1 + \Pc_x\eta)
\frac{1 - q}{q}, 2, \frac{1}{\Pc_x \eta} \right ) \right ] \]
where $\Phi(a,s,z)$ is known as the Hurwitz-Lerch zeta function \cite{abramowitz}, 
defined as
\[ \Phi(z,s,a) = \frac{1}{\Gamma(s)} \int_0^\infty \frac{t^{s-1} e^{-at}}{1 - z e^{-t}} dt, \]
that can also be efficiently evaluated by Gauss-Laguerre quadratures.  It is immediate to check that for $q = 1$ (jointly Gaussian case) we have 
\[ {\sf mmse} \left ( V_0  | V_0 + \eta^{-\frac12}  Z \right ) = \frac{\Pc_x}{1 + \Pc_x \eta}, \]
as expected. 

In order to evaluate $\Ic_1$ in (\ref{1l}) it is useful to have the integral of the R-transform $\Rc_{\Rm}(-w)$ in closed form. 
For the case of $\Um$ with iid elements, using (\ref{R-transfromGV}) we find, trivially,  
\begin{eqnarray} 
\int_0^\chi \Rc_{\Rm}(-w) dw = p \log(1 + \chi). 
\end{eqnarray}
For the case of Haar-distributed $\Um$, using (\ref{cool2}), we find
\begin{eqnarray} 
\int_0^\chi \Rc_{\Rm}(-w) dw & = & 
\frac{1}{2} (1+\chi - \rho -2p \log(2(1 - p)) + \log(1-p) \nonumber \\
& & - (1 - 2p) \log(1+\chi - 2p+\rho)+ \log(1+\chi (1  - 2p)+\rho)),
\end{eqnarray}
where $\rho = \sqrt{(1+\chi)^2 - 4\chi p}$. 

We conclude by providing the derivation of the closed-form expression of $\EE \left [ |V_0 - \widehat{v}(Y;\xi) |^2 \right ]$ for the Lasso estimator, 
given in (\ref{mse-lasso-closed-form}). 
We have
\begin{eqnarray}  \label{mse-lasso-closed-form-step1}
\EE \left [ |V_0 - \widehat{v}(Y;\xi) |^2 \right ] & = & q\Pc_x  + \EE[|\widehat{v}(Y;\xi) |^2]  - 2 \Re\left \{ \EE[ V_0^* \widehat{v}(Y;\xi)] \right \}. 
\end{eqnarray}
Recalling the expression of $\widehat{v}(Y;\xi)$ in (\ref{scalar-map-lasso}), we have
\begin{eqnarray}
\EE[|\widehat{v}(Y;\xi) |^2] 
& = & \EE\left [ \left | \left [|Y| - \frac{1}{2\xi} \right ]_+ \frac{Y}{|Y|} \right |^2  \right ] \nonumber \\
& = &\int_{|y| > 1/(2\xi)}  \left ( |y| - \frac{1}{2\xi} \right )^2 p_Y(y) dy \nonumber \\
& = &\int_{1/(2\xi)}^\infty  \left ( r - \frac{1}{2\xi} \right )^2  \left [ q \mu e^{-\mu r^2} + (1 - q) \eta e^{-\eta r^2} \right ] 2 r dr  \nonumber \\
& = &2 q \mu \int_{1/(2\xi)}^\infty  \left [ r^3 - r^2/\xi  + r/(4\xi^2) \right ] e^{-\mu r^2} dr \nonumber \\
& & + 2(1 - q) \eta  \int_{1/(2\xi)}^\infty  \left [ r^3 - r^2/\xi  + r/(4\xi^2) \right ] e^{-\eta r^2} dr. \label{zio-lasso}
\end{eqnarray}
In order to solve the integrals in (\ref{zio-lasso}) we use
\begin{eqnarray}
\int_b^\infty 2a x e^{-ax^2} dx & = & e^{-ab^2}  \label{int1} \\
\int_b^\infty 2a x^2 e^{-ax^2} dx & = & b e^{-ab^2} + \frac{\sqrt{\pi} {\rm erfc}(\sqrt{a} b)}{2 \sqrt{a}} \label{int2} \\
\int_b^\infty 2a x^3 e^{-ax^2} dx & = & \frac{(1+ab^2) e^{-ab^2}}{a}. \label{int3} 
\end{eqnarray}
By applying the above integrals  in (\ref{zio-lasso}) and after some manipulation, we obtain
\begin{eqnarray}
\EE[|\widehat{v}(Y;\xi) |^2] & = & 
\frac{q}{\mu} \left [ e^{-\mu'}  -  \sqrt{\pi \mu'} {\rm erfc}(\sqrt{\mu'}) \right ] 
 +  \frac{(1 - q)}{\eta} \left [ e^{-\eta'}  - \sqrt{\pi \eta'} {\rm erfc}(\sqrt{\eta'}) \right ] 
\end{eqnarray}
with $\mu' = \mu/(4\xi^2)$ and $\eta' = \eta/(4\xi^2)$. 

Next, we calculate the expectation $\EE[ V_0^* \widehat{v}(Y;\xi)]$ as follows:
\begin{eqnarray}
\EE[ V_0^* \widehat{v}(Y;\xi)]  & = & \EE \left [ V_0^* \left [|Y| - \frac{1}{2\xi} \right ]_+ \frac{Y}{|Y|} \right ] \nonumber \\
& = & q \EE \left [ \left . X_0^* \left [|Y| - \frac{1}{2\xi} \right ]_+ \frac{Y}{|Y|} \right | B_0 = 1 \right ]. \label{ziobellissimo}
\end{eqnarray}
We notice that $(X_0, Y)$ given $B_0 = 1$ are jointly Gaussian, with mean zero and covariance matrix
\[ {\rm Cov}(X_0,Y) = \left [ \begin{array}{cc}
\Pc_x & \Pc_x \\
\Pc_x & \Pc_x + 1/\eta \end{array} \right ]. \]
Then, $X_0$ given $Y$ is Gaussian with mean 
\[ \EE[X_0 |Y] = \frac{\Pc_x}{\Pc_x + 1/\eta} Y \]
and variance
\[ {\rm Var}(X_0|Y) = \Pc_x - \frac{\Pc_x^2}{\Pc_x + 1/\eta}. \]
Using iterated expectation, we can calculate the expectation in  (\ref{ziobellissimo}) as
\begin{eqnarray} 
\EE \left [ \left . X_0 \left [|Y| - \frac{1}{2\xi} \right ]_+ \frac{Y^*}{|Y|} \right | B_0 = 1 \right ] 
& = & \EE \left [ \left . \EE[X_0 | Y, B_0 = 1]  \left [|Y| - \frac{1}{2\xi} \right ]_+ \frac{Y^*}{|Y|} \right | B_0 = 1 \right ] \nonumber \\
& = & \frac{\Pc_x}{\Pc_x + 1/\eta} \EE \left [ \left . \left [|Y| - \frac{1}{2\xi} \right ]_+ |Y| \right | B_0 = 1 \right ] \nonumber \\
& = & \frac{\Pc_x}{\Pc_x + 1/\eta} \int_{|y| > 1/(2\xi)}  \left (|y| - \frac{1}{2\xi} \right ) |y| p_{Y|B_0=1}(y) dy \nonumber \\
& = & \frac{\Pc_x}{\Pc_x + 1/\eta} \int_{1/(2\xi)}^\infty 2 \mu \left (r - \frac{1}{2\xi} \right ) r^2 e^{-\mu r^2}  dr.
\end{eqnarray}
Using again the integrals (\ref{int1}) -- (\ref{int3}), we obtain
\begin{eqnarray}
\int_{1/(2\xi)}^\infty 2 \mu \left (r - \frac{1}{2\xi} \right ) r^2 e^{-\mu r^2}  dr 
= \frac{1}{\mu} \left [  e^{-\mu'} - \frac{1}{2} \sqrt{\pi \mu'} {\rm erfc}(\sqrt{\mu'}) \right ] .   
\end{eqnarray} 
Finally, replacing all terms in (\ref{mse-lasso-closed-form-step1}), after some simplifications, 
we obtain (\ref{mse-lasso-closed-form}).

\newpage

\end{document}